\documentclass[11pt]{article}
\usepackage{amssymb}
\usepackage{microtype}
\usepackage{caption}
\usepackage{graphicx,subfig}
\usepackage{amsmath}
\usepackage{float}
\usepackage[letterpaper,margin=1in]{geometry}   
\usepackage{xcolor} 
\usepackage{amsthm}
\usepackage{algorithmic}
\usepackage{algorithm}
\usepackage{xurl}
\usepackage[colorlinks,
linkcolor=blue,
anchorcolor=blue,
citecolor=blue]{hyperref}
\usepackage[numbers]{natbib}
\usepackage{enumitem}
\usepackage{makecell}

\newtheorem{theorem}{Theorem}
\newtheorem{lemma}{Lemma}

\newtheorem{assumption}{Assumption}

\newcommand{\red}[1]{#1}
\newcommand{\blue}[1]{#1}
\newcommand\fro[1]{\| #1 \|_{\rm{F}}}
\newcommand\op[1]{\| #1 \|}
\newcommand\nuc[1]{\| #1 \|_{*}}

\newcommand\ltwo[1]{\| #1 \|_{\ell_2}}
\newcommand\linf[1]{\| #1 \|_{\ell_{\infty}}}

\newcommand{\inp}[2]{\langle #1,#2\rangle}
\newcommand{\psitwo}[1]{\big\| #1 \big\|_{\psi_2}}
\newcommand\twoinf[1]{\| #1 \|_{2,\infty}}

\def\calI{{\mathcal I}}

\def\calL{{\mathcal L}}
\def\calM{{\mathcal M}}

\def\calP{{\mathcal P}}

\def\calR{{\mathcal R}}

\def\calY{{\mathcal Y}}
\def\calZ{{\mathcal Z}}

\def\bcalC{{\boldsymbol{\mathcal C}}}

\def\bcalE{{\boldsymbol{\mathcal E}}}

\def\bcalT{{\boldsymbol{\mathcal T}}}

\def\bcalX{{\boldsymbol{\mathcal X}}}

\def\EE{{\mathbb E}}

\def\KK{{\mathbb K}}

\def\MM{{\mathbb M}}

\def\OO{{\mathbb O}}
\def\PP{{\mathbb P}}

\def\RR{{\mathbb R}}

\def\TT{{\mathbb T}}

\def\a{{\boldsymbol a}}

\def\e{{\boldsymbol e}}

\def\r{{\boldsymbol r}}

\def\u{{\boldsymbol u}}
\def\v{{\boldsymbol v}}

\def\x{{\boldsymbol x}}
\def\y{{\boldsymbol y}}

\def\A{{\boldsymbol A}}
\def\B{{\boldsymbol B}}
\def\C{{\boldsymbol C}}
\def\D{{\boldsymbol D}}
\def\E{{\boldsymbol E}}

\def\G{{\boldsymbol G}}

\def\I{{\boldsymbol I}}

\def\L{{\boldsymbol L}}
\def\M{{\boldsymbol M}}
\def\N{{\boldsymbol N}}
\def\O{{\boldsymbol O}}
\def\P{{\boldsymbol P}}

\def\R{{\boldsymbol R}}
\def\S{{\boldsymbol S}}

\def\U{{\boldsymbol U}}
\def\V{{\boldsymbol V}}
\def\W{{\boldsymbol W}}
\def\X{{\boldsymbol X}}
\def\Y{{\boldsymbol Y}}
\def\Z{{\boldsymbol Z}}

\def\sfC{\mathsf C}

\def\sfR{\mathsf R}

\def\bdelta{\boldsymbol{\delta}}
\def\beps{\boldsymbol{\epsilon}}
\def\bSigma{\boldsymbol{\Sigma}}

\def\rank{\textsf{rank}}
\def\svd{\textsf{SVD}}

\def\var{\textsf{Var}}
\def\hat{\widehat}
\def\tilde{\widetilde}
\def\trim{\textsf{Trim}}
\def\trunc{\textsf{Trunc}}

\def\incoh{\textsf{Incoh}}

\def\approach{\textsf{TCSI }}
\hypersetup{
  pdftitle={Tensor Completion using Subspace Information},
  pdfauthor={Jingyang Li and Michael K. Ng},
  pdfkeywords={Low-rank tensor completion, Side information, Tucker decomposition, Non-convex optimization, High-dimensional statistics},
  bookmarksnumbered=true
}
\title{\Large\bfseries Tensor Completion using Subspace Information}
\author{Jingyang Li$^{1}$ \qquad Michael K. Ng$^{2}$\thanks{M. Ng's research is supported by the
GDSTC: Guangdong and Hong Kong Universities ``1+1+1'' Joint Research Collaboration Scheme UICR0800008-24,
National Key Research and Development Program of China under Grant 2024YFE0202900,
RGC GRF 12300125.}\\[0.6em]
{\small $^{1}$Department of Statistics and Data Science, Fudan University}\\
{\small $^{2}$Department of Mathematics, Hong Kong Baptist University}}
\date{}

\begin{document}
\maketitle

\begin{abstract}
		Tensor completion has attracted significant attention in both applications and theoretical research. 
		Under standard uniform sampling, existing polynomial-time guarantees generally require more observations than the number of degree of freedom, motivating the study of a possible statistical-to-computational gap in highly missing regimes.
		Fortunately, in many practical scenarios, side information is available, which can provide valuable insights to mitigate these challenges.
		In this paper, we introduce an algorithm called \underline Tensor \underline Completion using \underline Subspace \underline Information (\textsf{TCSI}) that incorporates side information through an estimated subspace. Our approach first extracts the subspace from the available side information and then reformulates tensor completion as a matrix regression problem. \red{We provide a theoretical analysis showing that, when accurate subspace information is available, the required sample complexity is reduced to nearly linear order in the uncoupled ambient dimensions, removing the coupled-mode dimension from the leading term. 
		Leveraging the estimated subspace information, we obtain a less stringent sufficient signal-to-noise ratio requirement than those in several existing passive-uniform-sampling guarantees. Under additional mild conditions, we obtain a sharper statistical error bound.} 
		Our theoretical findings are supported by numerical simulations.
		\red{We apply \approach to the reconstruction of global Total Electron Content (TEC) maps and observe lower reconstruction errors than the compared methods in our experiments.}
	\end{abstract}
	
	\noindent%
	{\it Keywords:} Low-rank tensor completion, Side information, Tucker decomposition, Non-convex optimization, High-dimensional statistics

	\section{Introduction}
	Low-rank tensor completion aims to fill in missing entries of a tensor, where only a small portion of its entries is known.
	It has found wide applications in various domains, including imaging and computer vision \citep{li2010tensor,liu2012tensor,bengua2017efficient}, seismic data analysis \citep{kreimer2013tensor}, space weather data analysis \citep{sun2023complete}, recommendation system \citep{karatzoglou2010multiverse,ibriga2022covariate}, and more.
	At first glance, tensor completion is a natural generalization from the well-explored low-rank matrix completion \citep{candes2010power,recht2011simpler}. However, tensor completion presents additional statistical and computational challenges.
	For instance, the nuclear norm serves as a convex relaxation of the rank function, and is widely used in matrix completion. In contrast, the convex relaxation of tensor rank leads to the tensor nuclear norm, which poses computational challenges as it is NP-hard to compute \citep{hillar2013most,yuan2016tensor}.
	Another challenge in tensor completion is the discrepancy between the degree of freedom of the low-rank tensor and the sample size required by polynomial-time algorithms. Ideally, one would expect the sample complexity at the level of degree of freedom to be sufficient. 
	\red{Existing polynomial-time guarantees for methods such as gradient descent \citep{cai2019nonconvex}, power iteration \citep{xia2021statistically}, scaled gradient descent \citep{tong2022scaling}, and Riemannian gradient descent \citep{cai2022provable} generally require sample sizes above the degree-of-freedom scale under uniform random sampling.} 
	\red{Estimators based on tensor norms provide important statistical benchmarks \cite{barak2016noisy,ghadermarzy2019near}. Specifically, Ghadermarzy et al. \cite{ghadermarzy2019near} showed that M-norm and max-qnorm-constrained estimators attain optimal dependence on the ambient dimension and nearly minimax-rate-optimal error bounds under their model, while noting that the rank dependence need not be optimal. They also noted that no polynomial-time method is known for the M-norm-constrained estimator.}
	
	\red{Taken together, these results motivate the study of the trade-off between statistical and computational requirements under passive uniform sampling.} 
	\red{A complementary line of work changes the observation model through structured or adaptive sampling. Zhang \citep{zhang2019cross} proposed CROSS, which uses a designed cross-shaped pattern consisting of a dense core and sparse arms and attains the degree-of-freedom sample complexity for Tucker-rank tensors. Haselby et al. \citep{haselby2023tensor} proposed Tensor Sandwich, which adaptively selects entries and attains near-linear sample complexity for low-CP-rank tensors. These methods are important benchmarks: they show that alternative sampling designs can avoid the ambient-dimensional bottleneck encountered under passive uniform sampling. CROSS is particularly suitable when the observation pattern can be designed, whereas Tensor Sandwich applies when observations can be selected adaptively.} 
	
	\red{Our work addresses a complementary setting in which the observed entries are sampled passively and uniformly, while external subspace information is available. Under uniform sampling, \cite{barak2016noisy} provides conditional evidence of a computational barrier at information-theoretic sample sizes. Related barriers have also been studied in tensor regression \citep{shen2022computationally}, tensor PCA \citep{zhang2018tensor}, and tensor clustering \citep{luo2022tensor}.}
	
	In practice, it is quite common that the missing percentage of the tensor is very high and the sample size condition is thus violated. 
	Fortunately, with the ability to access massive amounts of data as a result of recent technological advances, we can get access to data from multiple sources. 
	For example, in a restaurant recommendation system, in addition to the ratings users make for certain type of restaurant of a specific meal (i.e., breakfast, lunch, dinner), friendship networks of these users are also available. 
	In this case, we have a rating tensor (user $\times$ category $\times$ meal) together with a friendship network (user $\times$ user). 
	It is broadly assumed these datasets share the same subspace along the user dimension. 
	And this property is encoded in the loss function by using the same subspace representation for the coupled mode \citep{acar2011all,acar2013understanding}. 
	Later \cite{zhou2017tensor} proposed a Riemannian conjugate gradient descent algorithm to solve the tensor completion problem in the presence of side information.
Wimalawarne et al. \cite{wimalawarne2018convex} proposed convex
low-rank-induced norms for coupled matrix--tensor data based on mode
unfoldings.
	\red{Much of the earlier literature emphasizes algorithm design, while theoretical guarantees under explicit sampling models are less common. Yu and Xi \cite{yu2022tensor} showed that, for low orthogonal CP-rank tensors, one weight vector for each mode that is not orthogonal to any latent factor along that mode is sufficient for a consistent estimator using $O(d^{1+\varepsilon})$ uniformly sampled observations for any fixed $\varepsilon>0$, with extensions beyond the orthogonal setting.} 
	\red{Ibriga and Sun \cite{ibriga2022covariate} proposed the COSTCO alternating-minimization method, which jointly uses tensor and covariate information and establishes improved estimation for the component associated with the shared mode under mild conditions.} 
	These results demonstrate the potential of leveraging side information to enhance tensor completion performance in real-world applications.

	\subsection{Our Contributions}
	\red{We study noisy Tucker-rank tensor completion under passive uniform sampling when an approximate subspace estimate is available along one coupled mode.} Without loss of generality, we assume the low-rank tensor $\bcalT^*\in\RR^{d_1\times d_2\times d_3}$ has Tucker rank $\r = (r_1,r_2,r_3)$ and the side information is coupled along the first mode. The algorithmic construction can also be extended to settings in which side information is coupled along more than one mode. 
	Our approach consists of two parts: first, we extract the subspace $\U_s\in\RR^{d_1\times r_1}$ along the coupled dimension from the side information in the format of either a matrix or a tensor; second, we reformulate the tensor completion problem into a matrix regression problem and use the Riemannian gradient descent algorithm to solve the low-rank matrix regression. We refer to the resulting algorithm as \underline Tensor  \underline Completion using \underline Subspace \underline Information (\textsf{TCSI}).
	
	\red{With moderately accurate subspace information, our bound reduces the dependence of the required sample size on the ambient dimensions. In particular, in the fixed-rank, bounded-incoherence, and bounded-condition-number regime, it requires $\widetilde O(d_2\vee d_3)$ samples\footnote{Here $\widetilde O(\cdot)$ hides logarithmic factors.}. We also use side information in the spectral initialization, so the main theorem covers both the side-information-aided initialization and local convergence without presupposing an externally supplied warm start for the reduced matrix-regression problem. Under the stated assumptions, the initialization and local convergence guarantees have reduced ambient-dimensional sample and signal-to-noise requirements, and when the side information is sufficiently accurate, the statistical-error bound is also smaller in regimes with $d_1\gg d_2+d_3$.}
	
	\red{These results complement existing theoretical guarantees for tensor completion with side information. Yu and Xi \cite{yu2022tensor} obtain nearly linear sample complexity using one weak weight vector for every mode, primarily under a low orthogonal CP-rank model; \approach instead studies noisy Tucker-rank tensors with an approximate subspace along only one mode. COSTCO \cite{ibriga2022covariate} studies a coupled CP model with a covariate matrix and establishes improved estimation for the component associated with the shared mode. Under bounded rank, incoherence, condition numbers, and tensor-to-covariate signal ratios, its sample-size requirement is $\widetilde O(d^{3/2})$ in the balanced setting. Our analysis provides a unified guarantee for side-information-aided initialization, local convergence, and overall tensor error in the one-mode subspace setting.}
	
	\red{Several representative non-unfolding guarantees without side information summarized in Tables \ref{table:literature} and \ref{table:literature:nonTucker} contain the ambient-dimensional term $\widetilde O(\max\{d_1,d_2,d_3\}+\sqrt{d_1d_2d_3})$ under their respective assumptions. This comparison quantifies the benefit of the additional subspace information rather than an assumption-free dominance. The improvement concerns the dependence on the ambient dimensions and comes with a less favorable dependence on rank. Assuming $d_1=d_2=d_3=d$ and $r_1=r_2=r_3=r$, our theorem gives $\widetilde O_{\mu,\kappa}(dr^8)$,\footnote{The subscript $\mu,\kappa$ indicates that the hidden constant may depend on the incoherence parameter $\mu$ and the condition number $\kappa$.} whereas \cite{xia2021statistically} gives $\widetilde O_{\mu,\kappa}(d^{3/2}r^4)$. Thus, the gain is most relevant when $d$ is sufficiently large relative to $r$ and informative side information is available.} 

	\begin{table}[htbp]
		\centering
		{\renewcommand{\arraystretch}{0.85}\renewcommand\theadfont{\scriptsize}
			\begin{tabular}{c||c|c|c|c}
				\hline
				\tiny Method/estimator & \tiny Sufficient sample size & \tiny Noise? & \tiny Side info.? & \tiny Statistical error \\		
				\hline
				\hline
				\thead{Unfolding \\ \citep{huang2015provable}} &\tiny$(d_1d_3\vee d_2) \log^2d$ &No&No& NA\\ 
				\hline
				\thead{Tensor nuclear norm \\ \citep{yuan2016tensor}} &\tiny$\sqrt{d_1d_2d_3}\log^3 d + d\log^{3}d$ &No &No & NA\\
				\hline
				\thead{Grassmannian GD\\ \citep{xia2019polynomial}} &\tiny$\sqrt{d_1d_2d_3}\log^{7/2} d + d\log^{6}d$ &No&No&NA\\
				\hline
				\thead{Scaled GD \\ \citep{tong2022scaling}} & \tiny$\sqrt{d_1d_2d_3}\log^3 d + d\log^{5}d$ &No &No & NA\\
				\hline
				\thead{Riemannian \\optimization\\ \citep{wang2023implicit}} &\tiny$\sqrt{d_1d_2d_3}\log^3 d + d\log^{5}d$&No &No &NA\\
				\hline
				\thead{Power Iteration \\ \citep{xia2021statistically}} & \tiny$\sqrt{d_1d_2d_3}\log^5 d + d\log^{10}d$ & Yes & No & \tiny$\frac{d}{n} \log d\cdot\sigma^2+\frac{d}{n} \log d\cdot\frac{\lambda_{\max}^2}{d^*}$\\
				\hline
				\thead{\approach \\ (This paper)} & \tiny$\boldsymbol{(d_2\vee d_3)\log d}$& Yes &Yes &  \tiny $\frac{(d_2\vee d_3)}{n}\log d\cdot\sigma^2 + \frac{1}{d^*}\lambda_{\max}^2\delta^2$\\
				\hline 
		\end{tabular}}
		\caption{\red{Selected sufficient guarantees for low Tucker-rank tensor completion under passive uniform sampling. \approach assumes accurate subspace information, whereas the other listed methods do not, so the table is not an assumption-free ranking. For \approach, the displayed sample size counts only the observed entries of the target tensor, conditional on an estimated subspace $\U_s$ satisfying the stated accuracy conditions.} We assume $\r,\kappa,\mu,\mu_s\asymp O(1)$, let $d=\max\{d_1,d_2,d_3\}$, and define $\delta=\fro{\U_s\U_s^\top-\U^*\U^{*\top}}$. 
			The \approach row reports the simplified fixed-rank regime in which the third term of Theorem \ref{thm:main} is dominated. 
			\red{When $d_1=d_2=d_3=d$, the listed non-unfolding baselines scale as $\widetilde O(d^{3/2})$, whereas \approach scales as $\widetilde O(d)$ under its additional assumption.}}
		\label{table:literature}
	\end{table}

	\begin{table}[htbp]
		\centering
		\begin{tabular}{c||c|c|c}
			\hline
			Method/estimator & Tensor format & Side info.? & Sample size \\		
			\hline 
			\thead{ Alternating minimization\\ \citep{jain2014provable}} & CP & No & $d^{3/2}\log^4 d$\\
			\hline
			\thead{Sum-of-squares \\ \citep{barak2016noisy}} & CP & No & $d^{3/2}\log^4 d$ \\
			\hline
			\thead{Vanilla gradient descent \\ \citep{cai2019nonconvex}} & CP &No & $d^{3/2}\log^4 d$\\
			\hline
			\thead{Riemannian optimization \\ \citep{cai2022provable}}& Tensor-train &No &$d^{3/2}\log^5 d$ \\
			\hline 
			\thead{Weak-side-information estimator \\ \cite{yu2022tensor}} & \thead{Orthogonal CP \\ (main result)} & Yes & $d^{1+\varepsilon}$\\
			\hline
			\thead{COSTCO \\ \citep{ibriga2022covariate}} &CP&Yes& $d^{3/2}\log^2 d$\\ 
			\hline
		\end{tabular}
		\caption{\red{Selected passive uniform-sampling guarantees for low-rank tensor completion under different tensor formats. Here we assume $\r, \kappa, \mu\asymp O(1)$ and $d_1=d_2=d_3=d$ for ease of presentation. The assumptions, including access to side information, differ across methods. For Yu and Xi, $\varepsilon>0$ is arbitrary and the displayed rate corresponds to the main low orthogonal CP-rank result with one weak weight vector per mode. For COSTCO, the displayed order additionally assumes bounded tensor-to-covariate signal ratios.}}
		\label{table:literature:nonTucker}
	\end{table}
	
	The subspace estimated from the side information is typically biased due to the existence of noise. Although our theoretical analysis does not require the exact subspace, the bias will result in a non-vanishing error in our final estimate of the original tensor. 
	\red{When the subspace is sufficiently accurate, its contribution is dominated by the observation-noise term, and the resulting rate reaches the relevant degree-of-freedom scale of the reduced
		matrix-regression problem up to logarithmic factors under the stated assumptions.} 
	\red{We also give common examples illustrating regimes in which the required side-information accuracy can hold.} 
	\red{When the subspace is only moderately accurate, the final error is instead governed by the quality of the subspace estimate. Thus, \approach is an option for tensor completion when observations are limited and informative side information is available.}

	\subsection{ Organization of the Paper}
	The rest of the paper is organized as follows. Section \ref{sec:notation} introduces the notation used throughout this paper and reviews the basics of Tucker rank. 
	The problem formulation and the details about the side information, and the algorithm are described in Section \ref{sec:formulation}. 
	In Section \ref{sec:theory}, we present the main theorem regarding the convergence of our proposed algorithm. 
	We display the performances on simulations and the real data in Section \ref{sec:simulation}, Section \ref{sec:realdata} respectively. 
	All proofs and technical lemmas are relegated to the appendix. 

	\section{Notations and Preliminaries}\label{sec:notation}
	In this section, we introduce the notations used throughout the paper and review the background of tensors. 
	Throughout this paper, we shall use the bold calligraphic letters (e.g. $\bcalT,\bcalC$) to denote tensors, the bold capital letters (e.g. $\M$) to denote matrices, blackboard bold-face letters (e.g. $\RR,\TT$) for sets, and the lower case bold-face letters ($\x,\y$) to denote vectors.
	The $j$-th canonical basis vector is denoted by $\e_j$, and we omit the ambient space it lies in whenever the context is clear.
	For a positive integer $d$, denote $[d] := \{1,\cdots,d\}$. 
	Let $\fro{\cdot}$ denote the Frobenius norm of tensors or matrices. We use $\|\cdot\|_{\ell_p}$ to denote the $p$ norm of vectors for $0<p\leq \infty$ and $\|\cdot\|_{\ell_0}$ to represent the number of nonzero entries. The notations $C,C_1,\cdots$ are reserved for some positive and absolute constants which do not depend on the related parameters of the problem, but their actual values may change from line to line. We denote $d= \max\{d_1,d_2,d_3\}$, and $d^*=d_1d_2d_3$, $r^*=r_1r_2r_3$. 
	Let $\OO_{d,r} = \{\U\in\RR^{d\times r}: \U^\top\U = \I_r\}$ be the collection of column orthonormal matrices. 
	For a matrix $\M$, we use $\sigma_i(\M)$ to represent the $i$-th largest singular value of $\M$. 
	
	Let $\bcalT\in\RR^{d_1\times d_2\times d_3}$ be a third order tensor. 
	The standard basis in $\RR^{d_1\times d_2\times d_3}$ is denoted by $\{\bcalE_{\omega}:\omega\in[d_1]\times [d_2]\times [d_3]\}$. 
	For $\omega=(i_1,i_2, i_3)\in[d_1]\times [d_2]\times [d_3]$, we use $[\bcalT]_{\omega}$ or $[\bcalT]_{i_1,i_2,i_3}$ as the entry of $\bcalT$. 
	The $j$-th matricization (also called unfolding) $\calM_j:\RR^{d_1\times d_2\times d_3}\rightarrow \RR^{d_j\times d_j^-}$ with $d_j^- = d^*/d_j$ is a linear reshape mapping so that, for example $[\calM_1(\bcalT)]_{i_1,(i_2-1)d_3+i_3} = [\bcalT]_{i_1,i_2,i_3}$ for all $i_j\in[d_j]$. The collection 
	$$\rank(\bcalT) = \big(\rank(\calM_1(\bcalT)),\rank(\calM_2(\bcalT)),\rank(\calM_3(\bcalT))\big)$$ 
	is called the multi-linear rank or the Tucker rank of $\bcalT$. 
	Given a matrix $\W\in\RR^{p\times d_j}$ for any $j\in[3]$, the multi-linear product, denoted by $\times_j$, between $\bcalT$ and $\W$ is defined by (we take $j=1$ for an example and $j=2,3$ can be similarly defined):
	$$[\bcalT\times_1\W]_{l,i_2,i_3} = \sum_{i_1=1}^{d_1}[\bcalT]_{i_1,i_2,i_3}[\W]_{l,i_1},\quad l\in[p], i_j\in[d_j]. $$
	If $\rank(\bcalT) = (r_1,r_2,r_3)$, then there exists $\U\in\OO_{d_1,r_1}$, $\V\in\OO_{d_2,r_2}$, $\W\in\OO_{d_3,r_3}$, and $\bcalC\in\RR^{r_1\times r_2\times r_3}$, such that 
	$$\bcalT = \bcalC\times_1\U\times_2\V\times_3\W. $$ 
	This is referred to as the Tucker decomposition of the tensor, and it can be obtained using high order singular value decomposition (HOSVD), whose details are described in Algorithm \ref{alg:hosvd} in Appendix \ref{app:alg}.

	\section{Problem Formulation}\label{sec:formulation}
	The goal of tensor completion is to recover an underlying tensor from a small subset of the observed entries. We shall denote $\bcalT^*\in\RR^{d_1\times d_2\times d_3}$ the underlying tensor of Tucker rank $\r = (r_1,r_2,r_3)$. 
	Let its Tucker decomposition be $\bcalT^* = \bcalC^*\times_1 \U^*\times_2\V^*\times_3 \W^*$. 
	We consider a common sampling scheme \citep{jain2014provable,ibriga2022covariate,xia2021statistically,cai2022provable}, which assumes each observation is uniformly and randomly sampled from the original tensor:
	\begin{align}\label{model}
		y_i = [\bcalT^*]_{\omega_i}  +\epsilon_i,\quad  \omega_i\sim \text{Unif}([d_1]\times [d_2]\times [d_3]), \quad i = 1,\cdots, n.
	\end{align}
	And $\epsilon_i$ is the random noise. We remark that this sampling scheme is equivalent to assuming that the tensor entries are missing completely at random.
	In order to recover $\bcalT^*$ from the observations, it is natural to find a low rank tensor that is consistent with the observed entries:
	\begin{align}\label{prob:TC}
		\min_{\bcalT}\frac{1}{2}\sum_{i=1}^n (y_i - [\bcalT]_{\omega_i})^2, \quad \text{s.t.}~\rank(\bcalT)\leq \r. 
	\end{align}
	Here, the inequality $\leq$ between two tuples is defined element-wise. 
	Due to the low Tucker rank constraint, this problem is highly non-convex and is only solvable locally. Once a good initial estimator is provided, a particularly popular class of algorithms is based on the projected gradient descent \citep{jain2010guaranteed,chen2019non}, which consist of gradient update and retraction back to the low rank manifold using HOSVD.
	\red{Projected-gradient methods can require SVDs of full ambient matrices. Riemannian gradient descent reduces this cost by exploiting low-rank tangent-space structure and has been applied to tensor completion \citep{steinlechner2016riemannian,cai2022provable}; it nevertheless remains a nonconvex method requiring a suitable initialization. We discuss this algorithm in more detail below. }
	
	\subsection{Incoherence and Condition Number}
	Tensor completion is extremely hard, for example, if $\bcalT^*$ has only one non-zero entry. In this case, it is impossible to recover $\bcalT^*$ unless this entry is observed. Thus a common assumption made to ensure the information $\bcalT^*$ carries is spread evenly across its entries is the incoherence condition \citep{xia2019polynomial,xia2021statistically}. The incoherence for an column orthonormal matrix $\U\in\RR^{d\times r}$ is defined as 
	\begin{align*}
		\incoh(\U)=\frac{d}{r}\max_{i\in[d]}\ltwo{\U^\top\e_i}^2. 
	\end{align*}
	Consider a matrix with singular value decomposition $\M = \L\S\R^\top$, and a tensor with Tucker decomposition $\bcalT = \bcalC\times_1\U\times_2\V\times_3\W$, where the factor matrices $\L,\R$ and $\U,\V,\W$ are column-orthonormal. The incoherence is respectively defined as 
	\begin{align*}
		\incoh(\M) &= \max\big\{\incoh(\L), \incoh(\R)\big\}, \\ 
		\incoh(\bcalT) &= \max\big\{\incoh(\U), \incoh(\V),\incoh(\W)\big\}.
	\end{align*}
	It is easy to verify that the incoherence of matrix/tensor is determined by the subspaces spanned by the factors and is independent of the specific decomposition up to permutation in the SVD and orthogonal transformation of the factors in the Tucker decomposition.
	In this paper, we will focus on the underlying tensor whose incoherence is bounded by $\mu > 0$, that is $\incoh(\bcalT^*)\leq\mu.$
	The condition number of the tensor $\bcalT^*$ is defined by $\kappa = \lambda_{\max}/\lambda_{\min}$, where $\lambda_{\max} = \max_{i=1}^3\op{\calM_i(\bcalT^*)}$, and $\lambda_{\min} = \min_{i=1}^3\sigma_{r_i}\big(\calM_i(\bcalT^*)\big)$.

	\subsection{Tensor Completion with Side Information}
	In addition to the observed entries from the original tensor, side information describing the features along certain tensor mode is also attainable in many cases. 
	In this paper, we assume without loss of generality the side information and the tensor are coupled along the first mode. 
	We extract the subspace information $\U_s\in\OO_{d_1, r_1}$ from the side information. We will list some common ways to extract the subspace from various types of side information in the next section. 
	The accuracy of side information is jointly measured under $\fro{\cdot}$ and $\twoinf{\cdot}$, and we denote $\mu_s$ to be the incoherence of $\U_s$: 
	\begin{align*}
		\delta := \fro{\U_s\U_s^\top - \U^*\U^{*\top}}, \quad
		\gamma := \frac{\sqrt{d_1}}{\delta}\twoinf{\U_s - \U^*\R},\quad
		\mu_s := \incoh(\U_s),
	\end{align*}
	where $\R = \arg\min_{\R\in\OO_{r_1}}\fro{\U_s - \U^*\R}$ aligns the two matrices. 
	Here, $\R$ minimizes the Procrustes distance and is given explicitly by $\R = \O_2\O_1^\top$, where $\U_s^\top\U^* = \O_1\bSigma_{0}\O_2^\top$ is the SVD. 
	The parameter $\gamma$ captures how the difference between the subspaces is distributed across rows. It ranges from $\gamma \approx 1$ for uniformly distributed (incoherent) error to $\gamma = \sqrt{d_1}$ for row-concentrated (spiky) error. We provide a detailed derivation in Appendix \ref{app:gamma}. 
	We shall in the following denote $\bcalE_{\omega_i} = \e_{l_i}\otimes \e_{k_i}\otimes \e_{g_i}$ with $\omega_i=(l_i,k_i,g_i)$. 
	\blue{Let $\C_2^*:=\calM_2(\bcalC^*)$ denote the mode-2 matricization of the core tensor $\bcalC^*$.}
	Then \eqref{model} can be rewritten as 
	\begin{align}\label{def:y}
		y_i &= \inp{\bcalE_{\omega_i}\times_1\U_s^{\top}}{\bcalC^*\times_1\R^\top\times_2\V^*\times_3\W^*}\notag\\
        &\quad + \inp{\bcalE_{\omega_i}\times_1(\U^{*}\R- \U_s)^\top}{\bcalC^*
			\times_1\R^\top\times_2\V^*\times_3\W^*} + \epsilon_i\notag\\
		&= \inp{\underbrace{\E_{i,2}(\I_{d_3}\otimes \U_s)}_{=:\Y_i}}{\underbrace{\V^*\C^*_2(\W^{*\top}\otimes\R)}_{=:\M^*}}+  \underbrace{\inp{\bcalE_{\omega_i}\times_1(\U^{*}\R- \U_s)^\top}{\bcalC^*\times_1\R^\top\times_2\V^*\times_3\W^*}}_{=:\delta_i} + \epsilon_i,
	\end{align}
	where $\E_{i,2} = \calM_2(\bcalE_{\omega_i})$.
	Then $\rank(\M^*)= \min\{r_2,r_1r_3\} =: r$. 
	We define the linear map $\calY:\RR^{d_2\times d_3r_1}\rightarrow \RR^n$ as 
$
		\calY(\M) = \big(\inp{\Y_1}{\M},\cdots,\inp{\Y_n}{\M}\big)^{\top}. 
$
	We denote $\y = (y_1,\cdots,y_n)^\top$. 
	\blue{We denote the adjoint of $\calY$ by $\calY^*:\RR^n\rightarrow\RR^{d_2\times d_3r_1}$; for $\a=(a_1,\ldots,a_n)^\top$, it is given by $\calY^*(\a)=\sum_{i=1}^n a_i\Y_i$.}
	Then we consider the following problem:
	\begin{align}\label{prob:MR}
		\min_{\M} f(\M):=\frac{1}{2}\ltwo{\calY(\M) - \y}^2, \quad \text{s.t.~} \rank(\M)\leq r. 
	\end{align}
	Here $\rank$ refers to the matrix rank, and we shall abuse the notation when the context is clear. 
	\red{We convert the original tensor completion problem into a matrix regression problem whose ambient matrix dimension is reduced by the subspace information.} 
	\red{Although inaccuracy of $\U_s$ introduces the error terms $\{\delta_i\}_{i=1}^n$, this reformulation yields the reduced ambient-dimensional sample bound established in the next section under the stated side-information assumptions.} 

	\subsection{Algorithm}
	Now we turn our attention to solving \eqref{prob:MR}, which is a low-rank matrix regression problem. 
	Low-rank matrix recovery problem is widely studied in a lot of works but usually the sampling basis (corresponding to our $\Y_i$) is either random Gaussian matrix or standard basis in the matrix space. However, due to the nature of unfolding, our $\Y_i$ are more structured and thus tailored analysis is required. 
	For problems with low-rank constraints, a commonly used method is Riemannian gradient descent (RGrad) \citep{wei2016guarantees, wei2016guarantees2, shen2023computationally}.
	Given a warm initialization, RGrad consists of three parts: (1) computing the Riemannian gradient; (2) updating along the Riemannian gradient direction; (3) retracting back to the low rank matrix manifold. The procedure is summarized in Algorithm \ref{alg:rgrad}. 
	\begin{algorithm}
		\caption{RGrad with side information}
		\begin{algorithmic}\label{alg:rgrad}
			\STATE{\textbf{Input: }$\M_0$, side information $\U_s$, maximum iteration $l_{\max}$}
			\FOR{$l = 0, \cdots, l_{\max}-1$}
			\STATE{$\G_l = \calY^*(\calY(\M_l)-\y)$}
			\STATE{$\W_l = \trim_{\zeta}(\M_l - \eta \calP_{\TT_{l}}\G_l)$ with $\zeta = \lambda_{\max}\sqrt{\frac{\mu^2r_2r_3}{d_2d_3}}$, $\eta = \frac{d^*}{n}$}
			\STATE{$\M_{l+1} = \svd_r(\W_l)$}
			\ENDFOR
			\STATE{\textbf{Output: }$\hat\bcalT = \calM_2^{-1}(\M_{l_{\max}})\times_1\U_s$}
		\end{algorithmic}
	\end{algorithm}
	
	In the first step, the Riemannian gradient is computed by projecting the vanilla gradient $\G_l = \nabla f(\M_l)$ to the tangent space $\TT_l$ at $\M_l$ with respect to the low rank matrix manifold $\MM_r$. Let $\M_l = \L_l\S_l\R_l^\top$ be the compact rank-$r$ SVD, then the projection onto the tangent space is given by 
	\begin{align*}
		\calP_{\TT_l}(\M) = \L_l\L_l^\top\M + \M\R_l\R_l^\top - \L_l\L_l^\top\M\R_l\R_l^\top. 
	\end{align*}
	The second step involves the selection of step-size, which we will give some guidance in the next section. The final step involves the retraction back to the manifold $\MM_r$, and this can be achieved by $\svd_r$, which maps a matrix to its best rank-$r$ approximation under the Frobenius norm. 
	\red{The proposed method does not remove the rank constraint or the cost of SVD computations: each iteration includes a rank-$r$ truncated SVD. Its computational benefit is more specific since the side information reduces the matrix dimension from $d_2\times d_1d_3$ to $d_2\times r_1d_3$ when $r_1\ll d_1$. }
	
	\vspace{0.2cm}
	
	\noindent\textbf{\emph{Discussion of the Trimming Step. }}
	Algorithm \ref{alg:rgrad} involves a trimming step, which is formally defined by the operator $\trim_{\zeta}$ applied entry-wise to a matrix as:
	\begin{align*}
		[\trim_{\zeta}(\M)]_{ij} = \left\{
		\begin{aligned}
			&[\M]_{ij}, \quad |[\M]_{ij}|\leq \zeta, \\
			&\zeta\cdot\textsf{sign}([\M]_{ij}), \quad\text{otherwise}. 
		\end{aligned}
		\right.
	\end{align*}
	By explicitly capping the matrix entries at the threshold $\zeta$, this step is utilized in our theoretical analysis to control the incoherence of the iterates.
	This serves as a standard theoretical tool and has been widely adopted in the existing literature (e.g., \cite{cai2022provable, cai2022generalized}).
	While trimming is often not strictly required in practical implementations, recent theoretical advances provide pathways to bypass this step entirely, such as employing leave-one-out technique \citep{ma2018implicit,shen2023quantile,wang2023implicit}, imposing stronger initialization conditions \citep{wei2020guarantees}, or shifting to an online setting \citep{cai2023online}. 
	However, each of these alternatives introduces theoretical trade-offs: leave-one-out arguments substantially complicate the proof, an online setting fundamentally alters the problem framework.
	Because our primary objective is to quantify the sample-size effect of side information under passive uniform sampling, we retain the trimming step to keep the proof tractable.

	\subsection{Side Information Aided Spectral Initialization}
	Algorithm \ref{alg:rgrad} is only useful when starting from some point that is sufficiently close to the ground-truth. 
	Therefore the design of the initialization is crucial for the non-convex formulation \eqref{prob:MR}. 
	For our purpose, we adopt the spectral initialization.  
	\blue{To motivate this initialization, 
		note that under uniform sampling, the population least square solution for \eqref{prob:MR} is}:
	\begin{align*}
		d^*\cdot\EE y_i\Y_i = \V^*\C_2^*(\W^{*\top}\otimes \U^{*\top}\U_s) = \M^*(\I\otimes \R^\top\U^{*\top}\U_s),
	\end{align*}
	While this quantity generally differs from $\M^*$, the discrepancy is controlled when $\delta$ is sufficiently small. 
	We therefore take the best rank-$r$ approximation of $\frac{d^*}{n}\sum_{i=1}^ny_i\Y_i$ as our initialization. 
	Crucially, the regressors $\Y_i = \E_{i,2}(\I_{d_3}\otimes \U_s)$ defined in \eqref{def:y} are constructed using the side information, which reduces the ambient dimension from $d_2 \times d_1 d_3$ to $d_2 \times r_1 d_3$.
	To ensure that the initial estimator is incoherent, we add a truncation step in Algorithm \ref{alg:trunc}.
	This algorithm first rescales rows with large norms to satisfy the incoherence condition. Since this operation disrupts the column orthonormality of the singular vectors, the algorithm includes a final re-orthogonalization step. This yields a column-orthonormal matrix $\tilde{\U}$ that remains close to the original subspace while preserving the established incoherence.
	
	\begin{algorithm}[H]
		\caption{Initialization with side information}
		\begin{algorithmic}\label{alg:init}
			\STATE{$\tilde\M_0 = \svd_r ({\frac{d^*}{n}\sum_{i=1}^ny_i\Y_i})$, with its SVD $\tilde\M_0 = \tilde\L_0\tilde\S_0\tilde\R_0^\top$}
			\STATE{$\L_0 = \trunc(\tilde\L_0,\mu)$}
			\STATE{$\R_0 = \trunc(\tilde\R_0,\mu)$}
			\STATE{\textbf{Output:} $\M_0 = \L_0\L_0^\top\tilde\M_0\R_0\R_0^\top$}
		\end{algorithmic}
	\end{algorithm}
	
	\begin{algorithm}[H]
		\caption{\trunc}
		\begin{algorithmic}\label{alg:trunc}
			\STATE{\textbf{Input:} $\U\in\RR^{d\times r}$, $\mu_0>0$}
			\FOR{$i=1,\cdots, d$}
			\STATE{$\U_0(i,:) = \frac{\U(i,:)}{\ltwo{\U(i,:)}}\cdot\min\bigg\{\ltwo{\U(i,:)}, \sqrt{\frac{\mu_0 r}{d}}\bigg\}$}
			\ENDFOR
			\STATE{$\tilde\U = \U_0(\U_0^\top\U_0)^{-1/2}$}
			\STATE{\textbf{Output:} $\tilde\U$}
		\end{algorithmic}
	\end{algorithm}
	
	\section{Theoretical Guarantees}\label{sec:theory}
	In this section, we show how the side information helps reduce the sample complexity in both initialization and convergence phases. 
	We assume the noise $\epsilon_i$ are subgaussian.
	\begin{assumption}\label{assump:noise}
		The noise $\epsilon_i, i=1,\cdots, n$ are i.i.d. and there exists $\sigma>0$, such that for all $s>0$, 
		\begin{align*}
			\EE \exp(s\epsilon_i) \leq \exp(s^2\sigma^2/2). 
		\end{align*}
	\end{assumption}
	Under this assumption, our main theorem goes as follows:
	\begin{theorem}\label{thm:main}
		Suppose Assumption \ref{assump:noise} holds. Assume the following conditions hold:
		\begin{enumerate}[label={(\arabic*)}]
			\item Sample size: $n \geq C_1(d_2\vee d_3r_1)(r_1^3r^4\vee rr^*) \mu_s\mu^4\kappa^8\log d$,
			\item Accuracy of side information: $\delta\leq \min\{\frac{1}{2}\kappa^{-2}, c_1r^{-1/2}\kappa^{-1}\}$,
			\item Signal-to-noise ratio: $\frac{\lambda_{\min}}{\sigma}\geq C_2\sqrt{\frac{d^*(d_2\vee d_3 r_1)r\log d}{n}}\kappa$
		\end{enumerate}
		for some absolute constants $C_1,C_2,C_3,c_1>0$. If we choose $l_{\max}\geq \lceil\log(C_3r\kappa)\rceil$, then the output of our algorithm $\hat\bcalT$ satisfies:
		\begin{align*}
			\frac{1}{d^*}\fro{\hat\bcalT - \bcalT^*}^2 
			&\lesssim \frac{r}{d^*}\lambda_{\max}^2\delta^2 + \frac{(d_2\vee d_3r_1)r}{n}\log d\cdot\sigma^2\\
			&\quad + \frac{ rr^*\mu_s\mu^2}{n^2d_1}(\gamma^2\delta^2 \wedge \mu_s r_1)\lambda_{\max}^2\log^2 d
		\end{align*}
		with probability exceeding $1-12d^{-10}$. 
	\end{theorem}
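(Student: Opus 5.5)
The proof splits into three parts: an initialization bound, a one-step contraction for the RGrad iterates in the reduced matrix-regression problem \eqref{prob:MR}, and a final lift back to the tensor. Throughout, the observation model is written as $y_i=\inp{\Y_i}{\M^*}+\delta_i+\epsilon_i$ as in \eqref{def:y}, so every error term carries two perturbation sources: the noise $\epsilon_i$ and the side-information bias $\delta_i$. The incoherence of $\M^*$ comes for free. Its column space is that of $\V^*$; its row space lies in that of $\W^*\otimes\R$, whose incoherence is bounded by $\mu$. Each iterate satisfies $\fro{\M_l}\le\lambda_{\max}\sqrt r$, and trimming together with the truncation in Algorithm \ref{alg:trunc} keeps the iterates incoherent with parameters $O(\mu\kappa^2)$.

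\textbf{Initialization.} We write
\[
\frac{d^*}{n}\sum_i y_i\Y_i-\M^*
=\Big(\frac{d^*}{n}\sum_i y_i\Y_i-d^*\EE y_i\Y_i\Big)+\M^*\big(\I\otimes(\R^\top\U^{*\top}\U_s-\I)\big).
\]
The bias term is bounded in operator norm by $\lambda_{\max}\op{\R^\top\U^{*\top}\U_s-\I}\lesssim\lambda_{\max}\delta^2$, since $\U_s^\top\U^*$ has singular values $\cos\theta_j\ge 1-\delta^2$. The fluctuation term is a sum of i.i.d.\ matrices $\frac{d^*}{n}y_i\Y_i$. Here $\Y_i=\e_{k_i}(\e_{g_i}\otimes\U_s^\top\e_{l_i})^\top$ has norm at most $\sqrt{\mu_sr_1/d_1}$, and the entries of $\bcalT^*$ are bounded by $\lambda_{\max}\mu^{3/2}\sqrt{r^*/d^*}$. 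Matrix Bernstein then gives an operator-norm bound of order
\[
\sqrt{\frac{d^*(d_2\vee d_3r_1)\log d}{n}}\,(\sigma+\|\bcalT^*\|_{\ell_\infty})
\]
plus a lower-order term. Under conditions (1)--(3), this is at most $c\lambda_{\min}/(\sqrt r\kappa)$. Wedin's theorem, followed by the standard argument that truncation and re-orthogonalization preserve subspace distance up to constants, then yields an incoherent $\M_0$ with $\fro{\M_0-\M^*}\le c\lambda_{\min}/\kappa$.

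\textbf{Local contraction.} We follow the usual RGrad analysis (e.g.\ \cite{wei2016guarantees,cai2022provable}). Writing $\M_l-\eta\calP_{\TT_l}\G_l-\M^*$ as
\[
(\M_l-\M^*)-\eta\calP_{\TT_l}\calY^*\calY(\M_l-\M^*)+\eta\calP_{\TT_l}\calY^*(\bdelta+\beps),
\]
we need three ingredients.

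\emph{(a) Restricted isometry on incoherent tangent spaces.} We need $\op{\frac{d^*}{n}\calP_{\TT}\calY^*\calY\calP_{\TT}-\calP_{\TT}\calQ\calP_{\TT}}\le 1/10$, where $\calQ=d^*\EE\,\Y_i\inp{\Y_i}{\cdot}$. Because $\U_s$ is orthonormal, $\calQ$ equals the identity on $\RR^{d_2\times d_3r_1}$. This isometry must hold uniformly over all iterates. Trimming handles that uniformity: the analysis uses a decomposition into $\calP_{\TT^*}$ plus perturbations controlled by $\fro{\M_l-\M^*}$, together with an entrywise bound from \cite{cai2022provable}. The resulting sample cost is where the factor $r_1^3r^4\mu_s\mu^4\kappa^8$ enters.

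\emph{(b) Noise term.} We bound $\fro{\calP_{\TT_l}\calY^*\beps}\lesssim\sqrt r\op{\calY^*\beps}$, and Bernstein's inequality gives $\op{\calY^*\beps}\lesssim\sigma\sqrt{n(d_2\vee d_3r_1)\log d/d^*}$. After scaling by $\eta$, this produces the second term of the bound.

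\emph{(c) Bias term.} We split $\frac{d^*}{n}\calY^*\bdelta$ into its mean plus a fluctuation. The mean is $\M^*(\I\otimes(\R^\top\U^{*\top}(\U^*\R-\U_s)))$, which has Frobenius norm at most $\lambda_{\max}\sqrt r\,\delta$ and produces the first term. The fluctuation is controlled by matrix Bernstein. Its entries satisfy $|\delta_i|\lesssim\lambda_{\max}\sqrt{\mu^2r_2r_3/(d_2d_3)}\,\twoinf{\U_s-\U^*\R}\sqrt{r_1}$. We bound $\twoinf{\U_s-\U^*\R}$ by the smaller of $\gamma\delta/\sqrt{d_1}$ and $\sqrt{\mu_sr_1/d_1}+\sqrt{\mu r_1/d_1}$. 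Since the $\log d$ part of Bernstein dominates, this gives the third term.

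Combining (a)--(c) with the fact that $\svd_r$ at most doubles the perturbation, we get
\[
\fro{\M_{l+1}-\M^*}\le\tfrac12\fro{\M_l-\M^*}+C(\text{err}),
\]
provided $\fro{\M_l-\M^*}\le c\lambda_{\min}$. Condition (2) and the SNR condition (3) guarantee that err stays below this radius, so the iterates remain in the basin. Running $\lceil\log(C_3r\kappa)\rceil$ iterations contracts the initial error down to the error floor.

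\textbf{Lift to the tensor.} We have
\[
\hat\bcalT-\bcalT^*=\calM_2^{-1}(\M_{l_{\max}}-\M^*)\times_1\U_s+\bcalC^*\times_1(\U_s\R^\top-\U^*)\times_2\V^*\times_3\W^*.
\]
The first piece has Frobenius norm $\fro{\M_{l_{\max}}-\M^*}$ because $\U_s$ is orthonormal. The second has norm at most $\lambda_{\max}\sqrt{r_1}\,\delta$. Squaring and dividing by $d^*$ gives the stated bound.

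\textbf{Main obstacle.} The hardest step is the uniform restricted isometry (a) for the structured, non-entrywise measurements $\Y_i$, and specifically keeping the incoherence of the mixed row space (which contains $\U_s$) under control through trimming. I would first try to adapt Lemma-type bounds from \cite{cai2022provable}, replacing the standard basis with $\e_{g}\otimes\U_s^\top\e_l$ and paying a factor $\mu_sr_1$. This substitution is why $d_3r_1$, rather than $d_1d_3$, appears in the sample size.
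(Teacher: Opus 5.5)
Your initialization, the bias and noise bounds (b)--(c), and the lift follow the paper's route. Your operator-norm bound $\lambda_{\max}\delta^2$ on the initialization bias, via $1-\cos\theta_j\le\sin^2\theta_j$, is in fact sharper than the paper's $\lambda_{\max}\delta$.

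The gap is in the local contraction. Ingredient (a) is a two-sided isometry on the tangent space, so it only controls $\eta\calP_{\TT_l}(\calY^*\calY-\frac{n}{d^*}\calI)\calP_{\TT_l}(\M_l-\M^*)$. Your decomposition also contains the cross term $\eta\calP_{\TT_l}\calY^*\calY\calP_{\TT_l}^{\perp}(\M_l-\M^*)$. None of (a)--(c) addresses it, and it cannot be handled crudely here. Testing $\calY^*\calY$ on $\Y_i/\fro{\Y_i}$ gives $\eta\op{\calY^*\calY}\ge\frac{d^*}{n}\max_i\ltwo{\U_s^\top\e_{l_i}}^2$, which is typically of order $d_2d_3r_1/n\gg1$ at the theorem's sample size. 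Bounding the cross term by $\eta\op{\calY^*\calY}\fro{\M_l-\M^*}^2/\lambda_{\min}$, or by Cauchy--Schwarz through $\calY$, would therefore force an initialization radius of order $\lambda_{\min}\sqrt{n/(d_2d_3r_1)}$ at best (up to logarithmic and incoherence factors), far below $c_0\lambda_{\min}$. The same obstruction undermines your plan to obtain (a) uniformly over iterates by perturbing from $\TT^*$: the transfer costs $\eta\op{\calY^*\calY}\op{\calP_{\TT_l}-\calP_{\TT^*}}$, which is not small in the basin. The paper never proves a uniform tangent-space isometry; it proves one only at the fixed $\TT^*$ (Lemma~\ref{lemma:pty}).

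What the paper uses instead is one-sided. It only pairs $\calY^*\calY-\frac{n}{d^*}\calI$ against the specific error $\M_l-\M^*$ (or pieces of it), whose factors are incoherent on both sides because of trimming (Lemma~\ref{lemma:spiki-incoh}). The identity $\inp{(\calY^*\calY-\frac{n}{d^*}\calI)\A}{\B}=\inp{\calL(\A,\B)}{\sum_i(\Z_i-\EE\Z_i)}$ of Lemma~\ref{lemma:YYPt} holds for all $\A,\B$ on a single event. Lemma~\ref{lemma:LAB} makes $\nuc{\calL(\A,\B)}$ small whenever, on each side, one factor is incoherent and the other has controlled Frobenius norm. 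The pieces with fixed $\L^*$ or fixed right factor go through Lemma~\ref{lemma:YYLR}. Together these give $\fro{\calP_{\TT_l}(\calY^*\calY-\frac{n}{d^*}\calI)(\M_l-\M^*)}\le c\,\frac{n}{d^*}\fro{\M_l-\M^*}$ (the $\sfC_1,\sfC_2$ analysis) and the bound on $\textsf{B}_3$. This is what handles the data dependence of $\M_l$, which you attributed to a uniform isometry. If you replace (a) by this estimate, your route closes directly, since with $\eta=d^*/n$
\[
(\M_l-\M^*)-\eta\calP_{\TT_l}\calY^*\calY(\M_l-\M^*)=\calP_{\TT_l}^{\perp}(\M_l-\M^*)-\eta\calP_{\TT_l}\Big(\calY^*\calY-\frac{n}{d^*}\calI\Big)(\M_l-\M^*),
\]
and the first term is at most $\fro{\M_l-\M^*}^2/\lambda_{\min}$ by Lemma~\ref{lemma:ptperp}. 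This would even bypass the paper's $\textsf{B}_1$--$\textsf{B}_4$ expansion and Lemma~\ref{lemma:empirical}.

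Smaller points:
\begin{itemize}
\item Trimming only yields incoherence of order $\mu^2\kappa^2$ times rank factors (the paper's $\mu_1$), not $O(\mu\kappa^2)$.
\item The Bernstein variance part of the bias fluctuation is absorbed by the mean term $\frac{n}{d^*}\lambda_{\max}\delta$ (Lemma~\ref{lemma:ydelta}), not by the $\log d$ part.
\item The $\sqrt{r_1}$ factors in your bounds on $|\delta_i|$ and on the lifted bias are unnecessary; for example, $\fro{\bcalC^*\times_1(\U_s\R^\top-\U^*)\times_2\V^*\times_3\W^*}\le\lambda_{\max}\delta$. If kept, they give $r_1\lambda_{\max}^2\delta^2$, which exceeds the stated $r\lambda_{\max}^2\delta^2$ when $r_1>r$.
\end{itemize}
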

	The proof of this theorem is composed of two parts: the initialization (Theorem \ref{thm:init}) and the local convergence (Theorem \ref{thm:convergence}). 
	Before we state these two parts as two separate theorems in the following sections, we interpret the meaning of Theorem \ref{thm:main} from four perspectives: sample complexity, statistical error, signal-to-noise ratio condition, and the conditions on side information. 
	
	\vspace{0.2cm}
	
	\noindent\textbf{\emph{Sample-size comparison under subspace information. }}
	For bounded ranks, incoherence, and condition number, our theorem requires $O_{\mu,\r,\kappa}\big((d_2\vee d_3)\log d\big)$ samples for both initialization and local convergence. Under the additional subspace-information assumption, this has smaller ambient-dimensional dependence than the passive uniform-sampling Tucker-completion guarantees summarized in Table \ref{table:literature}, which scale as $O_{\mu,\r,\kappa}\big((\sqrt{d_1d_2d_3}+d)\log^b d\big)$ for $b\geq 2$ under their respective assumptions. This is not a comparison under identical information: most listed baselines do not assume side information, whereas \approach does.
	The bound is independent of $d_1$. For illustration, let $d_2=d_3$. Among the listed passive uniform-sampling guarantees, the ambient-dimensional terms scale as $\tilde O_{\mu,\r,\kappa}(d_1)$ when $d_1\geq d_2^2$ and as $\tilde O_{\mu,\r,\kappa}(\sqrt{d_1d_2d_3})$ when $d_1\leq d_2^2$ and the square-root term dominates, whereas the \approach bound scales as $\tilde O_{\mu,\r,\kappa}(d_2)$ under accurate side information. Thus, this comparison quantifies the potential gain supplied by the side information. See also Figure \ref{fig:sample_compare}.
	This gain does not extend uniformly to rank dependence. Assuming $d_1=d_2=d_3=d$ and $r_1=r_2=r_3=r$, our theorem requires $\tilde O_{\mu,\kappa}(dr^8)$ samples, whereas \cite{xia2021statistically} gives $\tilde O_{\mu,\kappa}(d^{3/2}r^4)$. Accordingly, \approach is most favorable when the ambient dimensions are sufficiently large relative to the rank. Improving the rank dependence is left for future work.
	
	\begin{figure}[H]
		\centering
		\includegraphics[width=0.5\textwidth]{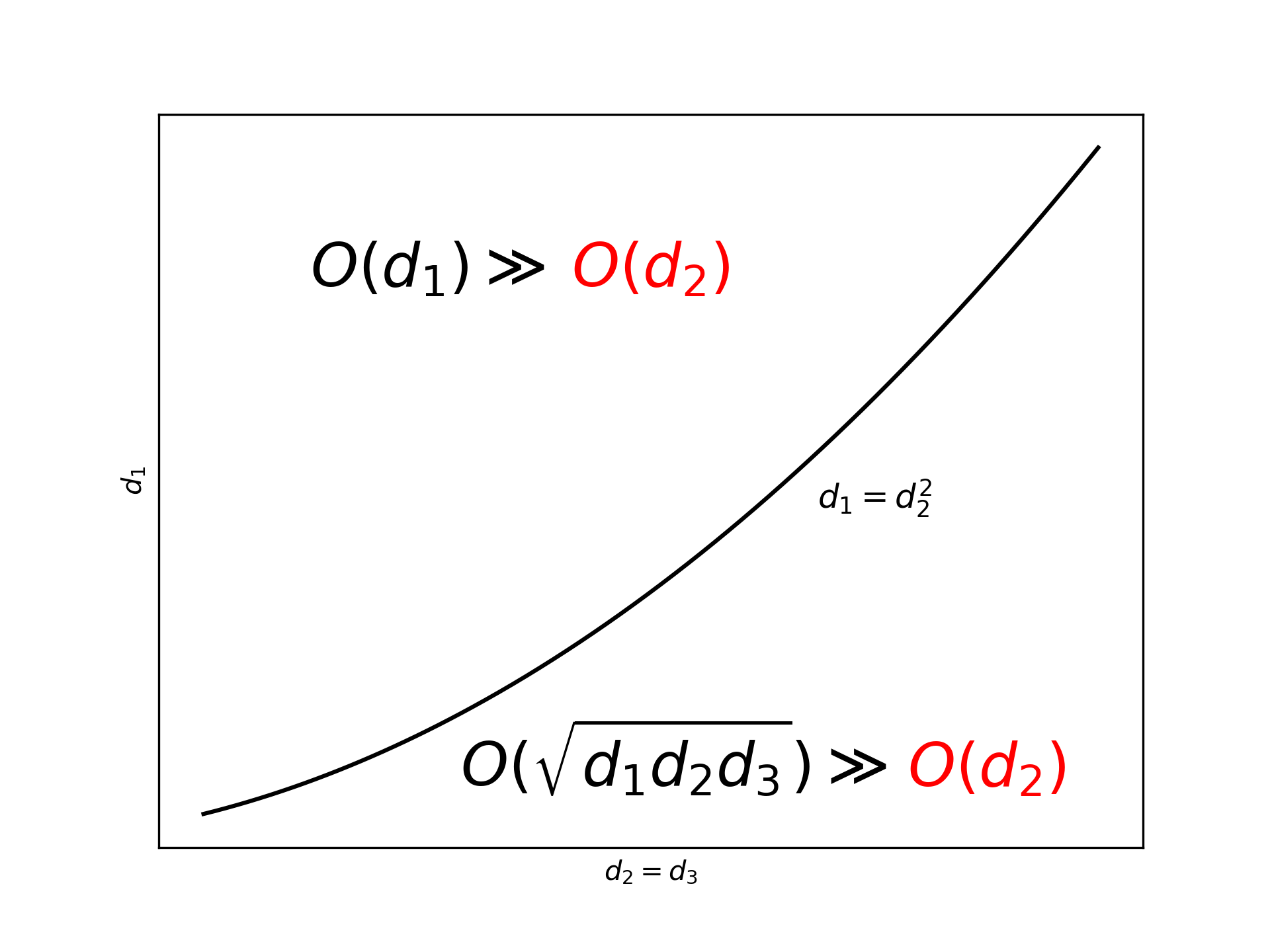}
		\caption{Schematic comparison of the ambient-dimensional sample-size terms discussed above under the stated information assumptions. The \approach term is highlighted in red, while the comparator terms are shown in black.}
		\label{fig:sample_compare}
	\end{figure}

	For additional context, we compare with non-Tucker formats, noting that CP rank-1, tensor-train rank-$(1,1)$, and Tucker rank-$(1,1,1)$ describe the same class. In the balanced, fixed-rank setting, the cited passive uniform-sampling CP and tensor-train guarantees scale as $O_{\mu,\kappa}(d^{3/2}\log^b d)$ for $b\geq 2$, whereas the \approach bound scales as $O_{\mu,\kappa}(d\log d)$ under its subspace-information assumption; see Table \ref{table:literature:nonTucker}.
	The covariate-assisted COSTCO analysis \citep{ibriga2022covariate} has a $\widetilde O_{\mu,\kappa}(d^{3/2})$ sample requirement even when a suitable initialization is available. COSTCO jointly factors the tensor and covariate information, whereas \approach first uses the estimated subspace to reduce the problem to matrix regression. This methodological difference yields a different sample-size bound under the subspace-accuracy assumptions of \approach, but it does not imply general computational or runtime dominance.
	
	\vspace{0.2cm}
	
	\noindent\textbf{\emph{Resulting statistical error. }}
	In general, when the side information is accurate under $\twoinf{\cdot}$ norm, that is when $\gamma \leq \frac{n}{\sqrt{d_2d_3r^*}\log d}\frac{1}{\sqrt{\mu_s}\mu},$
	the final statistical error is $$\frac{1}{d^*}\fro{\hat\bcalT - \bcalT^*}^2\lesssim  \frac{(d_2\vee d_3r_1)r}{n}\log d\cdot\sigma^2 + \frac{r}{d^*}\lambda_{\max}^2\delta^2.$$ 
	Under the sample size condition in Theorem \ref{thm:convergence}, this condition reduces to $\gamma\lesssim 1$. Further discussion is provided below. If the side information is also accurate under $\fro{\cdot}$, namely, $\delta \lesssim  \frac{\sigma}{\lambda_{\max}}\sqrt{\frac{d^*(d_2\vee d_3r_1)}{n}\log d},$ the final statistical error reduces to
	\begin{align*}
		\frac{1}{d^*}\fro{\hat\bcalT - \bcalT^*}^2\lesssim  \frac{(d_2\vee d_3r_1)r}{n}\log d\cdot\sigma^2. 
	\end{align*}
	For comparison, \cite{xia2021statistically} gives the following bound for a power-iteration estimator $\hat\bcalT_{\text{PI}}$:
	$$\frac{1}{d^*}\fro{\hat\bcalT_{\text{PI}} - \bcalT^*}^2\lesssim\frac{d\max\{r_1,r_2,r_3\}}{n} \log d\cdot\sigma^2+\frac{d\max\{r_1,r_2,r_3\}}{n} \log d\cdot\frac{\lambda_{\max}^2}{d^*}. $$
	Relative to this bound, the \approach bound does not contain the second signal-strength term and has a smaller ambient-dimensional factor when $d_1$ is much larger than $d_2,d_3$, subject to sufficiently accurate side information.
	As a reference, we also compare with the COSTCO estimator (Covariate-Assisted Sparse Tensor Completion) proposed in \cite{ibriga2022covariate}. To align our comparison with their low CP rank framework, we restrict our analysis to the Tucker rank-$(1, 1, 1)$ case (equivalent to CP-rank 1).
	While their results exhibit notable improvement in estimation error for the mode coupled with side information, the overall statistical error of the tensor is influenced by the largest estimation error across all modes, which is  
	$$\frac{1}{d^*}\fro{\hat\bcalT_{\text{COSTCO}} - \bcalT^*}^2\lesssim\frac{\kappa^2d\log d}{n}\sigma^2.$$ 
	Accordingly, in this rank-one comparison and under reliable side information, our upper bound is smaller when $d_1\gg d_2,d_3$ and has the same ambient-dimensional order in the balanced case.
	
	\vspace{0.2cm}
	
	\noindent\textbf{\emph{Weaker signal-to-noise ratio condition. }}
	Aside from the sample complexity, the signal-to-noise ratio is also an important condition in noisy tensor completion. 
	If the noise level is too large relative to the signal, reliable recovery is not possible.
	It is assumed in noisy tensor completion \citep{xia2021statistically} that $\lambda_{\min}/\sigma\gtrsim_{\r} \sqrt{\frac{(d^*)^{3/2}}{n}\log^{5}d} + \sqrt{\frac{d^* d}{n}\log^{10} d}$, where the notation $\gtrsim_{\r}$ hides a constant factor that depends only on the rank $\r$. 
	The \approach condition does not contain the first term when $\kappa=O(1)$ and has the same order as the condition in \cite{ibriga2022covariate}, under the respective side-information assumptions.
	
	\vspace{0.2cm}
	
	\noindent\textbf{\emph{Remarks on conditions concerning side information. }}
	The accuracy of side information plays an important role in our analysis. 
	To reach the noise-dominated error rate in the previous discussion, we require
	\begin{align*}
		\gamma \lesssim 1, \quad \delta \lesssim \frac{\sigma}{\lambda_{\max}}\sqrt{\frac{d^*(d_2\vee d_3r_1)}{n}\log d}. 
	\end{align*}
	The condition $\gamma\lesssim 1$ means the difference between $\U_s$ and $\U^*$ is spread out evenly across each row. 
	Larger sample size, or larger signal-to-noise ratio will lead to a stricter condition on $\delta$. This is because the error $\U_s$ brings in is non-vanishing and is independent of the original tensor completion problem. 
	The following examples illustrate regimes in which these conditions can hold.
	
	\noindent\textit{Example 1: Fully revealed matrix.}
	In this example, we consider the side information of matrix form $\M_s = \U^*\bSigma_s\V_s^\top + \E_s\in\RR^{d_1\times d_s}$.  
	This side information is also considered in \cite{ibriga2022covariate}.  For simplicity, we assume $\E_s$ has i.i.d. $N(0,\sigma_s^2)$ entries. 
	We may take $\U_s$ to be the top $r_1$ left singular vectors of $\M_s$. Then using Wedin's sin$\Theta$ Theorem, we can obtain with high probability 
	\begin{align*}
		\delta \lesssim \frac{\sqrt{(d_1\vee d_s)\log d}\cdot \sigma_s}{\lambda_{\min,s}},
	\end{align*}
	where $\lambda_{\min,s} = \sigma_{r_1}(\bSigma_s)$ is the signal strength. For illustration purpose, we assume $d_1\asymp d_2\asymp d_3\asymp d_s$, and $\sigma_s/\lambda_{\min,s}\asymp \sigma/ \lambda_{\min}$, $\r,\kappa\asymp 1$. Then the condition on $\delta$ is met as long as $n\lesssim d^*$. 
	This condition is compatible with the sample-poor regimes in which side information is most useful. When the number of observations is comparable to $d^*$, the need for auxiliary information is naturally reduced.
	The other condition $\gamma\lesssim 1$ is more involved to verify. Fortunately, this result was established in various existing literature (e.g. \cite[Theorem 3.1]{cape2019two} and \cite[Section 4]{chen2021spectral}). 
	
	\noindent\textit{Example 2: Incomplete matrix.}
	Next we consider the incomplete case, where we only observe $n_s$ entries uniformly at random in the matrix $\M_s = \U^*\bSigma_s\V_s^\top + \E_s\in\RR^{d_1\times d_s}$.
	We shall invoke the existing literature on matrix completion (e.g., \cite{chen2019noisy}), where the bound provided indicates 
	\begin{align*}
		\delta \lesssim \sqrt{\frac{d_1d_s}{n_s}}\frac{\sqrt{(d_1\vee d_s)\log d}\cdot \sigma_s}{\lambda_{\min,s}}.
	\end{align*}
	Assume $d_1\asymp d_2\asymp d_3\asymp d_s$, and $\sigma_s/\lambda_{\min,s}\asymp \sigma/ \lambda_{\min}$, $\r,\kappa\asymp 1$, then the condition on $\delta$ reduces to 
	$\frac{n}{n_s}\lesssim d$.
	The minimal samples required is thus $n_s = n = O(d\log d)$.
	Here $n_s = \Omega(d \log d)$ is independently required to guarantee the validity of the subspace estimation from the incomplete matrix side information \cite{chen2019noisy}. 
	And the infinity norm error provided in \cite{chen2019noisy} shows the error is evenly spread out in each column of the subspace, indicating $\gamma\lesssim 1$. 

	\subsection{Initialization}
	For non-convex problem, it is of key importance to design the initialization to make sure it is near the groundtruth. 
	Existing initialization algorithms such as spectral initialization \citep{cai2019nonconvex,tong2022scaling}, second-order moment method \citep{xia2019polynomial} can indeed provide a desired initial estimator within polynomial time but requires $\tilde O_{\mu,\r,\kappa}(\sqrt{d_1d_2d_3} + d)$ samples, where $d = \max\{d_1,d_2,d_3\}$. 
	
	The COSTCO method of Ibriga and Sun \cite{ibriga2022covariate} initializes the shared component using the covariate matrix and the non-shared components using the robust tensor power method \cite{anandkumar2014guaranteed}. Our side-information-aided spectral initialization is instead designed for the reduced matrix-regression formulation and is covered by our theoretical analysis.
	We summarize the result in the next theorem.

	\begin{theorem}\label{thm:init}
		Suppose Assumption \ref{assump:noise} is satisfied. And the following conditions hold:
		\begin{enumerate}[label={(\arabic*)}]
			\item Sample size: $n\gtrsim (d_3r_1\vee d_2)rr^*\log d\cdot \kappa^4\big(\mu^3\vee(\mu_s\mu^3)^{1/2}\big)$;
			\item Accuracy of side information: $\delta \leq \frac{1}{2}\kappa^{-2}$;
			\item Signal-to-noise ratio: $\frac{\lambda_{\min}}{\sigma}\gtrsim\sqrt{\frac{d^*r(d_2\vee d_3r_1)\log d}{n}}\kappa $. 
		\end{enumerate}
		Then with probability exceeding $1-2d^{-10}$, 
		the output of Algorithm \ref{alg:init} satisfies $\fro{\M_0-\M^*}\leq c_1\lambda_{\min}$ for some small absolute constant $c_1 >0$, and $\incoh(\M_0)\leq3\mu$. 
	\end{theorem}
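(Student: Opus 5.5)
We would prove Theorem \ref{thm:init} in three stages: control the operator-norm deviation of the spectral estimator from the population matrix, bound the bias coming from the inaccurate subspace, and then show that the truncation step in Algorithm \ref{alg:trunc} preserves closeness while enforcing incoherence. Write $\hat\M:=\frac{d^*}{n}\sum_{i=1}^n y_i\Y_i$ and $\bar\M:=\EE\hat\M=\M^*(\I_{d_3}\otimes \R^\top\U^{*\top}\U_s)$. We split
\begin{align*}
\hat\M-\M^* = \underbrace{\Big(\tfrac{d^*}{n}\sum_{i=1}^n [\bcalT^*]_{\omega_i}\Y_i-\bar\M\Big)}_{\text{sampling}} + \underbrace{\tfrac{d^*}{n}\sum_{i=1}^n\epsilon_i\Y_i}_{\text{noise}} + \underbrace{(\bar\M-\M^*)}_{\text{bias}}.
\end{align*}

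\emph{Bias term.} We have $\op{\bar\M-\M^*}\le \lambda_{\max}\op{\I-\R^\top\U^{*\top}\U_s}$. Using the SVD $\U_s^\top\U^*=\O_1\bSigma_0\O_2^\top$ and $\R=\O_2\O_1^\top$, the matrix $\R^\top\U^{*\top}\U_s$ equals $\O_1\bSigma_0\O_1^\top$. Its deviation from $\I$ has norm $1-\cos\theta_{\max}\le \sin^2\theta_{\max}\le\delta^2/2$. Under condition (2) this is at most $\lambda_{\max}\kappa^{-4}/8\le\lambda_{\min}\kappa^{-3}/8$. This holds only in operator norm, so we will use it after projecting onto rank $2r$, where it contributes at most $\sqrt{2r}\,\lambda_{\max}\delta^2$ in Frobenius norm. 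Since $\delta\le\frac12\kappa^{-2}$, that contribution is a small multiple of $\lambda_{\min}$ once $r$ is absorbed; if needed we would refine the estimate using $\delta\lesssim r^{-1/2}$.

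\emph{Sampling and noise terms.} We apply matrix Bernstein to the i.i.d.\ summands $d^*[\bcalT^*]_{\omega}\Y_\omega$. The matrix $\Y_\omega=\e_k(\e_g\otimes\U_s^\top\e_l)^\top$ has rank one, with $\op{\Y_\omega}=\ltwo{\U_s^\top\e_l}\le\sqrt{\mu_s r_1/d_1}$. Incoherence gives $|[\bcalT^*]_\omega|\le\lambda_{\max}\sqrt{\mu^3r^*/d^*}$ up to the factor $r$ relating $\lambda_{\max}$ to the Frobenius norm of the core. The two variance proxies are $\op{\EE\sum(\cdot)(\cdot)^\top}$, which has the $d_2$-side scaling $\lesssim \frac{d^*}{n}\lambda_{\max}^2 r\mu^2/d_2\cdot$(ranks), and the other side, which scales with $d_3r_1$. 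Because every quantity lives on the reduced $d_2\times d_3r_1$ space, the bound becomes
\[
\op{\text{sampling}}\lesssim \lambda_{\max}\sqrt{\tfrac{(d_2\vee d_3r_1)r^*\mu^3\log d}{n}}+\lambda_{\max}\tfrac{\sqrt{d^*}\sqrt{\mu_s\mu^3 r_1r^*/d_1}}{n}\log d .
\]
The noise term is handled by a subgaussian matrix Bernstein inequality, which gives $\sigma\sqrt{d^*(d_2\vee d_3r_1)\log d/n}$. Conditions (1) and (3) are designed so that both bounds are $\le c\lambda_{\min}/(\sqrt r\kappa)$. Wedin's theorem and the standard inequality $\fro{\svd_r(\hat\M)-\M^*}\le C\sqrt r\op{\hat\M-\M^*}$ (plus the bias bound above) then give $\fro{\tilde\M_0-\M^*}\le c\lambda_{\min}/\kappa$ and $\fro{\tilde\L_0\tilde\L_0^\top-\L^*\L^{*\top}}\le c/\kappa$. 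The same holds for $\tilde\R_0$. Here $\L^*,\R^*$ are the singular spaces of $\M^*$; these inherit incoherence $\mu$ from $\V^*$ and from $\W^*\otimes\R$ (note $\incoh(\W\otimes\R)\le\incoh(\W)$ in dimension $d_3r_1$).

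\emph{Truncation.} This is a standard lemma, cf.\ \cite{cai2022provable}. Rows of $\tilde\L_0$ exceeding $\sqrt{\mu r/d_2}$ are shrunk toward the true incoherent row norms, so the shrinking can only decrease the distance to $\L^*\O$. Hence $\fro{\U_0-\L^*\O}\le\fro{\tilde\L_0-\L^*\O}$ and $\sigma_{\min}(\U_0)\ge 1-c$, which yields $\incoh(\trunc(\tilde\L_0,\mu))\le\mu/(1-c)^2\le3\mu$ and a constant-factor perturbation bound. Finally, $\M_0-\M^*$ is decomposed through the projections $\L_0\L_0^\top$ and $\R_0\R_0^\top$, giving $\fro{\M_0-\M^*}\lesssim \fro{\tilde\M_0-\M^*}+\lambda_{\max}(\fro{\L_0\L_0^\top-\L^*\L^{*\top}}+\fro{\R_0\R_0^\top-\R^*\R^{*\top}})$, which is at most $c_1\lambda_{\min}$. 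This is why we need the subspace errors to be $O(\kappa^{-1})$, which accounts for the $\kappa^4$ in condition (1).

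We expect the main obstacle to be the matrix Bernstein bookkeeping for the structured $\Y_i$. The row of $\U_s$ enters the bound, so $\mu_s$ appears; meanwhile the entries of $\bcalT^*$ are controlled by $\U^*$ rather than $\U_s$. Getting both variance terms to scale with $d_2\vee d_3r_1$, rather than with $d_1$, is delicate, and it is what forces the mixed factor $(\mu_s\mu^3)^{1/2}$.
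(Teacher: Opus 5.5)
Your route is the paper's: split $\hat\M-\M^*$ into sampling, noise and bias, control the first two by matrix Bernstein (and its subgaussian version, Theorem~\ref{thm:concentration:psi2}) in the reduced $d_2\times d_3r_1$ space, then apply Wedin, truncate, and recombine at the cost of a factor $\kappa$.

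The step that does not close as written is the bias. You bound $\op{\bar\M-\M^*}\le\lambda_{\max}\delta^2/2$ and then push it through $\fro{\svd_r(\hat\M)-\M^*}\le 2\sqrt{2r}\,\op{\hat\M-\M^*}$. This leaves a contribution $\sqrt{2r}\,\lambda_{\max}\delta^2$, i.e.\ up to $\frac{\sqrt{2r}}{4}\kappa^{-3}\lambda_{\min}$ under condition (2). Wedin and truncation need $\fro{\tilde\M_0-\M^*}\le c\lambda_{\min}/\kappa$, which this violates for large $r$ when $\kappa\asymp 1$. The fallback $\delta\lesssim r^{-1/2}$ you mention is not a hypothesis of Theorem~\ref{thm:init}; it enters only in Theorems~\ref{thm:main} and~\ref{thm:convergence}.

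The fix is the paper's centering at the population matrix. Since $\bar\M$ has rank at most $r$, $\op{\tilde\M_0-\bar\M}\le 2\op{\hat\M-\bar\M}$, so
\[
\fro{\tilde\M_0-\M^*}\le 2\sqrt{2r}\,\op{\hat\M-\bar\M}+\fro{\bar\M-\M^*},
\]
and the bias is then measured directly in Frobenius norm. Because $\bar\M-\M^*$ is the mode-2 unfolding of $\bcalC^*\times_1\O_1(\bSigma_0-\I)\O_2^\top\times_2\V^*\times_3\W^*$, you get $\fro{\bar\M-\M^*}\le\op{\calM_1(\bcalC^*)}\fro{\I-\bSigma_0}\le\lambda_{\max}\sum_i(1-\cos\theta_i)\le\lambda_{\max}\delta^2/2$, with no $\sqrt r$.

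Done this way, your second-order estimate, which uses the Procrustes rotation $\R$ built into $\M^*$, is sharper than the paper's $\fro{\N-\M^*}\le\lambda_{\max}\delta$. With the first-order bound, condition (2) only gives $\lambda_{\max}\delta\le\frac12\kappa^{-1}\lambda_{\min}$, which after the factor $\kappa$ lost in recombination is merely $O(\lambda_{\min})$.

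Your truncation argument is a valid, self-contained substitute for Lemma~\ref{lemma:truncation}. The row rescaling is the Euclidean projection onto a ball containing every row of $\L^*\O$, hence nonexpansive, and $\sigma_{\min}(\U_0)\ge 1-c$ then controls the re-orthonormalization. It needs only a Procrustes error below $1-1/\sqrt{3}$, rather than $d_{\rm F}\le 1/(16\pi)$.

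Smaller points:
\begin{itemize}
\item No extra factor $r$ is needed in $|[\bcalT^*]_\omega|\le\lambda_{\max}\sqrt{\mu^3r^*/d^*}$; bound it by $\op{\calM_1(\bcalC^*)}$ times the three row norms.
\item The variance computation is not delicate. Since $\sum_l\U_s^\top\e_l\e_l^\top\U_s=\I_{r_1}$, both proxies are at most $\mu^3r^*\lambda_{\max}^2(d_3r_1\vee d_2)/n$ and free of $\mu_s$. So $\mu_s$ enters only through $\op{\Y_i}\le\sqrt{\mu_sr_1/d_1}$ in the uniform term, which is what produces $(\mu_s\mu^3)^{1/2}$.
\item The subgaussian bound carries an additional $\frac{\log^{3/2}d}{n}\sqrt{\mu_sr_1d_2d_3d^*}\,\sigma$ term that you dropped.
\item $\R^*$ is only an $r$-dimensional subspace of the $r_1r_3$-dimensional column space of $\W^*\otimes\R^\top$. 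It therefore inherits only $\incoh(\R^*)\le\mu r_1r_3/r$ unless $r=r_1r_3$. The paper makes the same tacit step when invoking Lemma~\ref{lemma:truncation}.
\end{itemize}
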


	\subsection{Local Convergence}
	
	Once we obtain an initialization $\M_0$ in a small neighborhood of $\M^*$, Theorem \ref{thm:convergence} shows that Algorithm \ref{alg:rgrad} converges linearly up to a non-vanishing error floor induced by the side information and observation noise. Despite this additional error, accurate side information can reduce the sufficient sample-size requirement. The recovered tensor is then obtained by combining the final matrix estimate with the estimated subspace $\U_s$.
	
	\begin{theorem}\label{thm:convergence}
		Suppose Assumption \ref{assump:noise} holds. 
		The initialization $\M_0$ satisfies $\fro{\M_0 -\M^*}\leq c_0\lambda_{\min}$ for some small absolute constant $c_0>0$ and $\incoh(\M_0)\leq3\mu$. 
		Assume the following conditions hold:
		\begin{enumerate}[label={(\arabic*)}]
			\item Sample size: $n \geq C_1(d_2\vee d_3r_1)r_1^3r^4 \mu_s\mu^4\kappa^8\log d$,
			\item Accuracy of side information: $\delta\leq c_1r^{-1/2}\kappa^{-1}$,
			\item Signal-to-noise ratio: $\frac{\lambda_{\min}}{\sigma}\geq C_2\sqrt{\frac{d^*(d_2\vee d_3 r_1)r}{n}\log d}$
		\end{enumerate}
		for some absolute constants $C_1,C_2,c_1>0$.
		If we choose $l_{\max}\geq \lceil\log(C_3r\kappa)\rceil$,
		we have that the output of Algorithm \ref{alg:rgrad} satisfies 
		\begin{align*}
			\frac{1}{d^*}\fro{\hat\bcalT - \bcalT^*}^2 
			&\lesssim \frac{r}{d^*}\lambda_{\max}^2\delta^2 + \frac{(d_2\vee d_3r_1)r}{n}\log d\cdot\sigma^2\\
			&\quad + \frac{ rr^*\mu_s\mu^2}{n^2d_1}(\gamma^2\delta^2 \wedge \mu_s r_1)\lambda_{\max}^2\log^2 d
		\end{align*}
		with probability exceeding $1-10d^{-10}$. 
		
	\end{theorem}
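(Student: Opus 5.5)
The plan is a standard Riemannian-gradient contraction argument for the reduced matrix regression $\y=\calY(\M^*)+\bdelta+\beps$, where $\bdelta=(\delta_1,\dots,\delta_n)^\top$ and $\beps=(\epsilon_1,\dots,\epsilon_n)^\top$. At the end we translate the matrix error back to the tensor.

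\textbf{Step 1: one-step recursion.} Write $\M_l-\eta\calP_{\TT_l}\G_l-\M^*$ as
\[
(\M_l-\M^*)-\eta\calP_{\TT_l}\calY^*\calY(\M_l-\M^*)+\eta\calP_{\TT_l}\calY^*(\bdelta+\beps).
\]
Split the first part as
\[
(\calI-\eta\calP_{\TT_l}\calY^*\calY\calP_{\TT_l})(\M_l-\M^*)-\eta\calP_{\TT_l}\calY^*\calY\calP_{\TT_l}^\perp(\M_l-\M^*).
\]
The second piece is quadratic, because $\fro{\calP_{\TT_l}^\perp\M^*}\lesssim\fro{\M_l-\M^*}^2/\lambda_{\min}$.

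\textbf{Step 2: restricted isometry on tangent spaces.} The key lemma is
\[
\op{\tfrac{d^*}{n}\calP_{\TT}\calY^*\calY\calP_{\TT}-\calP_{\TT}\,\EE}\le \tfrac12
\]
for tangent spaces at incoherent rank-$r$ matrices. Here $\EE\,\calY^*\calY=\frac{n}{d^*}\calI$ restricted to the range $\{\X(\I\otimes\U_s^\top)\}$, since $\U_s^\top\U_s=\I$. The rows of $\M$ live in $\RR^{d_3r_1}$, and the row space is incoherent with parameter roughly $\mu_s\mu r_1$ because of the Kronecker structure with $\U_s$.

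I would prove this with matrix Bernstein, applied uniformly via a fixed-$\TT$ argument combined with the trimming and $\svd_r$ that keep $\incoh(\M_l)\le 3\mu$. The technique is that of \cite{cai2022provable,wei2016guarantees}. This is where the sample size $(d_2\vee d_3r_1)r_1^3r^4\mu_s\mu^4\kappa^8\log d$ enters. The dependence between $\TT_l$ and the samples is handled the same way as in those works: bound
\[
\sup\Big|\tfrac{d^*}{n}\ltwo{\calY(\X)}^2-\fro{\X}^2\Big|
\]
over incoherent low-rank differences, using the entrywise bound from trimming.

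\textbf{Step 3: noise and bias terms.} Bound $\eta\op{\calY^*\beps}$ by matrix Bernstein, using that $\Y_i$ has spectral norm $\le\sqrt{\mu_s r_1/d_1}$. This gives
\[
\fro{\eta\calP_{\TT}\calY^*\beps}\lesssim \sigma\sqrt{\frac{d^*(d_2\vee d_3r_1)r\log d}{n}}.
\]
For $\bdelta$, split $\eta\calY^*\bdelta$ into its expectation and a fluctuation. The expectation is
\[
\M^*\bigl(\I\otimes\R^\top\U^{*\top}(\U^*\R-\U_s)\bigr)\text{-type},
\]
of norm $\lesssim\sqrt r\lambda_{\max}\delta$ (indeed $\lesssim\delta^2$-order after alignment, but $\delta$ suffices). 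The fluctuation is bounded by Bernstein using $|\delta_i|\le\twoinf{\U_s-\U^*\R}\cdot\|\cdot\|_{\max}$-type bounds involving $\gamma\delta/\sqrt{d_1}$. The alternative bound via incoherence gives $\mu_s r_1$, which yields the third term with the $\wedge$.

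\textbf{Step 4: iterate and assemble.} Combining Steps 1–3, the contraction factor is at most $1/2$ when $\fro{\M_l-\M^*}\le c_0\lambda_{\min}$. Trimming is nonexpansive toward $\M^*$, since $\M^*$ has entries bounded by $\zeta$. The $\svd_r$ step at most doubles the error. The conditions on $\delta$ and on the signal-to-noise ratio keep the iterates inside the basin. Iterating from $\fro{\M_0-\M^*}\le c_0\lambda_{\min}$ for $l_{\max}\gtrsim\log(r\kappa)$ steps gives a bound of the form
\[
\fro{\M_l-\M^*}\lesssim \text{(noise)}+\sqrt r\lambda_{\max}\delta+\text{(fluctuation)}.
\]

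Finally, note that
\[
\hat\bcalT-\bcalT^*=\calM_2^{-1}(\M_l-\M^*)\times_1\U_s+\bcalC^*\times_1(\U_s\R^\top-\U^*)\times_2\V^*\times_3\W^*.
\]
The first term has Frobenius norm equal to $\fro{\M_l-\M^*}$. The second has norm at most $\lambda_{\max}\delta$, up to a constant. Squaring and dividing by $d^*$ gives the claim.

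The main obstacle is Step 2, namely the uniform restricted isometry for the structured measurements $\Y_i$ with the side-information Kronecker factor. I would handle it via the incoherence of $\M_l$ maintained by trimming.
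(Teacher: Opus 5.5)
Your template is sound: the one-step recursion, the noise and bias bounds of Step 3, nonexpansive trimming, and the final identity $\hat\bcalT-\bcalT^*=\calM_2^{-1}(\M_l-\M^*)\times_1\U_s+\bcalC^*\times_1(\U_s\R^\top-\U^*)\times_2\V^*\times_3\W^*$. The gap is the key lemma of Step 2, which is false at the sample size the theorem allows.

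Because $\TT_l$ depends on the samples, you would need $\op{\frac{d^*}{n}\calP_{\TT}\calY^*\calY\calP_{\TT}-\calP_{\TT}}\le\frac12$ uniformly over tangent spaces at incoherent rank-$r$ matrices near $\M^*$. This fails whenever $n\ll d_2d_3/(\mu r\kappa^2)$. To see it, fix $g\in[d_3]$ and let $S=\{k_i:g_i=g\}$, so $|S|\approx n/d_3$. Add $\sqrt{\mu r/d_2}$, with matching signs, to the entries of the first column of $\L^*$ indexed by $S$, and re-orthonormalize. This gives an $O(\mu)$-incoherent $\L_l$ and $\M_l=\L_l\S^*\R^{*\top}$ with $\fro{\M_l-\M^*}\lesssim\lambda_{\max}\sqrt{n\mu r/(d_2d_3)}\le c_0\lambda_{\min}$. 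Take the unit direction $\Z=\a(\e_g\otimes\v)^\top\in\TT_l$, where $\a$ is the first column of $\L_l$ and $\v\in\RR^{r_1}$ is a unit vector. Every sample with $g_i=g$ lands on a perturbed row, so
\begin{align*}
\frac{d^*}{n}\ltwo{\calY(\Z)}^2=\frac{d^*}{n}\sum_{i:\,g_i=g}[\a]_{k_i}^2\inp{\U_s^\top\e_{l_i}}{\v}^2\approx\frac{d^*}{n}\cdot\frac{n}{d_1d_3}\cdot\frac{1+\mu r}{d_2}=1+\mu r\ge 2 .
\end{align*}
No uniformity argument can repair a false statement. In particular, the entrywise-bounded empirical process you mention does not help: a generic unit $\Z\in\TT_l$ only has $\linf{\Z}\lesssim\sqrt{\mu_1 r/(d_2\wedge d_3r_1)}$, and Lemma \ref{lemma:empirical} is then useful only for $n\gtrsim d_2d_3$, up to rank, incoherence and log factors.

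The same issue affects Step 1. The piece $\eta\calP_{\TT_l}\calY^*\calY\calP_{\TT_l}^{\perp}(\M_l-\M^*)$ cannot be dismissed by an operator-norm bound, because $\eta\op{\calY^*\calY}\ge\frac{d^*}{n}\max_i\ltwo{\U_s^\top\e_{l_i}}^2\gtrsim d_2d_3r_1/n$, which is of order $d_2\wedge d_3r_1$ here.

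The paper never uses an isometry on $\TT_l$. It expands $\fro{\M_l-\M^*-\eta\calP_{\TT_l}\G_l}^2$ and bounds two quantities. First, it lower-bounds $\inp{\M_l-\M^*}{\calP_{\TT_l}\G_l}$, where $\calY$ acts only on the specific error $\M_l-\M^*$. This uses the isometry on the fixed space $\TT^*$ (Lemma \ref{lemma:pty}) plus Lemma \ref{lemma:empirical} for $\calP_{\TT^*}^{\perp}\M_l$, which is entrywise small and has nuclear norm $\lesssim\sqrt r\fro{\M_l-\M^*}^2/\lambda_{\min}$. Second, it upper-bounds $\fro{\calP_{\TT_l}\G_l}$, where an arbitrary test direction enters only once. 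Both rest on the uniform bilinear bound of Lemma \ref{lemma:YYPt}, $|\inp{(\calY^*\calY-\frac{n}{d^*}\calI)\A}{\B}|\lesssim(\cdots)\nuc{\calL(\A,\B)}$, together with Lemma \ref{lemma:LAB}. The $\min$ structure of Lemma \ref{lemma:LAB} lets you charge the Frobenius norm to the small or arbitrary factor (for example $\L_l-\L^*$, the factors of $\calP_{\TT_l}^{\perp}\M^*$, or the test direction) and the two-to-infinity norm to an incoherent one. That pairing produces the $1/\sqrt{d_2d_3r_1}$ factor matching linear sample size. Pieces sharing a fixed factor $\L^*$ or $\R^*$ are handled by Lemma \ref{lemma:YYLR}.

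Two smaller points also need fixing. A contraction factor of $\frac12$ followed by ``$\svd_r$ at most doubles the error'' gives no contraction at all. The paper instead uses Lemma \ref{lemma:perturbation}, $\fro{\svd_r(\W_l)-\M^*}^2\le\fro{\W_l-\M^*}^2+C\fro{\W_l-\M^*}^3/\lambda_{\min}$, against a squared contraction factor of $0.09$. Also, trimming plus $\svd_r$ does not preserve $\incoh(\M_l)\le3\mu$; it only gives $\incoh(\M_l)\le\mu_1=4\mu^2\kappa^2r$ (Lemma \ref{lemma:spiki-incoh}). That weaker bound is where the $\mu^4\kappa^8$ in the sample-size condition comes from.
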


	\section{Simulations}\label{sec:simulation}
	In this section we perform numerical experiments to verify our theoretical results. 
	We compare our algorithm \approach mainly with COSTCO \citep{ibriga2022covariate} that incorporates the side information in a different way, and standard tensor completion algorithm using Riemannian gradient descent \citep{cai2022provable}, which we shall refer to TC in the following.  
	
	\subsection{Statistical Error}
	In the first experiment, we inspect the performance of our algorithm when noise level, or the accuracy of the side information changes.
	We generate a random tensor $\bcalT^* \in \RR^{d\times d\times d}$ with $d=30$ and Tucker rank $\r= (2,2,2)$ by first constructing a tensor from factors and a core with i.i.d. $\text{Unif}([0,1])$ entries, and then performing HOSVD to obtain the standardized decomposition $\bcalT^*= \bcalC^*\times_1\U^*\times_2\V^*\times_3\W^*$. 
	This low rank tensor $\bcalT^*$ has its incoherence bounded by 3 and $\lambda_{\max}\approx 50, \lambda_{\min}\approx 10$. 
	The side information is provided in the form of a matrix $\M_s = \U^*\C + \E_s \in\RR^{d \times n_s}$, whose left singular vectors are coupled with the first mode of the tensor. 
	Here $\C\in\RR^{r_1\times n_s}$ represents the coefficients having i.i.d. $N(0,1)$ entries, and $\E_s$ is the noise having i.i.d. $N(0,\sigma_s^2)$ entries. 
	We extract the subspace $\U_s = \svd_{r}(\M_s)$. The accuracy of side information measured in $\delta = \fro{\U_s\U_s^\top - \U^*\U^{*\top}}$.
	We denote the output of our algorithm by $\hat\bcalT$. 

	We fix the sample size $n = 1000$. For the side information, we fix $n_s = 1000$, and we vary $\sigma_s\in\{0.1,0.3,0.5\}$. 
	The resulting $\delta$ takes the value within $\{0.037, 0.115, 0.204\}$ respectively.
	We also vary the noise level 
	$$\sigma\in\{0, 10^{-4}, 5\cdot 10^{-4}, 10^{-3}, 5\cdot 10^{-3}, 0.01,0.02,\cdots,0.1\}.$$
	For each $(\delta, \sigma)$ pair, we conduct 30 independent trials and record the final error $\fro{\hat \bcalT - \bcalT^*}$. 
	Figure \ref{fig:change_delta} illustrates the result. 
	As we can see from this figure, for each given $\delta$, there is a flat region when the noise is small. 
	Even in the case when the noise is zero, the final statistical error $\fro{\hat \bcalT - \bcalT^*}$ does not vanish. This is due to the bias of $\U_s$, and is consistent with our theory. 
	And the quality of side information greatly influences the final error rate if the noise is small, but has a weaker impact as noise gets larger, which also aligns with our theory. 
	Finally, the plot indicates that the final error scales linearly with the noise level.
	In terms of computational efficiency, we further observed that the runtime remained robust to variations in noise and side information quality, averaging approximately 0.69 seconds per trial.
	
	\begin{figure}[htbp]
		\begin{minipage}{0.48\textwidth}
			\centering
			\includegraphics[width=\linewidth]{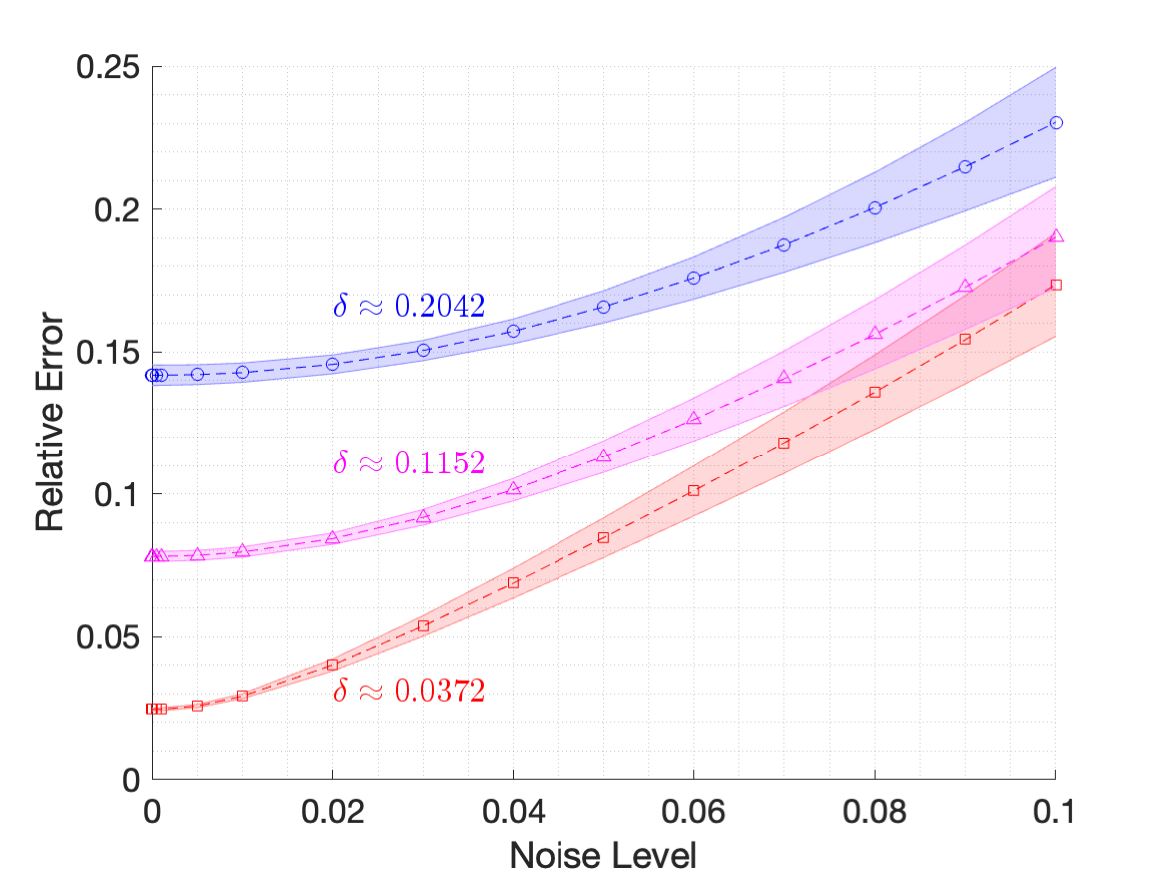}
			\caption{Performance with respect to the change of $\delta$ and $\sigma$. }
			\label{fig:change_delta}
		\end{minipage}
		\hfill
		\begin{minipage}{0.48\textwidth}
			\centering
			\includegraphics[width=\linewidth]{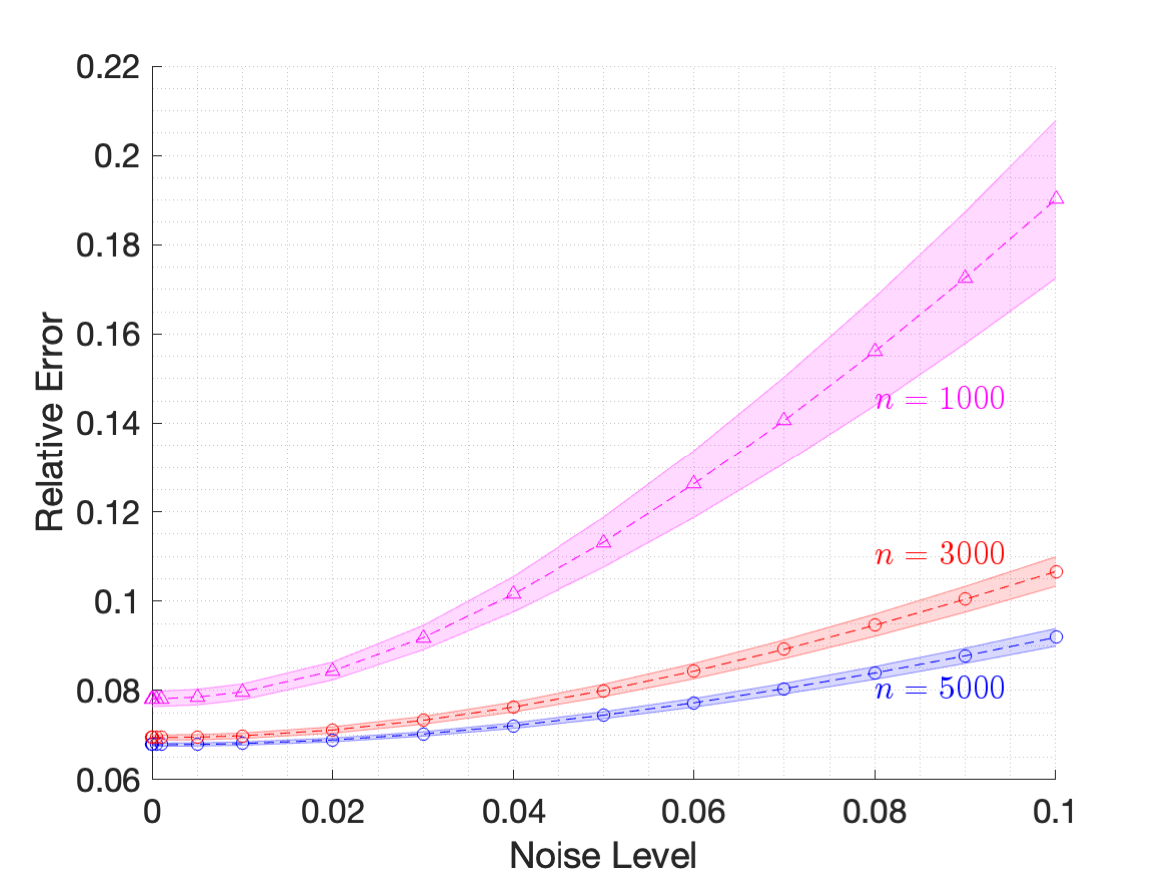}
			\caption{Performance with respect to the change of $n$ and $\sigma$. }
			\label{fig:change_n}
		\end{minipage}
	\end{figure}
	
	In the second experiment, we fix the accuracy of the subspace $\delta \approx 0.115$, and we alter the sample size $n\in\{1000,3000,5000\}$, the noise level 
	$$\sigma\in\{0, 10^{-4}, 5\cdot 10^{-4}, 10^{-3}, 5\cdot 10^{-3}, 0.01,0.02,\cdots,0.1\}.$$
	For each $(\delta, \sigma)$ pair, we conduct 30 independent trials and record the final error $\fro{\hat \bcalT - \bcalT^*}$. 
	Figure \ref{fig:change_n} displays the final result. We can conclude from this figure when the sample size gets larger, the final error rate becomes smaller. Moreover, the width of the error becomes narrower as sample size increase, indicating the reduction in the variance of $\fro{\hat \bcalT - \bcalT^*}$. 

	\subsection{Phase Transition}
	In this experiment, we test the recovery ability of the proposed algorithm in the framework of phase transition, which compares the number of measurements $n$, and a cubic tensor of size $d\times d\times d$ of Tucker rank $\r = (2,2,2)$. 
	For the phase transition, the side information is generated in the following way: $\M_s = \U^*\C + \E_s$, where $\U^*\in\RR^{d\times r}$ is the subspace along the first mode of $\bcalT^*$, $\C\in\RR^{r\times n_s}$ having i.i.d. $N(0,1)$ entries is the coefficient, and $\E_s$ is the Gaussian noise having i.i.d. $N(0,\sigma_s^2)$ entries. In this experiment, we set $n_s = 1000$, and $\sigma_s = 0.01$. 
	For each $(d,n)$ pair, we conduct 30 random instances, and a test is considered to be successful if $\fro{\hat\bcalT - \bcalT^*}/\fro{\bcalT^*}<0.01$. 
	To compare the empirical sampling behavior of the methods, we also present phase-transition plots for TC and COSTCO.
	We use $\M_s$ as the side information for COSTCO. 
	For these two algorithms, we vary $d\in\{20,25,\cdots,50\}$, and the sample complexity is adjusted so that we can see the phase transition line. The results are displayed in Figure \ref{fig:phasetransition_TC} and \ref{fig:phasetransition_COSTCO}.
	For our algorithm, we vary $d\in\{20,30,\cdots, 100\}$, and the sample complexity $n\in\{500, 1000,\cdots, 4000\}$. The result is displayed in Figure \ref{fig:phasetransition}. 
	
	\begin{figure}[H]
		\begin{minipage}{0.48\textwidth}
			\centering
			\includegraphics[width=\linewidth]{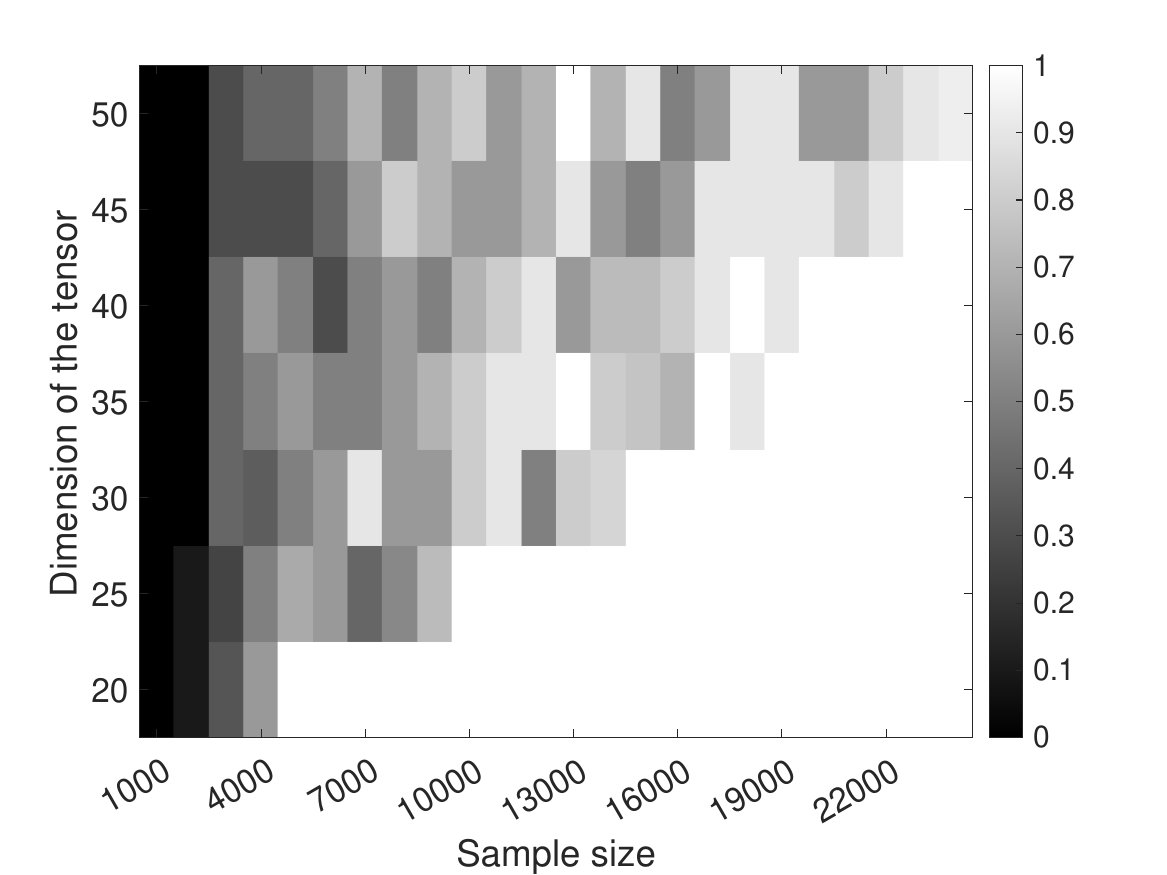}
			\caption{Phase transition of TC}
			\label{fig:phasetransition_TC}
		\end{minipage}
		\hfill
		\begin{minipage}{0.48\textwidth}
			\centering
			\includegraphics[width=\linewidth]{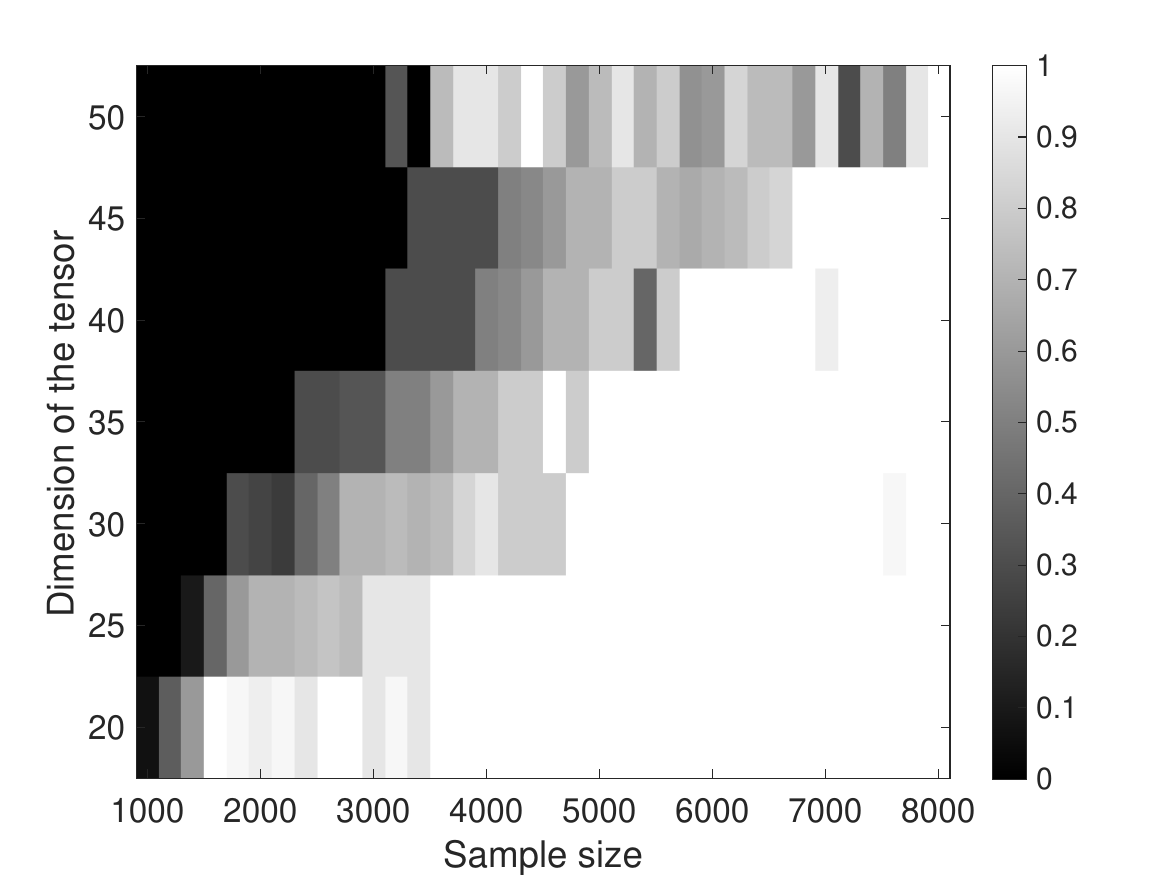}
			\caption{Phase transition of COSTCO }
			\label{fig:phasetransition_COSTCO}
		\end{minipage}
	\end{figure}

	\begin{figure}[H]
		\centering
		\includegraphics[width=0.5\textwidth]{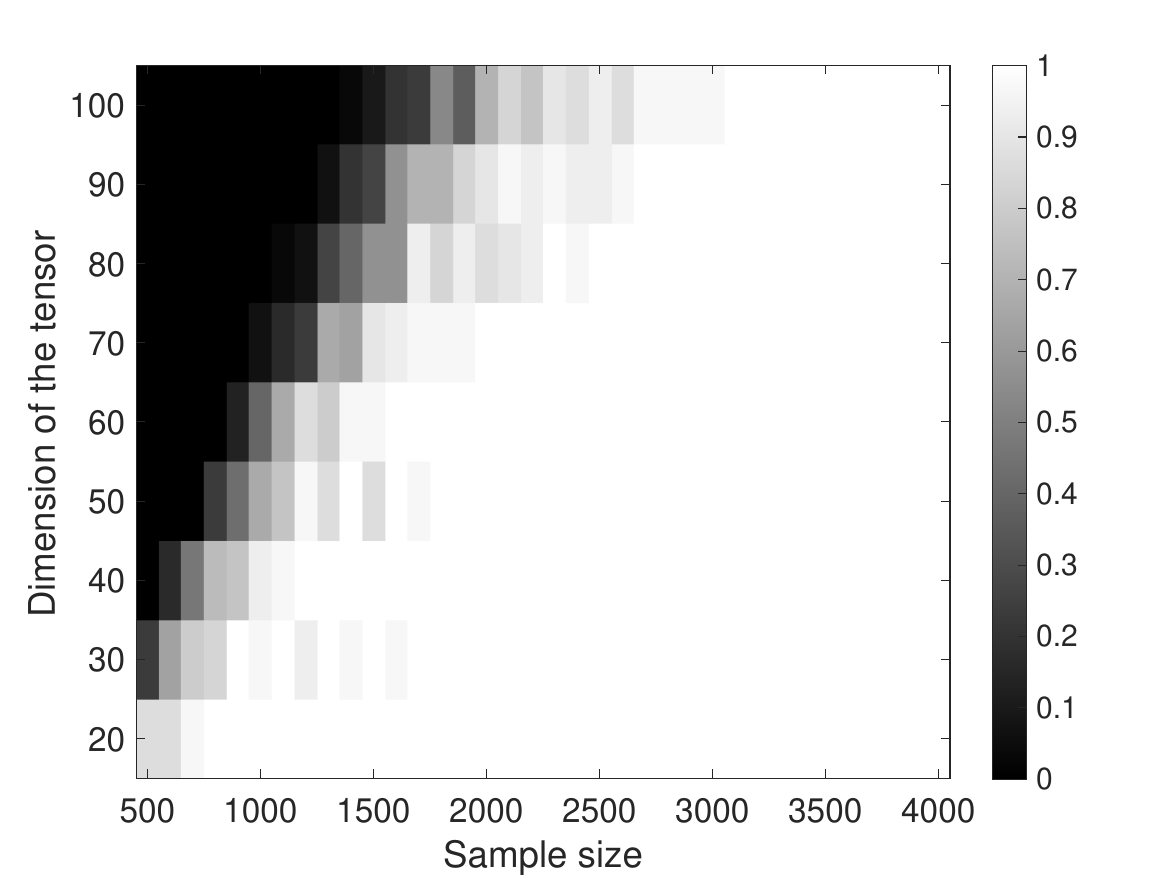}
		\caption{Phase transition of \approach}
		\label{fig:phasetransition}
	\end{figure}
	
	In these experiments, TC requires the most observations for successful recovery, COSTCO requires fewer, and \approach requires the fewest among the three tested methods. The approximately linear phase-transition boundary for \approach is consistent with the dimensional dependence predicted by our theory.

	\section{Real-World Application: Global Ionospheric TEC Tensor Completion}\label{sec:realdata}

	In this section, we apply the proposed \approach methodology to a realistic problem in space weather forecasting: the reconstruction of global Total Electron Content (TEC) maps. Global TEC maps are critical for calibrating global navigation satellite systems (GNSS). However, since TEC data is primarily collected by land-based GNSS receivers, the raw observations are sparse and non-uniform \cite{rideout2006automated}, featuring massive continuous data gaps over oceans and remote regions\footnote{The raw TEC dataset utilized in these experiments is publicly available at (\url{https://deepblue.lib.umich.edu/data/concern/data_sets/nc580n00z}).}. 

	\vspace{0.2cm}
	
	\noindent\textbf{\emph{Data-Adaptive Safeguard via Sample Splitting.}} 
	Because the theoretical subspace distance $\delta$ is unknown in practice, we propose a data-adaptive, sample-splitting safeguard to prevent the incorporation of misleading side information. We randomly partition the observed entries $\Omega_{\text{obs}}$ into a training set $\Omega_{\text{train}}$ and a validation set $\Omega_{\text{val}}$. Both the proposed \approach and a baseline estimator (e.g., vanilla Riemannian gradient descent) are computed exclusively on $\Omega_{\text{train}}$. We then evaluate their prediction errors on the held-out $\Omega_{\text{val}}$. We accept the side information only if \approach achieves a lower error or if the baseline fails to converge due to extreme sparsity. Otherwise, we revert to the baseline, ensuring the prior is utilized solely when it empirically enhances generalization on unseen data.

	To comprehensively evaluate our method, we conduct two sets of experiments: (1) a semi-synthetic simulation under controlled uniform sparsity, and (2) a real data completion task under operational structured missingness. 
	Throughout these experiments, we benchmark our algorithm against two existing methods: traditional tensor completion (TC) via Riemannian gradient descent, which does not utilize auxiliary data, and COSTCO \cite{ibriga2022covariate}, a coupled tensor completion algorithm that incorporates side information.

	\subsection{Motivation and Physically Grounded Side Information}
	
	We aim to reconstruct the global TEC distribution for a specific target day (December 25, 2019). After preprocessing, the target data is formulated as a third order tensor $\bcalT^* \in \RR^{45 \times 90 \times 72}$, where the three modes correspond to latitude ($d_1 = 45$), longitude ($d_2 = 90$), and time of day ($d_3 = 72$, representing measurements recorded at 20-minute intervals). 

	To extract a physically meaningful spatial subspace, we derive our side information exclusively from historical data spanning the preceding week (December 18 to December 24, 2019). Specifically, we compute the temporal average of the 7-day historical continuous TEC maps pre-imputed by the well-established VISTA algorithm \citep{sun2023complete} to obtain a two-dimensional aggregated spatial matrix.
	From a geophysical perspective, this averaging effectively filters out transient daily ionospheric fluctuations (e.g., short-term irregularities and local geomagnetic disturbances) while preserving the highly stable, macroscopic spatial background governed by the Earth's geomagnetic field. To explicitly extract the subspace information along the second direction (longitude), we apply SVD directly to this averaged matrix. By examining the scree plot of the singular values, we truncate the subspace at rank $10$. This procedure yields the spatial subspace $\U_s \in \RR^{90 \times 10}$, providing a physically grounded prior for the target day.
	
	\subsection{Semi-Synthetic Evaluation under Controlled Uniform Sparsity}
	To compare empirical performance under controlled sparsity, we design a semi-synthetic experiment using the VISTA-imputed TEC map \cite{sun2023complete} as a proxy ground truth. This preserves authentic spatial-temporal ionospheric structures while enabling exact global error calculation. We uniformly sample the target tensor $\bcalT^*$ at $1\%$ and $10\%$ observation ratios and compare \approach with TC via Riemannian gradient descent and the coupled method COSTCO \cite{ibriga2022covariate}.

	Table \ref{table:tec_relative_error} summarizes the quantitative results. At the $1\%$ observation rate, TC diverges and COSTCO has a large relative error in this experiment, whereas \approach attains a relative error of $0.254$. At the $10\%$ observation rate, \approach also has the lowest error among the three methods. These results are consistent with, but do not by themselves establish, the theoretical sample-size comparison.

	\begin{table}[ht]
		\centering
		
		\begin{tabular}{lccc}
			\hline
			{Sampling Rate} & {TC} & {COSTCO} & \textbf{\approach (Ours)} \\
			\hline
			$1\%$  & \text{Diverges} & 8.99 & \textbf{0.254} \\
			$10\%$ & 0.134 & 0.167 & \textbf{0.124} \\
			\hline
		\end{tabular}
		\caption{Relative error of TEC map reconstruction under different uniform sampling rates. The best results are highlighted in \textbf{bold}.}
		\label{table:tec_relative_error}
	\end{table}

	Qualitatively, Figures \ref{fig:tec_slice10} and \ref{fig:tec0_01_10} compare the proxy ground truth and algorithmic reconstructions at the 10th temporal slice. At 1\% sparsity, \approach preserves the spatial topology. Furthermore, at $10\%$, only \approach accurately recovers the localized high-intensity TEC regions (central red areas), whereas the baseline methods fail to capture these peak intensities, yielding heavily blurred and over-smoothed structures.

	\begin{figure}[H]
		\centering
		\includegraphics[width=0.5\textwidth]{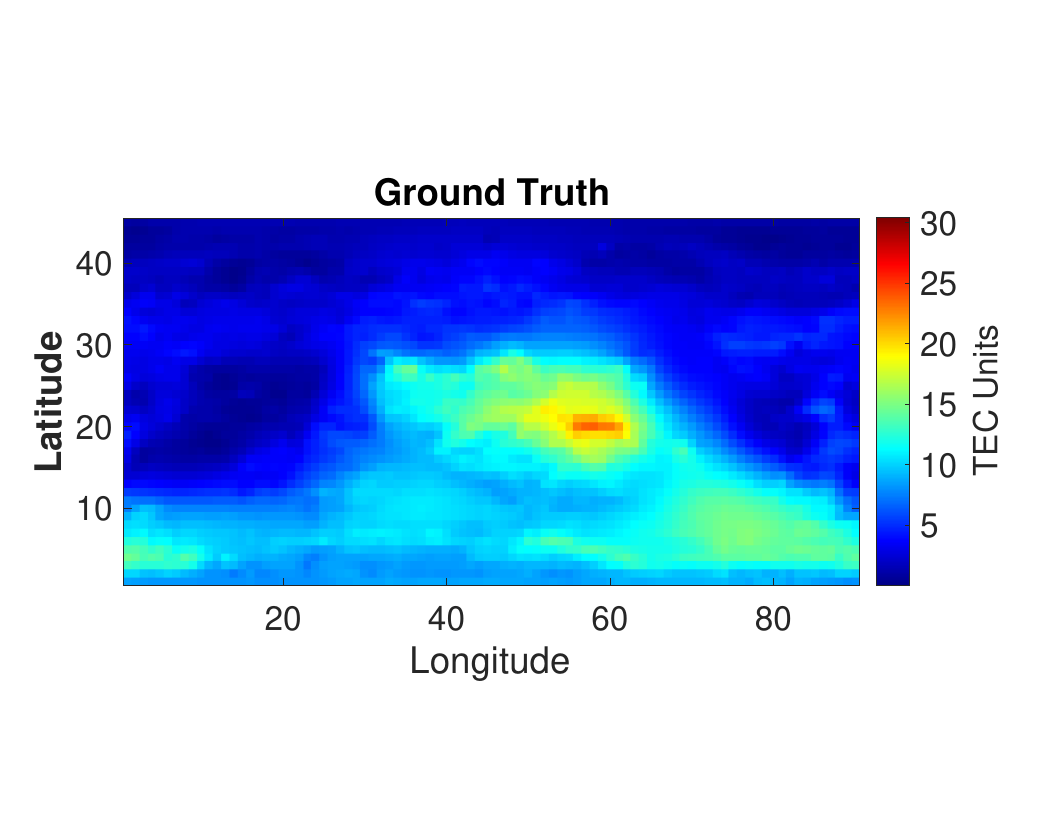}
		\vspace{-1cm}
		\caption{Proxy ground truth of the global TEC map at the 10th temporal snapshot.}
		\label{fig:tec_slice10}
	\end{figure}
	
	\begin{figure}[H]
		\centering
		\includegraphics[width=\textwidth]{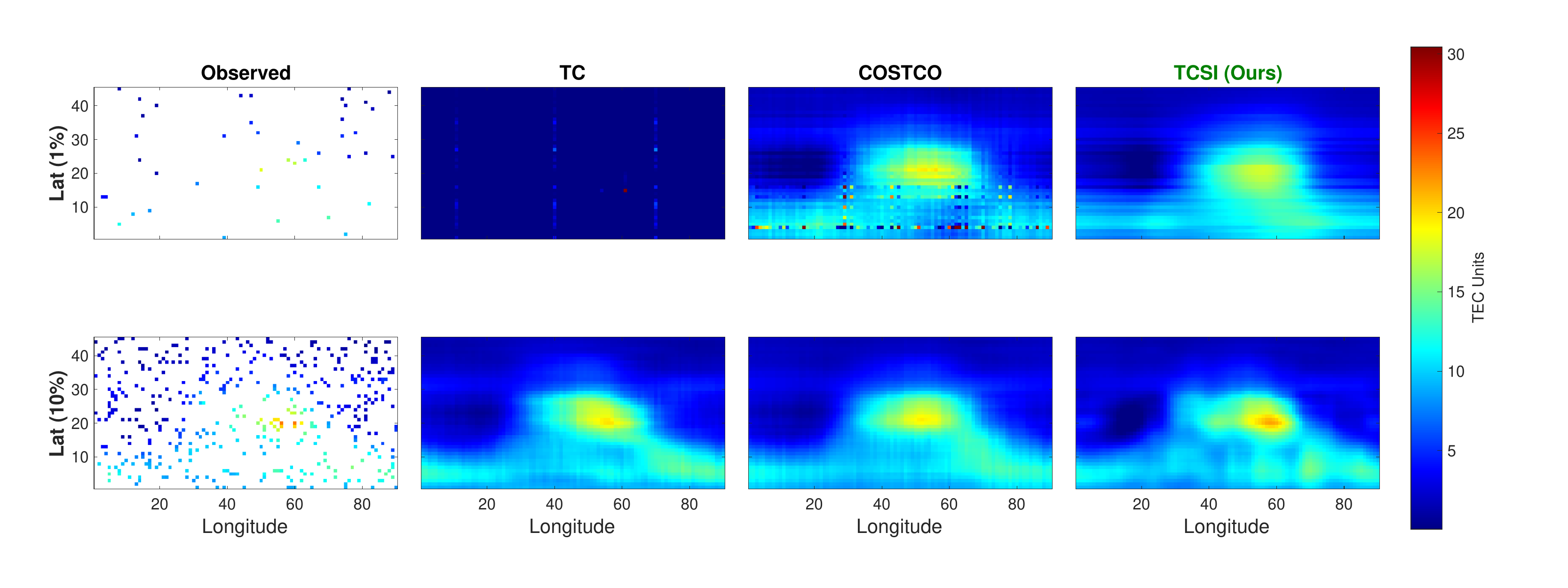}
		
		\caption{Visual comparison of TEC map reconstructions under 1\% and 10\% uniform sampling rates.}
		\label{fig:tec0_01_10}
	\end{figure}

	\subsection{Real Data Completion under Structured Missingness}
	Unlike the uniform sampling in the previous section, real-world GNSS observations suffer from severe structured (block) missingness. After preprocessing, the overall observation rate of the raw TEC data on December 25, 2019, sourced from the Madrigal TEC database \cite{rideout2006automated}, is approximately $16\%$. Crucially, these observations are heavily clustered over landmasses, leaving extensive continuous data gaps over the oceans. In this scenario, the subspace information $\U_s$ serves as a valuable structural prior to help bridge these unobservable regions.

	Because the true global dense tensor is unknown, we evaluate the models strictly on the available non-uniform observations, $\Omega_{\text{obs}}$. To execute our data-adaptive safeguard and objectively assess performance, we randomly partition $\Omega_{\text{obs}}$ into training ($90\%$), validation ($5\%$), and test ($5\%$) sets. We first invoke the safeguard by comparing prediction errors on the validation set: \approach achieves a slightly lower relative error ($0.126$) than the baseline TC ($0.128$). This satisfies the safeguard condition to accept $\U_s$, confirming that the underlying physical spatial basis remains consistent across consecutive days.
	
	Having validated the side information, we evaluate all three algorithms on the strictly held-out test set. As shown in Table \ref{table:tec_real_missing}, \approach achieves the lowest test relative error ($0.117$). Furthermore, the visual comparison in Figure \ref{fig:tec_real_missing} demonstrates that \approach provides a more refined and structurally continuous recovery over the oceanic gaps compared to the baseline methods. 
	
	\begin{table}[ht]
		\centering
		\begin{tabular}{ccc}
			\hline
			{TC} & {COSTCO} & \textbf{\approach (Ours)} \\
			\hline
			0.124 & 0.156 & \textbf{0.117} \\
			\hline
		\end{tabular}
		\caption{Relative error on held-out real TEC observations.}
		\label{table:tec_real_missing}
	\end{table}
	\begin{figure}[H]
		\centering
		\includegraphics[width=\textwidth]{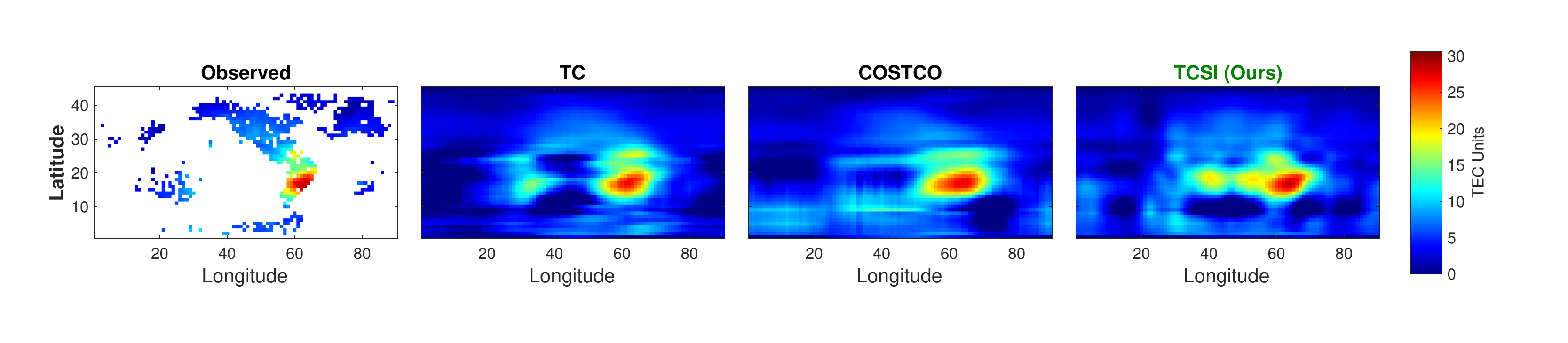}
		\caption{Visual comparison of real-world TEC map reconstructions.}
		\label{fig:tec_real_missing}
	\end{figure}
	
	Overall, these experiments illustrate the practical utility of incorporating side information for this TEC reconstruction task. The spatial subspace $\U_s$ provides a useful structural prior under sparse, block-missing observations, and \approach achieves lower held-out errors than the compared methods on this dataset.

	\bibliographystyle{plain}
	
	\bibliography{Bibliography-MM-MC}

	\clearpage
	\appendix
	\pdfbookmark[0]{Appendices}{appendices}
	\section*{Appendices}
	
	\section{Additional Algorithm}\label{app:alg}
	\begin{algorithm}[H]
		\caption{Higher-order Singular Value Decomposition (HOSVD)}
		\begin{algorithmic}\label{alg:hosvd}
			\STATE{\textbf{Input: }$\bcalT\in\RR^{d_1\times d_2\times d_3}$, target rank $\r = (r_1,r_2,r_3)$}
			\STATE{$\U = \svd_{r_1}(\calM_1(\bcalT))$}
			\STATE{$\V = \svd_{r_2}(\calM_2(\bcalT))$}
			\STATE{$\W = \svd_{r_3}(\calM_3(\bcalT))$}
			\STATE{$\bcalC = \bcalT\times_1\U^\top\times_2\V^\top\times_3\W^\top$}
			\STATE{\textbf{Output: }$\bcalC\times_1\U\times_2\V\times_3\W$}
		\end{algorithmic}
	\end{algorithm}

	\section{Additional Real Data Applications}
	In this section, we examine the performance of \approach in two real world datasets: traffic data, and Walmart sales data. 
	We report the empirical behavior of \approach, TC, and COSTCO \citep{ibriga2022covariate} on two additional datasets. At the lowest sampling rate considered in the traffic experiment, TC and COSTCO do not converge in our implementation, whereas \approach does. At higher sampling rates, \approach is competitive with the compared methods. These comparisons are specific to the datasets and implementations considered here.
	
	\subsection{Traffic Data}
	PeMS  (Performance Measurement System) collected the traffic data from over 39,000 individual detectors located across all major metropolitan areas of the State of California \footnote{The dataset is publicly available in \url{https://pems.dot.ca.gov}}.
	We sub-sampled the speed data from 21 road segments over 207 days, where the data are sampled at 5-minute interval.
	And we obtain an original tensor data of size $288\times 21\times 207$. 
	Our aim is to complete the future data with the assistance of old data as side information. 
	More specifically, we split the data into a fully observed matrix $\M_s\in\RR^{288\times 21}$ containing the data on the first day, and the remaining tensor data of size $\bcalT^*\in\RR^{288\times 21\times 206}$ with $\alpha$ of its entries revealed, where $\alpha\in\{1\%, 3\%, 5\%\}$. We test the performance of different algorithms by comparing the relative errors. 
	In fact, the side information and the original tensor are coupled along both first and the second dimension of the tensor $\bcalT^*$, we test the performances of \approach by extracting the subspace along 1st dimension/2nd dimension/both dimensions. 
	We measure the performances of different algorithms using the relative error $\fro{\hat\bcalT - \bcalT^*} / \fro{\bcalT^*}$, where $\hat\bcalT$ is the output of the algorithm. 
	We also plot the performance of TC and COSTCO. 
	In our implementation, COSTCO and TC do not converge when only $1\%$ of the entries are revealed. We plot their relative errors versus iteration in Figure \ref{fig:traffic-TCCOSTCO}. \approach converges in this experiment, and Table \ref{table:traffic:relerr} reports its final relative error for three ways of using the side information.
	
	\begin{figure}[htbp]
		\begin{minipage}{0.32\textwidth}
			\centering
			\includegraphics[width=\linewidth]{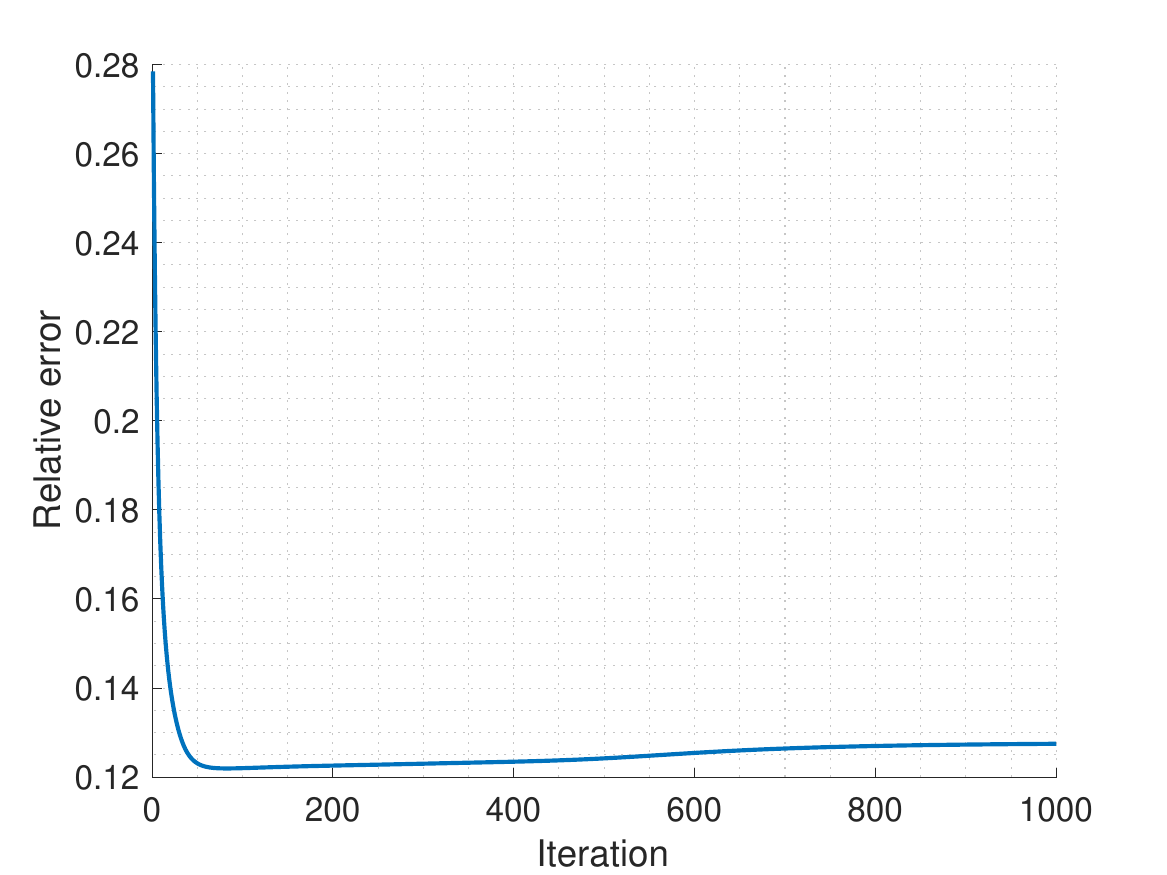}
		\end{minipage}
		\hfill
		\begin{minipage}{0.32\textwidth}
			\centering
			\includegraphics[width=\linewidth]{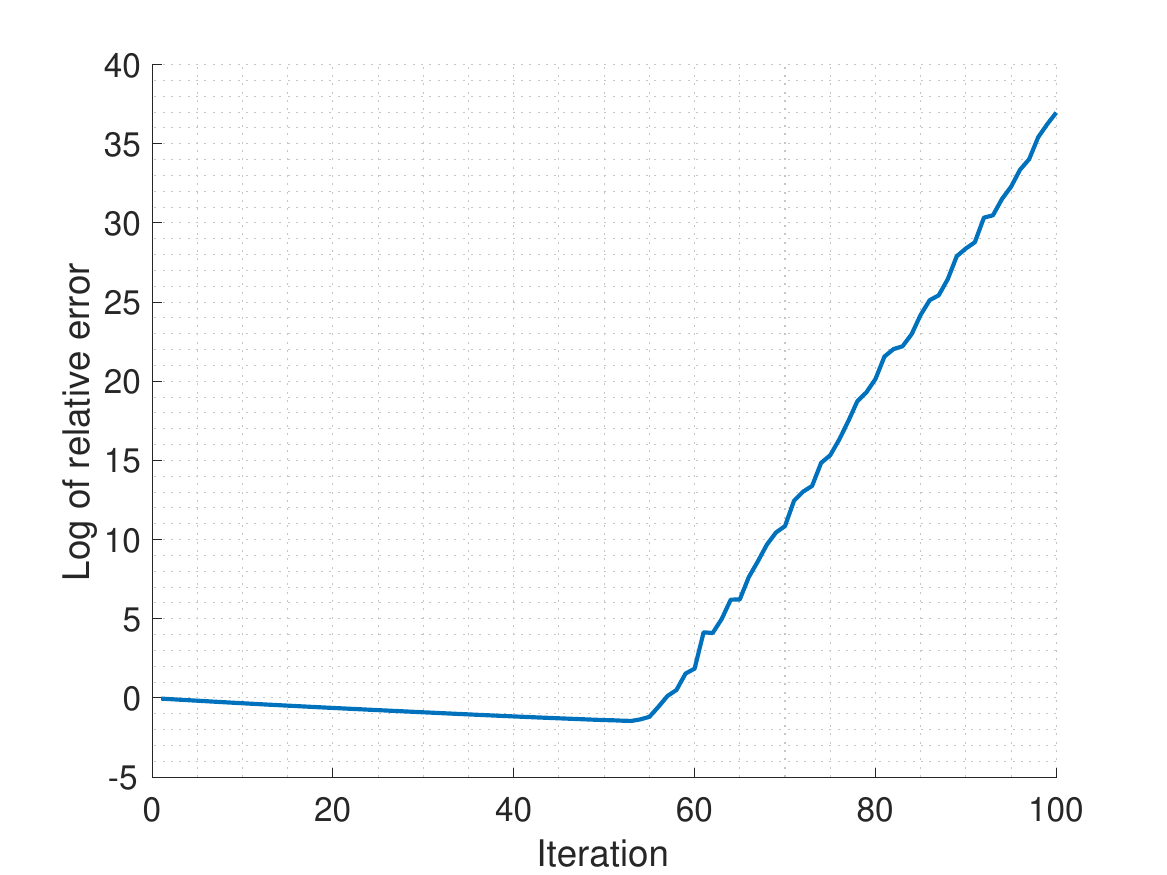}
		\end{minipage}
		\hfill
		\begin{minipage}{0.32\textwidth}
			\centering
			\includegraphics[width=\linewidth]{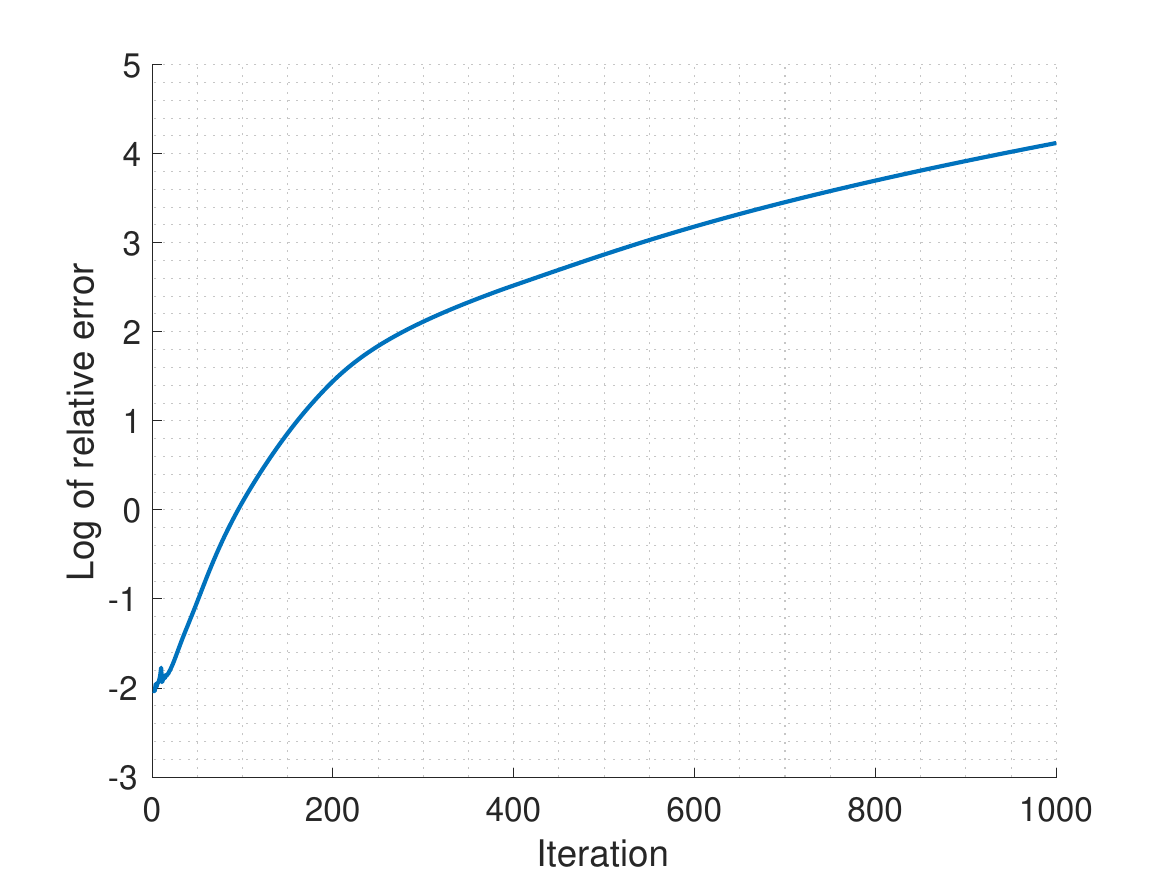}
		\end{minipage}
		\caption{Relative error using \approach(left), TC (middle), and COSTCO (right). Notice the y-axis for the latter two are in logarithm. }
		\label{fig:traffic-TCCOSTCO}
	\end{figure}
	\begin{table}[htbp]
		\centering
		\begin{tabular}{c||c|c|c|c|c}
			Algorithm & \approach 1st & \approach 2nd & \approach 1st \& 2nd&COSTCO & TC\\
			\hline 
			1\% & 0.341& 0.314  & \textbf{0.139}& NA&NA\\
			\hline
			3\% & 0.152&\textbf{0.094}&0.111&0.097&0.117  \\
			\hline
			5\% & 0.127&\textbf{0.088}&0.108& 0.09&0.115
		\end{tabular}
		\caption{Comparison of \approach with COSTCO and TC. For \approach, the subspaces are extracted along the 1st dimension/2nd dimension/1st \& 2nd dimension.}
		\label{table:traffic:relerr}
	\end{table}
	These results suggest that \approach can be useful in this dataset when observations are limited and informative side information is available. At the $3\%$ and $5\%$ sampling rates, its performance is comparable with COSTCO and better than TC in the reported experiments. At the $1\%$ sampling rate, using subspaces from both the first and second modes yields a smaller relative error ($0.139$) than using either subspace alone ($0.341$ and $0.314$). Although using more subspaces is not guaranteed to improve performance, the method can be adapted to incorporate subspace information from multiple modes.

	\subsection{Walmart Sales Prediction}
	\emph{Walmart Inc.} is an American multinational retail corporation that operates a chain of hypermarkets (also called supercenters), discount department stores, and grocery stores in the United States.
	In this dataset\footnote{The dataset is publicly available in \url{https://www.kaggle.com/datasets/varsharam/walmart-sales-dataset-of-45stores/data}}, we have weekly sales of 45 stores for the year 2010-2012 including the factors affecting sales such as holidays, and unemployment rate.
	After preprocessing the data, we obtain a tensor of size $45\times 143 \times 3$. The first dimension is the id of the store, the second dimension represents the week, and the last dimension is the unemployment level, which we convert the rate to a category variable \{Low, Medium, High\}. 
	The $(i,j,k)$-th entry of the tensor represents the sales of $i$-th store within $j$-th week under the unemployment level $k$. 
	We collect the data from one store as our side information, which is an incomplete matrix $\tilde\M_s\in\RR^{143\times 3}$, and the sales data from the rest stores as our target tensor $\bcalT^*\in\RR^{44\times 143\times 3}$.  
	In order to obtain the subspace from $\tilde \M_s$, we first use standard matrix completion algorithm to complete the matrix. We denote the completed matrix by $\M_s$, and the subspace is extracted using SVD, $\U_s$ is the top $r_1$ left singular vectors of $\M_s$. 
	Now the tensor $\bcalT^*$ is incomplete, we randomly sample $10\%$ of its entries as test set, which we denote by $\Omega_{\text{test}}$.
	After removing the test set, the tensor has roughly $30\%$ of its entries revealed. 
	The quality of the estimator $\hat\bcalT$ is measured by the averaged $\ell_2$ error: $\fro{\calP_{\Omega_{\text{test}}}(\bcalT^* - \hat\bcalT)}/\sqrt{|\Omega_{\text{test}}|}$, and the relative error $\fro{\calP_{\Omega_{\text{test}}}(\bcalT^* - \hat\bcalT)}/\fro{\calP_{\Omega_{\text{test}}}(\bcalT^*)}$. 
	We also compare our algorithm with COSTCO and TC. The results are summarized in Table \ref{table:compare:walmart}. 
	\begin{table}[htbp]
		\centering
		\begin{tabular}{c||c|c|c}
			Algorithm & \approach & COSTCO & TC\\
			\hline 
			averaged $\ell_2$ error & \textbf{0.246}& 0.312  &0.353\\
			\hline
			relative error & \textbf{0.350} &0.443 & 0.501
		\end{tabular}
		\caption{Comparison of \approach with COSTCO and TC. }
		\label{table:compare:walmart}
	\end{table}
	In this experiment, both methods using side information outperform TC, and \approach has lower held-out errors than COSTCO.

	\section{Derivation of Bounds for \texorpdfstring{$\gamma$}{gamma}}\label{app:gamma}
	Let $\U, \V \in \RR^{d \times r}$ be two column orthonormal matrices ($\U^\top \U = \V^\top \V = \I_r$). We consider two distance metrics (projective distance $\sf{d_p}$ and Procrustes distance $\sf{d_{proc}}$) between the subspaces spanned by $\U$ and $\V$:
	\begin{align*}
		\sf{d_{p}}:= \fro{\U\U^\top - \V\V^\top},\\
		\sf{d_{proc}}:=\min_{\R\in\OO_r}\fro{\U - \V\R}.
	\end{align*}
	Then simple linear algebra shows (see e.g. \cite{chen2021spectral})
	\begin{align*}
		\frac{1}{\sqrt{2}}\sf{d_p}\leq \sf{d_{proc}} \leq \sf{d_p}. 
	\end{align*}
	Now the quantity $\gamma = \frac{\sqrt{d}}{\sf{d_p}}\twoinf{\U - \V\R}$, where $\R = \arg\min_{\R\in\OO_r}\fro{\U - \V\R}$, can be equivalently written as 
	\begin{align*}
		\gamma = \frac{\sf{d_{proc}}}{\sf{d_p}}\cdot \frac{\sqrt{d}\twoinf{\U - \V\R}}{\fro{\U-\V\R}}. 
	\end{align*}
	From the definition of $\twoinf{\cdot}$, we see $\frac{\sqrt{d}\twoinf{\U - \V\R}}{\fro{\U-\V\R}}\in[1, \sqrt{d}]$. Together with the fact $\frac{\sf{d_{proc}}}{\sf{d_p}}\in[\frac{1}{\sqrt{2}}, 1]$, we conclude $\gamma\in[\frac{1}{\sqrt{2}},\sqrt{d}]$. 
	
	\noindent\textbf{Example}\par
	We illustrate two representative geometric configurations corresponding to row-concentrated and uniformly distributed subspace errors. Let $\{\e_1, \cdots, \e_d\}$ denote the standard basis vectors in $\RR^d$.
	
	\noindent\textit{The Spiky Case ($\gamma = \sqrt{d}$).}This upper bound is achieved when the subspace estimation error is concentrated in a single row. Let $\u = \frac{1}{\sqrt{2}}(\e_1+\e_2)$ and $\v = \frac{1}{\sqrt{2}}(\e_1-\e_2)$. 
	Then $\twoinf{\u-\v}= \sqrt{2}$, $\fro{\u\u^\top - \v\v^\top} = \sqrt{2}$, which implies $\gamma = \sqrt{d}$. 
	
	\noindent\textit{The Incoherent Case ($\gamma = 1$).}This value is achieved when the error energy is spread uniformly across all coordinates. Let $d$ be an even number. Consider two vectors with disjoint support:$$\u = \sqrt{\frac{2}{d}} \sum_{k=1}^{d/2} \e_{2k-1}, \quad \v = \sqrt{\frac{2}{d}} \sum_{k=1}^{d/2} \e_{2k}.$$
	Then $\twoinf{\u-\v} = \sqrt{\frac{2}{d}}$ and $\fro{\u\u^\top - \v\v^\top}^2 = 2$, equivalently $\fro{\u\u^\top - \v\v^\top} = \sqrt{2}$, which implies $\gamma =1$. 

	\section{Proofs}
	Lemma \ref{lemma:ptperp} - Lemma \ref{lemma:ydelta} and Theorem \ref{thm:concentration:psi2} used in this section can be found in Section \ref{sec:lemmas}. 
	
	\subsection{Proof of Theorem \ref{thm:init}}
	In this section, we compute the upper bound for $\fro{\tilde\M_0 - \M^*}$. 
	Notice $$d^*\EE y_i\Y_i = \M^*(\I_{d_3}\otimes \U^{*\top}\U_s) =: \N. $$
	We have
	\begin{align*}
		\fro{\tilde\M_0 - \M^*} \leq \fro{\tilde\M_0 - \N} + \fro{\N - \M^*}. 
	\end{align*}
	For the first term above, we have
	\begin{align*}
		\fro{\tilde\M_0 - \N} \leq \sqrt{2r}\op{\tilde\M_0 - \N} \leq 2\sqrt{2r} \op{\frac{d^*}{n}\sum_{i=1}^ny_i\Y_i - \N}.
	\end{align*}
	Next we consider the upper bound for $\op{\frac{d^*}{n}\sum_{i=1}^ny_i\Y_i - \N}$. Recall $y_i = \inp{\bcalE_{\omega_i}}{\bcalT^*} + \epsilon_i$, we have
	\begin{align*}
		\frac{d^*}{n}\sum_{i=1}^ny_i\Y_i - \N = \frac{d^*}{n}\sum_{i=1}^n \inp{\bcalE_{\omega_i}}{\bcalT^*}\Y_i - \N + \frac{d^*}{n}\sum_{i=1}^n\epsilon_i\Y_i.
	\end{align*}
	
	\noindent\textit{Upper bound for $\op{\frac{d^*}{n}\sum_{i=1}^n \inp{\bcalE_{\omega_i}}{\bcalT^*}\Y_i - \N}. $}
	Recall $\Y_i = \e_{k_i}(\e_{g_i}\otimes \e_{l_i})^\top(\I_{d_3}\otimes \U_s)$. And the incoherence implies $|\inp{\bcalE_{\omega_i}}{\bcalT^*}|\leq \sqrt{\frac{\mu^3 r^*}{d^*}}\lambda_{\max}$. 
	Therefore the upper bound of $\op{d^*\inp{\bcalE_{\omega_i}}{\bcalT^*}\Y_i - \N}$ is as follows:
	\begin{align*}
		\op{d^*\inp{\bcalE_{\omega_i}}{\bcalT^*}\Y_i - \N} \leq 2\sqrt{\mu^3\mu_s r^*r_1d_2d_3}\lambda_{\max}. 
	\end{align*}
	On the other hand, we have
	\begin{align*}
		\EE (d^*\inp{\Y_i}{\M^*})^2\Y_i \Y_i^\top &\leq \mu^3d^*r^*\lambda_{\max}^2 \frac{1}{d^*}\sum_{l,k,g}\e_k(\e_{g}\otimes\e_l)^\top(\I\otimes \U_s\U_s^{\top})(\e_{g}\otimes\e_l)\e_k^\top\\
		&= \mu^3r^*\lambda_{\max}^2 d_3r_1\I_{d_2},
	\end{align*}
	where $\A\leq \B$ means $\B-\A$ is PSD matrix. Similarly, we can show, 
	\begin{align*}
		\EE (d^*\inp{\Y_i}{\M^*})^2 \Y_i^\top\Y_i &\leq\mu^3r^*\lambda_{\max}^2 d_2\I_{d_3r_1},
	\end{align*}
	Using matrix Bernstein inequality, we obtain with probability exceeding $1-d^{-10}$, 
	\begin{align}\label{bound1}
		\op{\frac{d^*}{n}\sum_{i=1}^n \inp{\bcalE_{\omega_i}}{\bcalT^*}\Y_i - \N} \lesssim \bigg(\frac{\sqrt{\mu_s\mu^3 r_1d_2d_3r^*}}{n}\log d + \sqrt{\frac{\mu^3r^* (d_3r_1\vee d_2)}{n}\log d}\bigg)\lambda_{\max}.
	\end{align}

	Next we use Theorem \ref{thm:concentration:psi2} to bound $\op{\frac{d^*}{n}\sum_{i=1}^n\epsilon_i\Y_i}$. Notice
	\begin{align*}
		\psitwo{\op{\epsilon_i\Y_i}} \leq \sigma\cdot \sqrt{\frac{\mu_s r_1}{d_1}}.
	\end{align*}
	And 
	\begin{align*}
		\max\bigg\{\bigg\|\frac{1}{n}\sum_{i=1}^n\EE\epsilon_i^2\Y_i\Y_i^\top\bigg\|^{1/2}, \bigg\|\frac{1}{n}\sum_{i=1}^n\EE\epsilon_i^2\Y_i^\top\Y_i\bigg\|^{1/2}\bigg\}= \sigma^2\frac{1}{d^*}(d_2\vee d_3r_1). 
	\end{align*}
	From Theorem \ref{thm:concentration:psi2}, we see with probability exceeding $1-d^{-10}$, 
	\begin{align}\label{bound2}
		\op{\frac{d^*}{n}\sum_{i=1}^n\epsilon_i\Y_i} \lesssim \bigg(\sqrt{\frac{d^*}{n}(d_2\vee d_3 r_1)\log d} + \frac{\log^{3/2}d}{n}\sqrt{\mu_s r_1d_2d_3d^*}\bigg)\sigma. 
	\end{align}
	We conclude from \eqref{bound1}, \eqref{bound2},
	\begin{align*}
		\op{\frac{d^*}{n}\sum_{i=1}^ny_i\Y_i - \M^*} &\lesssim\bigg(\frac{\sqrt{\mu_s\mu^3 r_1d_2d_3r^*}}{n}\log d + \sqrt{\frac{\mu^3r^* (d_3r_1\vee d_2)}{n}\log d}\bigg)\lambda_{\max}\\
		&\quad + \bigg(\sqrt{\frac{d^*}{n}(d_2\vee d_3 r_1)\log d} + \frac{\log^{3/2}d}{n}\sqrt{\mu_s r_1d_2d_3d^*}\bigg)\sigma.
	\end{align*}
	Finally we consider $\fro{\M^* - \M^*(\I_{d_3}\otimes \U^{*\top}\U_s)}$:
	\begin{align*}
		\fro{\M^* - \M^*(\I_{d_3}\otimes \U^{*\top}\U_s)} &= \fro{\M^*(\I_{d_3}\otimes \big(\I_{r_1} - \U^{*\top}\U_s)\big)}\\
		&= \fro{\bcalC^*\times_1\big(\I_{r_1} - \U^{*\top}\U_s)\big) \times_2 \V^*\times_3\W^*}\\
		&\leq \lambda_{\max}\cdot \delta. 
	\end{align*}
	In conclusion, we have 
	\begin{align*}
		\fro{\tilde\M_0 - \M^*} &\leq C\sqrt{r}\bigg(\frac{\sqrt{\mu_s\mu^3 r_1d_2d_3r^*}}{n}\log d + \sqrt{\frac{\mu^3r^* (d_3r_1\vee d_2)}{n}\log d}\bigg)\lambda_{\max}\\
		&\quad + C\sqrt{r}\bigg(\sqrt{\frac{d^*}{n}(d_2\vee d_3 r_1)\log d} + \frac{\log^{3/2}d}{n}\sqrt{\mu_s r_1d_2d_3d^*}\bigg)\sigma  +\lambda_{\max}\cdot \delta\\
		&=:\tau.
	\end{align*}
	
	We denote $\tilde\M_0 = \tilde\L_0\tilde\S_0\tilde\R_0^\top, \M^* = \L^*\S^*\R^{*\top}$ be their compact rank $r$ SVD.  
	Now that we obtain $\fro{\tilde\M_0 - \M^*}\leq \tau$, 
	and Under the assumptions in Theorem \ref{thm:init}, $\tau\leq\frac{1}{2}\kappa^{-1}\lambda_{\min}$.
	So from Wedin's sin$\Theta$ Theorem, we have 
	\begin{align*}
		\fro{\tilde\L_0\tilde\L_0^\top - \L^*\L^{*\top}} \leq \frac{2\tau}{\lambda_{\min}}, \quad \fro{\tilde\R_0\tilde\R_0^\top - \R^*\R^{*\top}} \leq \frac{2\tau}{\lambda_{\min}}. 
	\end{align*}
	Now from Lemma \ref{lemma:truncation}, we obtain $\L_0,\R_0$ satisfying 
	\begin{align*}
		&\incoh(\L_0), \incoh(\R_0) \leq 3\mu,\\
		&\fro{\L_0\L_0^\top - \L^*\L^{*\top}} \leq \frac{8\pi\tau}{\lambda_{\min}}, \\
		& \fro{\R_0\R_0^\top - \R^*\R^{*\top}} \leq \frac{8\pi\tau}{\lambda_{\min}}.  
	\end{align*}
	Recall $\M_0 = \L_0\L_0^\top\tilde\M_0\R_0\R_0^\top$, and thus 
	\begin{align*}
		\fro{\M_0 - \M^*} &\leq \fro{\L_0\L_0^\top-\L^*\L^{*\top}}\op{\tilde\M_0} + \fro{\tilde\M_0 - \M^*} + \fro{\R_0\R_0^\top-\R^*\R^{*\top}}\op{\M^*} \\
		&\lesssim \kappa \tau. 
	\end{align*}
	Under the assumptions in Theorem \ref{thm:init}, $\fro{\M_0-\M^*}\lesssim \lambda_{\min}$. 

	\subsection{Proof of Theorem \ref{thm:convergence}}
	We have a good initial estimator $\M_0$ such that $\incoh(\M_0)\leq 3\mu$ and $\fro{\M_0 - \M^*}\leq c_0\lambda_{\min}$. 
	We use induction and first we list some properties $\M_l$ satisfies:
	\begin{enumerate}
		\item $\M_l$ is close to $\M^*$, i.e., $\fro{\M_l-\M^*}\leq c_0\lambda_{\min}$. This also implies $\lambda_{\max}(\M_l)\leq 1.01\lambda_{\max}$, $\lambda_{\min}(\M_l)\geq0.99\lambda_{\min}$, $\kappa(\M_l)\leq 1.03\kappa$. 
		\item $\incoh(\M_l)\leq 4\mu^2\kappa^2r=:\mu_1$
	\end{enumerate}
	
	We first consider the upper bound for $\fro{\M_{l} - \M^* - \eta \calP_{\TT_{l}}\G_l}^2$. Notice
	\begin{align*}
		\fro{\M_{l} - \M^* - \eta \calP_{\TT_{l}}\G_l}^2 = \fro{\M_l-\M^*}^2 -2\eta\inp{\M_{l} - \M^*}{\calP_{\TT_{l}}\G_l} + \eta^2\fro{\calP_{\TT_{l}}\G_l}^2.
	\end{align*}
	Recall $\G_l = \calY^*(\calY(\M_l)-\y) =  \calY^*\calY(\M_l-\M^*)-\calY^*(\bdelta + \beps)$, and thus 
	\begin{align*}
		&\quad\inp{\M_{l} - \M^*}{\calP_{\TT_{l}}\G_l} = \inp{\M_{l} - \M^*}{\G_l} - \inp{\calP_{\TT_{l}}^{\perp}(\M_{l} - \M^*)}{\G_l}\\
		&=\underbrace{\ltwo{\calY(\M_l-\M^*)}^2}_{\textsf{B}_1}-\underbrace{\inp{\M_{l} - \M^*}{\calY^*(\bdelta + \beps)}}_{\textsf{B}_2} -
		\underbrace{\inp{\calP_{\TT_{l}}^{\perp}(\M_{l} - \M^*)}{\calY^*\calY(\M_l-\M^*)}}_{\textsf{B}_3} \\
		&\quad+ \underbrace{\inp{\calP_{\TT_{l}}^{\perp}(\M_{l} - \M^*)}{\calY^*(\bdelta + \beps)}}_{\textsf{B}_4}. 
	\end{align*}
	\noindent$(\textsf{B}_1)$ For the first term, we have 
	\begin{align}\label{eq:ym-m}
		\ltwo{\calY(\M_l-\M^*)}^2 \geq 0.99\ltwo{\calY\calP_{\TT^*}(\M_l-\M^*)}^2 - 99\ltwo{\calY\calP_{\TT^*}^{\perp}(\M_l-\M^*)}^2,
	\end{align}
	where we use the inequality $(a+b)^2 \geq 0.99a^2 - 99b^2$. 
	From Lemma \ref{lemma:pty}, when $n\gtrsim \mu\mu_s(r_1d_2r_3 + r_1r_2d_3)\log d$, with probability exceeding $1-d^{-10}$, 
	\begin{align}\label{eq:yptm-m}
		\ltwo{\calY\calP_{\TT^*}(\M_l-\M^*)}^2 \geq 0.99\frac{n}{d^*}\fro{\calP_{\TT^*}(\M_l-\M^*)}^2. 
	\end{align}
	We consider the following empirical process:
	\begin{align*}
		\beta_n(\gamma_1,\gamma_2) := \sup_{\M\in\KK_{\gamma_1,\gamma_2}}\bigg|\ltwo{\calY(\M)}^2 - \frac{n}{d^*}\fro{\M}^2\bigg|,
	\end{align*}
	where 
	\begin{align*}
		\KK_{\gamma_1,\gamma_2} = \big\{\M\in\RR^{d_2\times d_3r_1}: \fro{\M} = 1, \linf{\M}\leq \gamma_1, \nuc{\M}\leq \gamma_2\big\}.
	\end{align*}
	On the other hand, 
	\begin{align}\label{eq:yptperp}
		\ltwo{\calY\calP_{\TT^*}^{\perp}(\M_l-\M^*)}^2 \leq \frac{n}{d^*}\fro{\calP_{\TT^*}^{\perp}(\M_l)}^2 + \fro{\calP_{\TT^*}^{\perp}(\M_l)}^2\cdot\beta_n\bigg(\frac{\linf{\calP_{\TT^*}^{\perp}\M_l}}{\fro{\calP_{\TT^*}^{\perp}\M_l}},\frac{\nuc{\calP_{\TT^*}^{\perp}\M_l}}{\fro{\calP_{\TT^*}^{\perp}\M_l}}\bigg)
	\end{align}
	We now consider the upper bound for $\linf{\calP_{\TT^*}^{\perp}\M_l}$, $\nuc{\calP_{\TT^*}^{\perp}\M_l}$. For the infinity norm bound, we have
	\begin{align*}
		\linf{\calP_{\TT^*}^{\perp}\M_l} \leq \linf{\M_l} + \linf{\calP_{\TT^*}\M_l}. 
	\end{align*}
	Since $\incoh(\M_l)\leq \mu_1$, we have 
	\begin{align*}
		\linf{\M_l} \leq \sqrt{\frac{\mu_1 r}{d_2}}\sqrt{\frac{\mu_1 r}{d_3r_1}}\op{\M_l}\leq 1.01\sqrt{\frac{\mu_1 r}{d_2}}\sqrt{\frac{\mu_1 r}{d_3r_1}}\lambda_{\max}. 
	\end{align*}
	On the other hand, 
	\begin{align*}
		\calP_{\TT^*}\M_l = \V^*\V^{*\top}\M_l  +\M_l (\W^*\W^{*\top}\otimes \I) - \V^*\V^{*\top}\M_l(\W^*\W^{*\top}\otimes \I).
	\end{align*}
	And 
	\begin{align*}
		\linf{\V^*\V^{*\top}\M_l} &\leq \sqrt{\frac{\mu r_2}{d_2}}\cdot \sqrt{\frac{\mu_1 r}{d_3r_1}}\op{\M_l}, \\
		\linf{\M_l (\W^*\W^{*\top}\otimes \I)} &\leq \sqrt{\frac{\mu r_3}{d_3}}\cdot \sqrt{\frac{\mu_1 r}{d_2}}\op{\M_l},\\
		\linf{\V^*\V^{*\top}\M_l(\W^*\W^{*\top}\otimes \I)} &\leq \sqrt{\frac{\mu r_2}{d_2}}\sqrt{\frac{\mu r_3}{d_3}}\cdot\op{\M_l}.
	\end{align*}
	As a result,
	\begin{align*}
		\linf{\calP_{\TT^*}^{\perp}\M_l} \leq 4.04\sqrt{\frac{\mu_1^2 r_2r_3}{d_2d_3}}\lambda_{\max}.
	\end{align*}
	On the other hand, since $\rank(\calP_{\TT^*}^{\perp}\M_l) = r$, we have together with Lemma \ref{lemma:ptperp},
	\begin{align*}
		\nuc{\calP_{\TT^*}^{\perp}\M_l}\leq \sqrt{r}\fro{\calP_{\TT^*}^{\perp}\M_l}\lesssim \sqrt{r}\frac{\fro{\M_l-\M^*}^2}{\lambda_{\min}}.
	\end{align*}
	And from Lemma \ref{lemma:empirical}, we see with probability exceeding $1-d^{-10}$, 
	\begin{align*}
		&\quad\fro{\calP_{\TT^*}^{\perp}\M_l}^2\cdot\beta_n\bigg(\frac{\linf{\calP_{\TT^*}^{\perp}\M_l}}{\fro{\calP_{\TT^*}^{\perp}\M_l}},\frac{\nuc{\calP_{\TT^*}^{\perp}\M_l}}{\fro{\calP_{\TT^*}^{\perp}\M_l}}\bigg) \\
		&\lesssim \frac{\sqrt{\mu_s}r_1}{\sqrt{d_1}}\linf{\calP_{\TT^*}^{\perp}\M_l}\nuc{\calP_{\TT^*}^{\perp}\M_l}\bigg(\sqrt{\frac{n}{d^*}(d_3r_1\vee d_2) \log d} + \sqrt{\frac{\mu_s r_1}{d_1}}\log d\bigg) \\
		&\quad + \sqrt{n\frac{\mu_s r_1^2}{d_1d^*}\log d}\cdot\linf{\calP_{\TT^*}^{\perp}\M_l}\cdot\fro{\calP_{\TT^*}^{\perp}\M_l} + \frac{\mu_s r_1^2}{d_1}\linf{\calP_{\TT^*}^{\perp}\M_l}^2\log d\\
		&\lesssim \sqrt{\frac{n\mu_s r_1^2}{d_1d^*}(d_3r_1\vee d_2) \log d} \cdot\linf{\calP_{\TT^*}^{\perp}\M_l}\cdot\fro{\calP_{\TT^*}^{\perp}\M_l}\\
		&\lesssim \frac{\sqrt{n}}{d^*}\sqrt{\mu_s\mu_1^2\kappa^2}\sqrt{(d_3r_1\vee d_2)r_1r^*}\sqrt{\log d}\cdot \fro{\M_l-\M^*}^2\\
		&\lesssim \frac{n}{d^*}\fro{\M_l-\M^*}^2.
	\end{align*} 
	where the second inequality holds as long as $n\gtrsim \mu_s (r\vee r_1)(d_2\wedge d_3r_1)\log d$, and the last inequality holds as long as $n\gtrsim (d_3r_1\vee d_2)r_1r^*\cdot \mu_s\mu_1^2\kappa^2\log d$.
	Together with \eqref{eq:yptperp}, we see
	\begin{align*}
		\ltwo{\calY\calP_{\TT^*}^{\perp}(\M_l-\M^*)}^2\leq 0.0001\frac{n}{d^*}\fro{\M_l-\M^*}^2.
	\end{align*}
	Together with \eqref{eq:ym-m} and \eqref{eq:yptm-m}, we see 
	\begin{align*}
		\ltwo{\calY(\M_l-\M^*)}^2 \geq 0.98\frac{n}{d^*}\fro{\M_l-\M^*}^2. 
	\end{align*}

	\noindent$(\textsf{B}_3)$ For the third term, we shall use Lemma \ref{lemma:YYPt} to bound it. Recall the notation $\M_l = \L_l\S_l\R_l^\top$, $\M^* = \L^*\S^*\R^{*\top}$, therefore
	$$\calP_{\TT_{l}}^{\perp}(\M_{l} - \M^*)= \L_{l,\perp}\L_{l,\perp}^{\top}\L^*{\S^*}^{1/2}{\S^*}^{1/2}\R^{*\top}\R_{l,\perp}\R_{l,\perp}^{\top}=:\B.$$
	And we decompose 
	\begin{align*}
		\M_l-\M^* &= \underbrace{(\L_l\L_l^\top-\L^*\L^{*\top})\M_l\R_l\R_l^\top}_{:=\A_1} + \underbrace{\L^*\L^{*\top}(\M_l-\M^*)\R_l\R_l^\top}_{:=\A_2} \\
        &\quad + \underbrace{\L^*\L^{*\top}\M^*(\R_l\R_l^\top-\R^*\R^{*\top})}_{:=\A_3}.
	\end{align*}
	Thus
	\begin{align*}
		\inp{\calP_{\TT_{l}}^{\perp}(\M_{l} - \M^*)}{\calY^*\calY(\M_l-\M^*)} &= \inp{\calP_{\TT_{l}}^{\perp}(\M_{l} - \M^*)}{\calY^*\calY(\M_l-\M^*)}\\
		&= \inp{\calY^*\calY\A_1}{\B} +  \inp{\calY^*\calY\A_2}{\B} +  \inp{\calY^*\calY\A_3}{\B}.
	\end{align*}
	From Lemma \ref{lemma:YYPt}, we see 
	\begin{align*}
		&\quad\bigg|\inp{\calP_{\TT_{l}}^{\perp}(\M_{l} - \M^*)}{\calY^*\calY(\M_l-\M^*)} - \frac{n}{d^*}\inp{\calP_{\TT_{l}}^{\perp}(\M_{l} - \M^*)}{\M_l-\M^*}\bigg|\\
		&\leq \bigg(\sqrt{\frac{n\mu_s r_1}{d^*d_1}(d_2\vee d_3r_1)\log d} + \frac{\mu_s r_1}{d_1}\log d\bigg)\cdot\big(\nuc{\calL(\A_1,\B)}+\nuc{\calL(\A_2,\B)}+\nuc{\calL(\A_3,\B)}\big)
	\end{align*}
	And from Lemma \ref{lemma:LAB}, we see 
	\begin{align*}
		\nuc{\calL(\A_1,\B)} &\lesssim r_1^{3/2}r\cdot\sqrt{\frac{\mu_1 r}{d_2}}\sqrt{\frac{\mu_1 r}{d_3r_1}}\lambda_{\max}\cdot\lambda_{\max}\frac{\fro{\M_l-\M^*}^2}{\lambda_{\min}^2}\\
		&\lesssim r_1r^2\mu_1\kappa^2\frac{\fro{\M_l-\M^*}^2}{\sqrt{d_2d_3}}. 
	\end{align*}
	And we can similarly bound $\nuc{\calL(\A_2,\B)}, \nuc{\calL(\A_3,\B)}$. So we conclude as long as $n \gtrsim (d_2\vee d_3r_1)r_1^3r^4 \mu_s\mu_1^2\kappa^4\log d$,
	we have 
	\begin{align*}
		\bigg|\inp{\calP_{\TT_{l}}^{\perp}(\M_{l} - \M^*)}{\calY^*\calY(\M_l-\M^*)} - \frac{n}{d^*}\inp{\calP_{\TT_{l}}^{\perp}(\M_{l} - \M^*)}{\M_l-\M^*}\bigg| \lesssim \frac{n}{d^*}\fro{\M_l-\M^*}^2.
	\end{align*}
	Together with Lemma \ref{lemma:ptperp}, we have 
	\begin{align}
		\bigg|\inp{\calP_{\TT_{l}}^{\perp}(\M_{l} - \M^*)}{\calY^*\calY(\M_l-\M^*)} \bigg| \leq 0.01\frac{n}{d^*}\fro{\M_l-\M^*}^2.
	\end{align}
	
	\noindent$(\textsf{B}_2,\textsf{B}_4)$ Notice $-\textsf{B}_2+\textsf{B}_4 = -\inp{\calP_{\TT_l}(\M_l-\M^*)}{\calY^*(\boldsymbol{\delta}+\boldsymbol{\epsilon})}$. Notice $\op{\calY^*(\boldsymbol{\delta})}$, $\op{\calY^*(\boldsymbol{\epsilon})}$ are bounded in Lemma \ref{lemma:ydelta}, \eqref{bound2} respectively, which give
	\begin{align}
		\op{\calY^*(\boldsymbol{\delta})} &\leq
		C\sqrt{\frac{\mu_s\mu^2 r^*}{d^*}}\frac{\gamma\delta \wedge \sqrt{\mu_s r_1}}{\sqrt{d_1}}\lambda_{\max} \log d + 1.01\frac{n}{d^*}\lambda_{\max}\delta
		\label{ydelta}\\
		\op{\calY^*(\boldsymbol{\epsilon})} &\lesssim \bigg(\sqrt{\frac{n}{d^*}(d_2\vee d_3 r_1)\log d} +\sqrt{\frac{\mu_s r_1}{d_1}} \log^{3/2}d\bigg)\sigma.\label{yeps}
	\end{align}
	Using Cauchy-Schwarz inequality, we see 
	\begin{align*}
		|-\textsf{B}_2+\textsf{B}_4| &\leq
		\nuc{\calP_{\TT_l}(\M_l-\M^*)}\cdot(\op{\calY^*(\boldsymbol{\delta})} + \op{\calY^*(\boldsymbol{\epsilon})})\\
		&\leq c_2\frac{n}{d^*}\fro{\M_l-\M^*}^2  + C_1r\frac{d^*}{n}(\op{\calY^*(\boldsymbol{\delta})}^2 + \op{\calY^*(\boldsymbol{\epsilon})}^2).
	\end{align*}
	Using \eqref{ydelta}, \eqref{yeps}, we have 
	\begin{align}\label{yd+e}
		\op{\calY^*(\boldsymbol{\delta})}^2 + \op{\calY^*(\boldsymbol{\epsilon})}^2&\lesssim
		\frac{\mu_s\mu^2 r^*}{d^*}\frac{\gamma^2\delta^2 \wedge \mu_s r_1}{d_1}\lambda_{\max}^2 \log^2 d + (\frac{n}{d^*})^2\lambda_{\max}^2\delta^2\notag\\
		&\quad + \bigg(\frac{n}{d^*}(d_2\vee d_3 r_1)\log d +\frac{\mu_s r_1}{d_1} \log^{3}d\bigg)\sigma^2.
	\end{align}
	And this leads to 
	\begin{align*}
		|-\textsf{B}_2+\textsf{B}_4| &\leq 
		0.01\frac{n}{d^*}\fro{\M_l-\M^*}^2  + C_1r\bigg(\frac{\mu_s\mu^2 r^*}{n}\frac{\gamma^2\delta^2 \wedge \mu_s r_1}{d_1}\lambda_{\max}^2 \log^2 d + \frac{n}{d^*}\lambda_{\max}^2\delta^2\\
		&\quad + \big((d_2\vee d_3 r_1)\log d +\frac{\mu_s d_2d_3r_1}{n} \log^{3}d\big)\sigma^2\bigg).
	\end{align*}

	In summary, we have 
	\begin{align}\label{ineq:inp}
		&\inp{\M_{l} - \M^*}{\calP_{\TT_{l}}\G_l} \geq 0.96\frac{n}{d^*}\fro{\M_l-\M^*}^2 - \sfR_1,
	\end{align}
	where 
	\begin{align*}
		\sfR_1 &= C_1r\bigg(\frac{\mu_s\mu^2 r^*}{n}\frac{\gamma^2\delta^2 \wedge \mu_s r_1}{d_1}\lambda_{\max}^2 \log^2 d + \frac{n}{d^*}\lambda_{\max}^2\delta^2
		+ \big((d_2\vee d_3 r_1)\log d +\frac{\mu_s d_2d_3r_1}{n} \log^{3}d\big)\sigma^2\bigg).
	\end{align*}
	
	\hspace{1cm}
	
	Next we consider the bound for $\fro{\calP_{\TT_{l}}\G_l}^2$. Recall $\G_l =  \calY^*\calY(\M_l-\M^*)-\calY^*(\bdelta + \beps)$. We have
	\begin{align*}
		\fro{\calP_{\TT_{l}}\G_l}^2 \leq (1+c_3)\fro{\calP_{\TT_{l}}\big(\calY^*\calY(\M_l-\M^*)\big)}^2 + (1+c_3^{-1})\fro{\calP_{\TT_{l}}\big(\calY^*(\bdelta + \beps)\big)}^2. 
	\end{align*}
	And 
	\begin{align*}
		\fro{\calP_{\TT_{l}}\big(\calY^*\calY(\M_l-\M^*)\big)}^2 &\leq (1+c_4^{-1})\fro{\calP_{\TT_{l}}\big((\calY^*\calY - \frac{n}{d^*}\calI)(\M_l-\M^*)\big)}^2\\
		&\quad + (1+c_4)\left(\frac{n}{d^*}\right)^2\fro{\calP_{\TT_{l}}(\M_l-\M^*)}^2.  
	\end{align*}
	Notice there exists some $\X_0$ with unit Frobenius norm, such that 
	\begin{align*}
		&\quad \fro{\calP_{\TT_{l}}\big((\calY^*\calY - \frac{n}{d^*}\calI)(\M_l-\M^*)\big)} = \inp{\calP_{\TT_{l}}\X_0}{(\calY^*\calY - \frac{n}{d^*}\calI)(\M_l-\M^*)}\notag\\
		&= \underbrace{\inp{\L_l\L_l^\top\X_0(\I- \R_l\R_l^\top)}{(\calY^*\calY - \frac{n}{d^*}\calI)(\M_l-\M^*)}}_{:=\sfC_1} + \underbrace{\inp{\X_0 \R_l\R_l^\top}{(\calY^*\calY - \frac{n}{d^*}\calI)(\M_l-\M^*)}}_{:=\sfC_2}. 
	\end{align*}
	For $\sfC_1$, we can decompose $\M_l - \M^* = \L^*\L^{*\top}(\M_l-\M^*)+ (\L_l \L_l^\top- \L^*\L^{*\top})\M_l$.
	Then we have (for notation simplicity, we assume $\L_l,\L^*$ are aligned)
	\begin{align*}
		\sfC_1 &= \underbrace{\inp{(\L_l-\L^*)\L_l^\top\X_0(\I- \R_l\R_l^\top)}{(\calY^*\calY - \frac{n}{d^*}\calI)( \L^*\L^{*\top}(\M_l-\M^*))}}_{:=\sfC_{1,1}} \\
		&\quad + \underbrace{\inp{\L^*\L_l^\top\X_0(\I- \R_l\R_l^\top)}{(\calY^*\calY - \frac{n}{d^*}\calI)( \L^*\L^{*\top}(\M_l-\M^*))}}_{:=\sfC_{1,2}}\\
		&\quad+ \underbrace{\inp{\L_l\L_l^\top\X_0(\I- \R_l\R_l^\top)}{(\calY^*\calY - \frac{n}{d^*}\calI)( (\L_l \L_l^\top- \L^*\L^{*\top})\M_l)}}_{:=\sfC_{1,3}}.
	\end{align*}
	We can use Lemma \ref{lemma:YYPt} and \ref{lemma:LAB} to control $\sfC_{1,1}$ and $\sfC_{1,3}$:
	\begin{align*}
		|\sfC_{1,1}|&\lesssim \bigg(\sqrt{\frac{n\mu_s r_1}{d^*d_1}(d_2\vee d_3r_1)\log d} + \frac{\mu_s r_1}{d_1}\log d\bigg) r_1^{3/2}r \cdot\sqrt{\frac{\mu r}{d_2}\frac{\mu_1 r}{d_3 r_1}}\kappa\cdot\fro{\M_l - \M^*}, \\
		|\sfC_{1,3}|&\lesssim \bigg(\sqrt{\frac{n\mu_s r_1}{d^*d_1}(d_2\vee d_3r_1)\log d} + \frac{\mu_s r_1}{d_1}\log d\bigg) r_1^{3/2}r \cdot\sqrt{\frac{\mu_1 r}{d_2}\frac{\mu_1 r}{d_3 r_1}}\kappa\cdot\fro{\M_l - \M^*}.
	\end{align*}
	And we use Lemma \ref{lemma:YYLR} for $\sfC_{1,2}$:
	\begin{align*}
		|\sfC_{1,2}|\lesssim \bigg(\frac{\mu_s r_1}{d_1}\frac{\mu r}{d_2}\log d + \sqrt{\frac{n}{d^*}\frac{\mu_s r_1}{d_1}\frac{\mu r}{d_2}\log d}\bigg)\fro{\M_l-\M^*}. 
	\end{align*}
	Putting these together, we obtain  
	\begin{align*}
		|\sfC_{1}| \lesssim \bigg(\sqrt{\frac{n\mu_s r_1}{d^*d_1}(d_2\vee d_3r_1)\log d} + \frac{\mu_s r_1}{d_1}\log d\bigg) r_1^{3/2}r \cdot\sqrt{\frac{\mu_1 r}{d_2}\frac{\mu_1 r}{d_3 r_1}}\kappa\cdot\fro{\M_l - \M^*}.
	\end{align*}
	For $\sfC_2$, we decompose $\M_l - \M^* = (\M_l-\M^*)\R^*\R^{*\top} + \M_l(\R_l\R_l^\top - \R^*\R^{*\top})$. And (assume $\R_l,\R^*$ are aligned)
	\begin{align*}
		\sfC_2 &= \inp{\X_0 \R_l(\R_l-\R^*)^\top}{(\calY^*\calY - \frac{n}{d^*}\calI)((\M_l-\M^*)\R^*\R^{*\top})}\\
		&\quad +  \inp{\X_0 \R_l\R^{*\top}}{(\calY^*\calY - \frac{n}{d^*}\calI)((\M_l-\M^*)\R^*\R^{*\top})}\\
		&\quad + \inp{\X_0 \R_l\R_l^\top}{(\calY^*\calY - \frac{n}{d^*}\calI)(\M_l(\R_l\R_l^\top - \R^*\R^{*\top}))}.
	\end{align*}
	We can similarly show that 
	\begin{align*}
		|\sfC_2| \lesssim \bigg(\sqrt{\frac{n\mu_s r_1}{d^*d_1}(d_2\vee d_3r_1)\log d} + \frac{\mu_s r_1}{d_1}\log d\bigg) r_1^{3/2}r \cdot\sqrt{\frac{\mu_1 r}{d_2}\frac{\mu_1 r}{d_3 r_1}}\kappa\cdot\fro{\M_l - \M^*}.
	\end{align*}
	
	In conclusion, 
	\begin{align*}
		&\quad \fro{\calP_{\TT_{l}}\big((\calY^*\calY - \frac{n}{d^*}\calI)(\M_l-\M^*)\big)} \\
		& \lesssim \bigg(\sqrt{\frac{n\mu_s r_1}{d^*d_1}(d_2\vee d_3r_1)\log d} + \frac{\mu_s r_1}{d_1}\log d\bigg) r_1^{3/2}r \cdot\sqrt{\frac{\mu_1 r}{d_2}\frac{\mu_1 r}{d_3 r_1}}\kappa\cdot\fro{\M_l - \M^*}.
	\end{align*}
	As long as $n\gtrsim (d_2\vee d_3r_1)\log d\cdot r_1^2r^4\mu_s\mu_1^2\kappa^2$, 
	\begin{align*}
		\fro{\calP_{\TT_{l}}\big((\calY^*\calY - \frac{n}{d^*}\calI)(\M_l-\M^*)\big)}\lesssim \frac{n}{d^*}\fro{\M_l - \M^*}, 
	\end{align*}
	which leads to 
	\begin{align*}
		\fro{\calP_{\TT_{l}}\big((\calY^*\calY)(\M_l-\M^*)\big)} \leq 1.001 \frac{n}{d^*}\fro{\M_l - \M^*}.
	\end{align*}
	
	Next we consider $\fro{\calP_{\TT_{l}}\big(\calY^*(\bdelta + \beps)\big)}^2$ using \eqref{ydelta} and \eqref{yeps}:
	\begin{align*}
		\fro{\calP_{\TT_{l}}\big(\calY^*(\bdelta)+\calY^*(\beps)\big)}^2 &\leq r\op{\calY^*(\bdelta)+ \calY^*(\beps)}^2\\
		&\lesssim r\bigg(\frac{\mu_s\mu^2 r^*}{d^*}\frac{\gamma^2\delta^2 \wedge \mu_s r_1}{d_1}\lambda_{\max}^2 \log^2 d + (\frac{n}{d^*})^2\lambda_{\max}^2\delta^2\bigg)\\
		&\quad +r \bigg(\frac{n}{d^*}(d_2\vee d_3 r_1)\log d+\frac{\mu_s r_1}{d_1} \log^{3}d\bigg)\sigma^2.
	\end{align*}
	In summary 
	\begin{align}\label{ineq:ptg}
		\fro{\calP_{\TT_{l}}(\G_l)}^2 \leq 1.01(\frac{n}{d^*})^2\fro{\M_l-\M^*}^2 + \sfR_2, 
	\end{align}
	where 
	\begin{align*}
		\sfR_2 &= r\bigg(\frac{\mu_s\mu^2 r^*}{d^*}\frac{\gamma^2\delta^2 \wedge \mu_s r_1}{d_1}\lambda_{\max}^2 \log^2 d + (\frac{n}{d^*})^2\lambda_{\max}^2\delta^2\bigg)
		+r \bigg(\frac{n}{d^*}(d_2\vee d_3 r_1)\log d+\frac{\mu_s r_1}{d_1} \log^{3}d\bigg)\sigma^2.
	\end{align*}
	Now we can summarize form \eqref{ineq:inp}, and \eqref{ineq:ptg}, 
	\begin{align*}
		\fro{\M_l-\M^*-\eta \calP_{\TT_{l}}(\G_l)}^2 \leq (1-1.92\eta  + 1.01\eta^2) \fro{\M_l-\M^*}^2 + 2\eta\sfR_1 + \eta^2 \sfR_2. 
	\end{align*}
	By setting $\eta = \frac{d^*}{n}$, the above becomes 
	\begin{align*}
		&\quad\fro{\M_l-\M^*-\eta \calP_{\TT_{l}}(\G_l)}^2 \leq 0.09\fro{\M_l-\M^*}^2 \\
		&+ C\bigg(\frac{d_2d_3 rr^*\mu_s\mu^2}{n^2}(\gamma^2\delta^2 \wedge \mu_s r_1)\lambda_{\max}^2\log^2 d + r\lambda_{\max}^2\delta^2 + \frac{d^*r(d_2\vee d_3r_1)}{n}\log d\cdot\sigma^2\bigg). 
	\end{align*}
	Now as long as 
	\begin{align*}
		r\lambda_{\max}^2\delta^2 &\lesssim \lambda_{\min}^2,\\
		d_2d_3 rr^*\mu_s\mu^2(\gamma^2\delta^2 \wedge \mu_s r_1)\kappa^2\log^2 d&\lesssim n^2, \\
		\frac{d^*r(d_2\vee d_3r_1)}{n}\log d&\lesssim (\lambda_{\min}/\sigma)^2, 
	\end{align*}
	we have $\fro{\M_l-\M^*-\eta \calP_{\TT_{l}}(\G_l)}^2\lesssim \lambda_{\min}^2$. And from Lemma \ref{lemma:perturbation}, we have 
	\begin{align*}
		\fro{\M_{l+1}-\M^*}^2 = \fro{\svd_r(\W_l)-\M^*}^2 \leq \fro{\W_l-\M^*}^2 + C_1\frac{\fro{\W_l-\M^*}^3}{\lambda_{\min}}.
	\end{align*}
	With the choice of $\zeta$, we see $\M^*\in\{\M:\linf{\M^*}\leq \zeta\}$, and $\trim_{\zeta}$ is the projection operator. We have $\fro{\W_l-\M^*}\leq \fro{\M_l-\eta\calP_{\TT_{l}}\G_l - \M^*}$. Therefore, 
	\begin{align*}
		\fro{\M_{l+1}-\M^*}^2
		&\leq 1.1 \fro{\M_{l} - \M^* - \eta \calP_{\TT_{l}}\G_l}^2\\
		&\leq 0.1\fro{\M_l-\M^*}^2 + C\bigg(\frac{d_2d_3 rr^*\mu_s\mu^2}{n^2}(\gamma^2\delta^2 \wedge \mu_s r_1)\lambda_{\max}^2\log^2 d \\
		&\hspace{6cm}+ r\lambda_{\max}^2\delta^2 + \frac{d^*r(d_2\vee d_3r_1)}{n}\log d\cdot\sigma^2\bigg).
	\end{align*}
	Finally we show the incoherence of $\M_{l+1}$ is bounded. Since $\zeta \geq \linf{\M^*}$, and $\fro{\M_l-\M^*-\eta \calP_{\TT_{l}}(\G_l)} \leq \frac{1}{2}\lambda_{\min}$, the conditions of Lemma \ref{lemma:spiki-incoh} are satisfied, and we conclude $\incoh(\M_{l+1})\leq 4\mu^2\kappa^2r\leq \mu_1$. 

	After $l_{\max} = \big\lceil \log\big(\frac{Cr\lambda_{\max}}{\fro{\M_0-\M^*}}\big)\big\rceil$ iterations, 
	we have 
	\begin{align*}
		\fro{\M_{l_{\max}} - \M^*}^2 &\lesssim \frac{d_2d_3 rr^*\mu_s\mu^2}{n^2}(\gamma^2\delta^2 \wedge \mu_s r_1)\lambda_{\max}^2\log^2 d + r\lambda_{\max}^2\delta^2 + \frac{d^*r(d_2\vee d_3r_1)}{n}\log d\cdot\sigma^2.
	\end{align*}
	In conclusion, we have 
	\begin{align*}
		&\quad \fro{\hat\bcalT - \bcalT^*}^2 \\
		&\leq 2\fro{\M_{l_{\max}} - \M^*}^2 + 2\fro{\calM_2^{-1}(\M^*)(\U_s-\U^*)}^2\\
		&\lesssim \frac{d_2d_3 rr^*\mu_s\mu^2}{n^2}(\gamma^2\delta^2 \wedge \mu_s r_1)\lambda_{\max}^2\log^2 d + r\lambda_{\max}^2\delta^2 + \frac{d^*r(d_2\vee d_3r_1)}{n}\log d\cdot\sigma^2.
	\end{align*}

	\section{Technical Lemmas}\label{sec:lemmas}
	\begin{lemma}[Lemma 4.1, \cite{wei2016guarantees}]\label{lemma:ptperp}
		Let $\M$ be a rank $r$ matrix, and $\TT$ be the tangent space of the rank $r$ matrix manifold at $\M$. Let $\X$ be another rank $r$ matrix. Then 
		\begin{align*}
			\fro{\calP_{\TT}^{\perp}\X}\leq \frac{1}{\lambda_{\min}(\X)}\op{\M-\X}\fro{\M-\X}. 
		\end{align*}
	\end{lemma}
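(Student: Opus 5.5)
The plan is to write the orthogonal complement projection in factored form and then exploit the fact that $\M$ is annihilated on both sides. Let $\M=\L\S\R^\top$ be the compact rank-$r$ SVD of $\M$. Then
\begin{align*}
\calP_{\TT}^{\perp}\X=(\I-\L\L^\top)\X(\I-\R\R^\top).
\end{align*}
Since $(\I-\L\L^\top)\M=\mathbf{0}$ and $\M(\I-\R\R^\top)=\mathbf{0}$, we may replace $\X$ by $\X-\M$ on either side. However, replacing it on both sides simultaneously is not legitimate for a single factor $\X$. The key idea is therefore to ``split'' $\X$ into two copies of itself, with an inverse factor in between.

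To do this, let $\X=\L_X\S_X\R_X^\top$ be the compact rank-$r$ SVD of $\X$. Because $\S_X$ is invertible, we have $\L_X=\X\R_X\S_X^{-1}$. Hence
\begin{align*}
\X=\L_X\L_X^\top\X=\X\R_X\S_X^{-1}\L_X^\top\X .
\end{align*}
Substituting this identity into the projection gives
\begin{align*}
\calP_{\TT}^{\perp}\X=\big[(\I-\L\L^\top)\X\big]\R_X\S_X^{-1}\L_X^\top\big[\X(\I-\R\R^\top)\big].
\end{align*}
In the first bracket we replace $\X$ by $\X-\M$, and in the second bracket we do the same. This is legitimate because each bracket contains only one projection.

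It then remains to bound the product using $\fro{\A\B}\le\fro{\A}\op{\B}$. The factor $(\I-\L\L^\top)(\X-\M)$ has Frobenius norm at most $\fro{\M-\X}$. The middle factor $\R_X\S_X^{-1}\L_X^\top$ has operator norm $1/\lambda_{\min}(\X)$. The last factor $(\X-\M)(\I-\R\R^\top)$ has operator norm at most $\op{\M-\X}$. Multiplying these three bounds yields exactly
\begin{align*}
\fro{\calP_{\TT}^{\perp}\X}\le\frac{1}{\lambda_{\min}(\X)}\op{\M-\X}\fro{\M-\X}.
\end{align*}

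The only nontrivial step is finding the factorization $\X=\X\R_X\S_X^{-1}\L_X^\top\X$. Once it is in hand, everything else is a routine norm inequality. Nothing beyond $\rank(\X)=r$, which makes $\S_X$ invertible, is needed; in particular, $\rank(\M)=r$ is not used in any essential way.
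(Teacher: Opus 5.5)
Your proof is correct. The identity $\X=\X\R_X\S_X^{-1}\L_X^\top\X$ (i.e.\ $\X=\X\X^{\dagger}\X$) lets you replace $\X$ by $\X-\M$ separately on each side of $(\I-\L\L^\top)\X(\I-\R\R^\top)$, and the bounds $\fro{\M-\X}$, $1/\lambda_{\min}(\X)$ and $\op{\M-\X}$ multiply to exactly the stated inequality.

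The paper gives no proof of its own and simply imports the result from Lemma 4.1 of \cite{wei2016guarantees}. Your argument is the standard mechanism behind that lemma, so it serves as a complete, self-contained justification of the statement as used in the paper.
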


	\begin{lemma}\label{lemma:YYLR}
		With probability exceeding $1-2d^{-10}$, the following hold for arbitrary $\A,\B\in\RR^{d_3r_1\times r}$, and $\C,\D\in\RR^{d_2\times r_1r_3}$:
		\begin{align*}
			&|\inp{(\calY^*\calY - \frac{n}{d^*}\calI)(\L^*\A^\top)}{\L^*\B^\top}| \lesssim \bigg(\frac{\mu_s r_1}{d_1}\frac{\mu r}{d_2}\log d + \sqrt{\frac{n}{d^*}\frac{\mu_s r_1}{d_1}\frac{\mu r}{d_2}\log d}\bigg)\fro{\A}\fro{\B},\\
			&|\inp{(\calY^*\calY - \frac{n}{d^*}\calI)(\C\R^{*\top})}{\D\R^{*\top}}|
			\lesssim \bigg(\frac{\mu_s r_1}{d_1}\frac{\mu r}{d_3}\log d + \sqrt{\frac{n}{d^*}\frac{\mu_s r_1}{d_1}\frac{\mu r}{d_3}\log d}\bigg)\fro{\C}\fro{\D}. 
		\end{align*}
	\end{lemma}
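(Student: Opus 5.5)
The plan is to turn each uniform bilinear bound into an operator-norm bound for a sum of i.i.d.\ random self-adjoint operators on a low-dimensional space, and then apply the matrix Bernstein inequality. For the first inequality, let $\calL_L:=\{\L^*\A^\top:\A\in\RR^{d_3r_1\times r}\}$. This is a linear subspace of dimension $d_3r_1r$, and $\fro{\L^*\A^\top}=\fro{\A}$. Let $\calP_L$ denote the orthogonal projection onto $\calL_L$, so $\calP_L(\X)=\L^*\L^{*\top}\X$. Since $\inp{\calY^*\calY(\X)}{\X'}=\sum_i\inp{\Y_i}{\X}\inp{\Y_i}{\X'}$, the quantity to control is
\begin{align*}
\sup_{\fro{\A}=\fro{\B}=1}\Big|\inp{(\calY^*\calY-\tfrac{n}{d^*}\calI)(\L^*\A^\top)}{\L^*\B^\top}\Big|
=\Big\|\sum_{i=1}^n\big(\calZ_i-\EE\calZ_i\big)+\Big(\sum_{i=1}^n\EE\calZ_i-\tfrac{n}{d^*}\calP_L\Big)\Big\|,
\end{align*}
where $\calZ_i:=\calP_L(\Y_i)\otimes\calP_L(\Y_i)$ is viewed as a rank-one PSD operator on $\calL_L$. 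The bias term needs a short argument, because $\calY$ is not isotropic on all of $\RR^{d_2\times d_3r_1}$. Summing over all $\omega$ gives $\sum_\omega \mathrm{vec}(\Y_\omega)\mathrm{vec}(\Y_\omega)^\top=\I_{d_2}\otimes\I_{d_3}\otimes \U_s^\top\U_s=\I$. Hence $\EE\calZ_i=\frac{1}{d^*}\calP_L$ exactly, and only the centred sum remains.

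First I will bound each summand. Write $\Y_i=\e_{k_i}(\e_{g_i}\otimes\U_s^\top\e_{l_i})^\top$. Then
\[
\fro{\calP_L(\Y_i)}^2=\ltwo{\L^{*\top}\e_{k_i}}^2\,\ltwo{\U_s^\top\e_{l_i}}^2\le \frac{\mu r}{d_2}\cdot\frac{\mu_s r_1}{d_1}=:b,
\]
using $\incoh(\M^*)\le\mu$ for the left factor and $\incoh(\U_s)=\mu_s$. Therefore $\|\calZ_i-\EE\calZ_i\|\le 2b$.

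Next I will bound the variance. We have $\EE\calZ_i^2=\EE\,\fro{\calP_L(\Y_i)}^2\calZ_i\preceq b\,\EE\calZ_i=\frac{b}{d^*}\calP_L$. So the matrix variance of the sum is at most $nb/d^*$. Matrix Bernstein in ambient dimension $d_3r_1r\le d^3$ then gives, with probability at least $1-d^{-10}$,
\[
\Big\|\sum_i(\calZ_i-\EE\calZ_i)\Big\|\lesssim b\log d+\sqrt{\tfrac{n}{d^*}\,b\log d}.
\]
This is exactly the first claimed bound. Because it is an operator-norm bound, it holds simultaneously for all $\A,\B$.

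The second inequality follows by the same argument with $\calP_R(\X)=\X\R^*\R^{*\top}$ on the subspace $\{\C\R^{*\top}\}$. The only new ingredient is the row-norm bound for the right factor. Since $\M^*=\V^*\C_2^*(\W^{*\top}\otimes\R)$, the columns of $\R^*$ lie in the range of $\W^*\otimes\R^\top$, so $\R^*=(\W^*\otimes\I_{r_1})\Q$ for some $\Q$ with orthonormal columns. This gives
\[
\fro{\Y_i\R^*}^2=\ltwo{\R^{*\top}(\e_{g_i}\otimes\U_s^\top\e_{l_i})}^2\le\ltwo{\W^{*\top}\e_{g_i}}^2\ltwo{\U_s^\top\e_{l_i}}^2\le\frac{\mu r_3}{d_3}\frac{\mu_s r_1}{d_1}.
\]
This yields the claim with $r_3$ in place of $r$. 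I expect this step to be the main obstacle. If the stated factor $r$ must be matched literally rather than up to $r_3$ versus $r$, I would instead invoke the incoherence of $\R^*$, i.e.\ $\incoh(\M^*)\le\mu$ in $\RR^{d_3r_1}$, combined with the Kronecker structure. The exact bookkeeping of which rank appears there is the place I expect to need care.

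A union bound over the two events gives total probability $1-2d^{-10}$.
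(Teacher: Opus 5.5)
Your argument for the first inequality is the paper's argument. The paper also uses the rank-one self-adjoint operators $\calZ_i(\A)=\inp{\A}{\Y_i^\top\L^*}\Y_i^\top\L^*$, which are your $\calP_L(\Y_i)\otimes\calP_L(\Y_i)$ after the isometry $\A\mapsto\L^*\A^\top$. It bounds $\op{\calZ_i}$ by $\frac{\mu_s r_1}{d_1}\frac{\mu r}{d_2}$, controls the variance through $\EE\calZ_i^2\preceq b\,\EE\calZ_i$, and applies matrix Bernstein. For this factor the incoherence bound does follow from $\incoh(\bcalT^*)\le\mu$, since $\L^*$ spans the column space of $\V^*$ and $r=r_2$. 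Your check that $\sum_\omega \mathrm{vec}(\Y_\omega)\mathrm{vec}(\Y_\omega)^\top=\I$, so that $\EE\,\calY^*\calY=\frac{n}{d^*}\calI$ exactly, is a step the paper uses silently. Your remark that $\calY$ is ``not isotropic'' is misleading, though: what you then prove is that it is isotropic in expectation on the whole space.

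The point left open is the rank factor in the second inequality, which the paper also skips with ``similar fashion''. Your Kronecker route gives $\fro{\Y_i\R^*}^2\le\frac{\mu r_3}{d_3}\frac{\mu_s r_1}{d_1}$. But $r=\rank(\M^*)=r_2$, and $r_3$ can exceed $r_2$ (e.g.\ Tucker rank $(3,1,3)$), so as written you do not obtain the stated bound. To get the literal factor $r$, use row-incoherence of $\R^*$ as a $d_3r_1\times r$ matrix, $\twoinf{\R^*}^2\le\frac{\mu r}{d_3r_1}$. Expand $\R^{*\top}(\e_g\otimes\U_s^\top\e_l)=\sum_{p=1}^{r_1}[\U_s]_{l,p}\,\R^{*\top}(\e_g\otimes\e_p)$ and apply Cauchy--Schwarz over the $r_1$ terms: $\ltwo{\R^{*\top}(\e_g\otimes\U_s^\top\e_l)}^2\le r_1\ltwo{\U_s^\top\e_l}^2\twoinf{\R^*}^2\le\frac{\mu_s r_1}{d_1}\frac{\mu r}{d_3}$. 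The uniform and variance bounds then go through exactly as in the first part.

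This row bound on $\R^*$ is not implied by $\incoh(\bcalT^*)\le\mu$ alone. That assumption only gives $\twoinf{\R^*}^2\le\mu r_3/d_3$, i.e.\ $\incoh(\R^*)\le\mu r_1r_3/r$. The paper uses the stronger matrix incoherence of $\M^*$ implicitly, for instance when truncating $\tilde\R_0$ at level $\mu$ and invoking Lemma \ref{lemma:truncation}. So your $r_3$ version is what the paper's stated assumption supports, while the stated form with $r$ requires that implicit assumption plus the Cauchy--Schwarz step above.
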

	\begin{proof}
		We only prove the first inequality and the other one can be proved in a similar fashion. 
		Notice 
		\begin{align*}
			\inp{(\calY^*\calY - \frac{n}{d^*}\calI)(\L^*\A^\top)}{\L^*\B^\top} &= \sum_{i=1}^n\inp{\Y_i}{\L^*\A^\top}\inp{\Y_i}{\L^*\B^\top}
		\end{align*}
		And recall $\Y_i^\top\L^* = (\I\otimes\U_s)^\top(\e_{g_i}\otimes \e_{l_i})\e_{k_i}^\top\L^*$. 
		We define $\calZ_i(\A) = \inp{\A}{\Y_i^\top\L^*}\Y_i^\top\L^*$.
		Then $\inp{\calY^*\calY (\L^*\A^\top)}{\L^*\B^\top}  = \sum_{i=1}^n\inp{\calZ_i(\A)}{\B}$.
		And the uniform bound on $\op{\calZ_i}$ is as follows
		\begin{align*}
			\op{\calZ_i} = \max_{\fro{\A}=1}\fro{\calZ_i(\A)}\leq \frac{\mu_s r_1}{d_1}\frac{\mu r}{d_2}.
		\end{align*}
		Meanwhile, since $\calZ_i$ is self-adjoint, we consider 
		\begin{align*}
			\EE\calZ_i^2(\cdot) &= \frac{1}{d^*}\sum_{l,k,g} \inp{\cdot}{\Y_i^\top\L^*}\fro{\Y_i^\top\L^*}^2\Y_i^\top\L^*\\
			&\leq \frac{1}{d^*}\frac{\mu_s r_1}{d_1}\frac{\mu r_2}{d_2}\calI,
		\end{align*}
		where $\calI:\RR^{r_2\times d_3r_1}\rightarrow\RR^{r_2\times d_3r_1}$ is the identity map. And thus 
		\begin{align*}
			\op{\sum_{i=1}^n\calZ_i^2}\leq \frac{n}{d^*}\frac{\mu_s r_1}{d_1}\frac{\mu r}{d_2}. 
		\end{align*}
		Thus we conclude with probability exceeding $1-d^{-10}$, 
		\begin{align*}
			\op{\sum_{i=1}^n(\calZ_i - \EE\calZ_i)} \lesssim \frac{\mu_s r_1}{d_1}\frac{\mu r}{d_2}\log d + \sqrt{\frac{n}{d^*}\frac{\mu_s r_1}{d_1}\frac{\mu r}{d_2}\log d},
		\end{align*}
		which leads to
		\begin{align*}
			|\inp{(\calY^*\calY - \frac{n}{d^*}\calI)(\L^*\A^\top)}{\L^*\B^\top}| \lesssim \bigg(\frac{\mu_s r_1}{d_1}\frac{\mu r}{d_2}\log d + \sqrt{\frac{n}{d^*}\frac{\mu_s r_1}{d_1}\frac{\mu r}{d_2}\log d}\bigg)\fro{\A}\fro{\B}. 
		\end{align*}
	\end{proof}

	We define for column orthonormal matrices $\U,\V\in\RR^{d\times r}$, 
	\begin{align*}
		d(\U,\V) = \op{\U\U^{\top} - \V\V^{\top}},  
		\quad d_{\rm F}(\U,\V)= \fro{\U\U^{\top} - \V\V^{\top}}.
	\end{align*}
	\begin{lemma}\label{lemma:pty}
		Suppose $\{(k_i,l_i)\}_{i=1}^n$ is a set of indices sampled independently and uniformly from $[d_2]\times [d_3r]$. 
		With probability exceeding $1-d^{-10}$, 
		\begin{align*}
			\op{\frac{d^*}{n}\calP_{\TT^*}\calY^*\calY\calP_{\TT^*} - \calP_{\TT^*}} \leq \frac{\mu\mu_s}{n}(r_1d_2r_3 + r_1r_2d_3)\log d +  \sqrt{\frac{\mu\mu_s}{n}(r_1d_2r_3 + r_1r_2d_3)\log d}.
		\end{align*}
		In particular, if $n\gtrsim \mu\mu_s(r_1d_2r_3 + r_1r_2d_3)\log d$, then 
		\begin{align*}
			\op{\frac{d^*}{n}\calP_{\TT^*}\calY^*\calY\calP_{\TT^*} - \calP_{\TT^*}} \leq \sqrt{\frac{\mu\mu_s}{n}(r_1d_2r_3 + r_1r_2d_3)\log d} \leq 0.01.
		\end{align*}
	\end{lemma}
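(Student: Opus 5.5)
The plan is to treat the restricted operator as a sum of i.i.d. self-adjoint random operators on the tangent space $\TT^*$ and apply the matrix (operator) Bernstein inequality. Write $\calY^*\calY = \sum_{i=1}^n \Y_i\otimes\Y_i$, so that
\begin{align*}
\frac{d^*}{n}\calP_{\TT^*}\calY^*\calY\calP_{\TT^*} - \calP_{\TT^*} = \frac{1}{n}\sum_{i=1}^n\big(\calZ_i - \EE\calZ_i\big),\qquad \calZ_i(\X) := d^*\inp{\calP_{\TT^*}\Y_i}{\X}\,\calP_{\TT^*}\Y_i .
\end{align*}
The first step is to verify $\EE\calZ_i = \calP_{\TT^*}$. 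Since $\omega_i$ is uniform, $d^*\EE[\Y_i\otimes\Y_i]$ acts on $\M\in\RR^{d_2\times d_3r_1}$ as $\M\mapsto \M(\I_{d_3}\otimes \U_s^\top\U_s)=\M$, using $\sum_{l}\U_s^\top\e_l\e_l^\top\U_s=\I_{r_1}$. Hence $d^*\EE[\Y_i\otimes \Y_i]=\calI$ on the $d_2\times d_3r_1$ space, and after sandwiching we get $\EE\calZ_i=\calP_{\TT^*}$.

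The second step, which I expect to carry the real content, is the uniform bound $\op{\calZ_i}\le d^*\fro{\calP_{\TT^*}\Y_i}^2$. The summand $\Y_i=\e_{k}(\e_g\otimes \U_s^\top\e_l)^\top$ is rank one with row factor $\e_k$ and column factor $\e_g\otimes\U_s^\top \e_l$. Here $\TT^*$ is the tangent space at $\M^*=\V^*\C_2^*(\W^{*\top}\otimes\R)$, with column space $\V^*$ and row space contained in $\W^*\otimes\R^\top$, so $\calP_{\TT^*}\X=\V^*\V^{*\top}\X+\X(\W^*\W^{*\top}\otimes\I_{r_1})-\V^*\V^{*\top}\X(\W^*\W^{*\top}\otimes\I_{r_1})$. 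This gives
\begin{align*}
\fro{\calP_{\TT^*}\Y_i}^2\le \ltwo{\V^{*\top}\e_k}^2\ltwo{\U_s^\top\e_l}^2 + \ltwo{\W^{*\top}\e_g}^2\ltwo{\U_s^\top\e_l}^2 \le \frac{\mu_s r_1}{d_1}\Big(\frac{\mu r_2}{d_2}+\frac{\mu r_3}{d_3}\Big).
\end{align*}
Multiplying by $d^*$ yields $\op{\calZ_i}\le \mu\mu_s(r_1r_2d_3+r_1d_2r_3)=:B$. The key point is that the $d_1$ factor cancels against $\mu_s r_1/d_1$, which is exactly where the side information enters. One caveat: if I must use the row space of $\M^*$ as $\W^*\otimes\R^\top$ rather than the full $\W^*\otimes\I$, the bound only improves.

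The third step is the variance. Since $\calZ_i$ is PSD and $\calZ_i^2=d^*\fro{\calP_{\TT^*}\Y_i}^2\calZ_i\preceq B\calZ_i$, we get $\op{\EE\calZ_i^2}\le B\op{\EE\calZ_i}=B$. The total variance is then at most $nB$, and the centered summands are bounded by $2B$.

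Finally, apply matrix Bernstein on the space of dimension at most $d_2d_3r_1\le d^3$ with $t\asymp\sqrt{nB\log d}+B\log d$. With probability $1-d^{-10}$,
\begin{align*}
\Big\|\sum_i(\calZ_i-\EE\calZ_i)\Big\|\lesssim \sqrt{nB\log d}+B\log d .
\end{align*}
Dividing by $n$ gives the stated bound. When $n\gtrsim B\log d$ the square-root term dominates, and it is at most $0.01$ for a large enough constant.
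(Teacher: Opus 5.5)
Your proposal is correct and follows essentially the same route as the paper. Both write the deviation as a sum of i.i.d.\ centered self-adjoint operators, bound each summand through $d^*\fro{\calP_{\TT^*}\Y_i}^2\le \mu\mu_s(r_1d_2r_3+r_1r_2d_3)$ using the incoherence of $\V^*$, $\W^*$ and $\U_s$, bound the variance by the same quantity, and apply the operator Bernstein inequality; your explicit verification that $\EE\calZ_i=\calP_{\TT^*}$ and your variance argument via $\calZ_i^2\preceq B\,\calZ_i$ fill in steps the paper only asserts.
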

	\begin{proof}
		Denote the operators $\calY_i(\cdot) =\inp{\cdot}{\Y_i}$ with $\Y_i = \e_{k_i}(\e_{g_i}\otimes\e_{l_i})^\top(\I\otimes \U_s)$ and
        \[
        \calZ_i = \frac{d^*}{n}\calP_{\TT^*}\calY_i^*\calY_i\calP_{\TT^*} - \frac{1}{n}\calP_{\TT^*}.
        \]
        Notice $\EE\calZ_i = 0$ and $\frac{d^*}{n}\calP_{\TT^*}\calY^*\calY\calP_{\TT^*} - \calP_{\TT^*} = \sum_{i=1}^n\calZ_i$. 
		
		\noindent\textit{Upper bound for $\op{\calZ_i}$. } Notice 
		\begin{align*}
			(\calP_{\TT^*}\calY_i^*\calY_i\calP_{\TT^*})^2(\M) = \fro{\calP_{\TT^*}\big(\e_{k_i}(\e_{g_i}\otimes\e_{l_i})^\top(\I\otimes \U_s)\big)}^2\cdot\calP_{\TT^*}\calY_i^*\calY_i\calP_{\TT^*}(\M).
		\end{align*}
		Recall $\calP_{\TT^*}(\cdot) = \V^*\V^{*\top}\cdot + (\I-\V^*\V^{*\top})\cdot(\W^*\W^{*\top}\otimes \I)$. And therefore
		\begin{align*}
			\fro{\calP_{\TT^*}\big(\e_{k_i}(\e_{g_i}\otimes\e_{l_i})^\top(\I\otimes \U_s)\big)}^2 \leq \frac{\mu r_2}{d_2}\cdot \frac{\mu_s r_1}{d_1} + \frac{\mu r_3}{d_3}\cdot \frac{\mu_s r_1}{d_1}.
		\end{align*}
		As a result, 
		\begin{align*}
			\op{\calP_{\TT^*}\calY_i^*\calY_i\calP_{\TT^*}} \leq \fro{\calP_{\TT^*}\big(\e_{k_i}(\e_{g_i}\otimes\e_{l_i})^\top(\I\otimes \U_s)\big)}^2\leq \frac{\mu r_2}{d_2}\cdot \frac{\mu_s r_1}{d_1} + \frac{\mu r_3}{d_3}\cdot \frac{\mu_s r_1}{d_1}.
		\end{align*}
		And 
		\begin{align*}
			\op{\calZ_i} \leq \frac{2d^*}{n}(\frac{\mu r_2}{d_2}\cdot \frac{\mu_s r_1}{d_1} + \frac{\mu r_3}{d_3}\cdot \frac{\mu_s r_1}{d_1}) = \frac{2\mu\mu_s}{n}(r_1d_2r_3 + r_1r_2d_3). 
		\end{align*}
		
		\noindent\textit{Upper bound for variance term $\op{\EE\sum_{i=1}^n\calZ_i^2}$. } Notice 
		$$\calZ_i^2 = (\frac{d^*}{n})^2(\calP_{\TT^*}\calY_i^*\calY_i\calP_{\TT^*})^2 -\frac{2d^*}{n^2}\calP_{\TT^*}\calY_i^*\calY_i\calP_{\TT^*} + \frac{1}{n^2}\calP_{\TT^*}.$$ 
		And therefore
		\begin{align*}
			\EE \calZ_i^2 = (\frac{d^*}{n})^2\EE(\calP_{\TT^*}\calY_i^*\calY_i\calP_{\TT^*})^2 -\frac{1}{n^2}\calP_{\TT^*}. 
		\end{align*}
		Meanwhile, 
		\begin{align*}
			\op{\EE(\calP_{\TT^*}\calY_i^*\calY_i\calP_{\TT^*})^2} \leq \frac{1}{d^*}\bigg(\frac{\mu r_2}{d_2}\cdot \frac{\mu_s r_1}{d_1} + \frac{\mu r_3}{d_3}\cdot \frac{\mu_s r_1}{d_1}\bigg).
		\end{align*}
		So we conclude 
		\begin{align*}
			\op{\EE\sum_{i=1}^n\calZ_i^2} \leq \frac{\mu\mu_s}{n}(r_1d_2r_3 + r_1r_2d_3).
		\end{align*}
		So we conclude using operator Bernstein inequality, with probability exceeding $1-d^{-10}$, 
		\begin{align*}
			\op{\frac{d^*}{n}\calP_{\TT^*}\calY^*\calY\calP_{\TT^*} - \calP_{\TT^*}} \lesssim \frac{\mu\mu_s}{n}(r_1d_2r_3 + r_1r_2d_3)\log d +  \sqrt{\frac{\mu\mu_s}{n}(r_1d_2r_3 + r_1r_2d_3)\log d}.
		\end{align*}
	\end{proof}

	\begin{lemma}\label{lemma:empirical}
		Let $s>0$, and $t = s + \log 12 + 2\log \log(d)$. 
		Then with probability exceeding $1-e^{-s}$, the following holds for all $\gamma_1\in[\frac{1}{d_2d_3r_1},1], \gamma_2\in[1, \sqrt{d_2\wedge d_3r_1}]$, 
		\begin{align*}
			\beta_n(\gamma_1,\gamma_2) &\lesssim \frac{\sqrt{\mu_s}r_1}{\sqrt{d_1}}\gamma_1\gamma_2\bigg(\sqrt{\frac{n}{d^*}(d_3r_1\vee d_2) \log d} + \sqrt{\frac{\mu_s r_1}{d_1}}\log d\bigg) \\
			&\quad + \sqrt{nt\frac{\mu_s r_1^2}{d_1d^*}\gamma_1^2} + \frac{\mu_s r_1^2}{d_1}\gamma_1^2t.
		\end{align*}
	\end{lemma}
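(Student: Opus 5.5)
We will prove Lemma \ref{lemma:empirical} by first controlling $\beta_n(\gamma_1,\gamma_2)$ at fixed values $(\gamma_1,\gamma_2)$ and then passing to the full range through a dyadic peeling (union bound) argument; the peeling is what generates the $2\log\log d$ term in $t$. Write
$$\ltwo{\calY(\M)}^2-\frac{n}{d^*}\fro{\M}^2=\sum_{i=1}^n\big(\inp{\Y_i}{\M}^2-\EE\inp{\Y_i}{\M}^2\big),$$
using $\EE\inp{\Y_i}{\M}^2=\fro{\M}^2/d^*$, which holds because $\I_{d_3}\otimes\U_s$ has orthonormal columns. Each summand is bounded as follows. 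Since $\Y_i=\e_{k_i}(\e_{g_i}\otimes\e_{l_i})^\top(\I\otimes\U_s)$, the inner product $\inp{\Y_i}{\M}$ is the inner product of the $k_i$-th row of $\M$, restricted to the $g_i$-block, with $\U_s^\top\e_{l_i}$. We therefore get
$$|\inp{\Y_i}{\M}|\le \sqrt{r_1}\,\gamma_1\ltwo{\U_s^\top\e_{l_i}}\le \gamma_1 r_1\sqrt{\mu_s/d_1}.$$
It follows that $\inp{\Y_i}{\M}^2\le \frac{\mu_s r_1^2}{d_1}\gamma_1^2$ and that the variance satisfies $\EE\inp{\Y_i}{\M}^4\le \frac{\mu_s r_1^2}{d_1}\gamma_1^2\cdot\frac{1}{d^*}$.

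\textbf{Fixed-level bound.} We apply Talagrand's (Bousquet's) concentration inequality for suprema of bounded empirical processes over the class $\KK_{\gamma_1,\gamma_2}$. With envelope $b=\frac{\mu_s r_1^2}{d_1}\gamma_1^2$ and variance $n b/d^*$, we obtain with probability $1-e^{-t}$
$$\beta_n\lesssim \EE\beta_n+\sqrt{nt\,\frac{\mu_s r_1^2}{d_1d^*}\gamma_1^2}+\frac{\mu_s r_1^2}{d_1}\gamma_1^2t.$$
This accounts for the last two terms of the lemma.

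\textbf{Expectation bound.} We symmetrize to get $\EE\beta_n\le 2\EE\sup_{\M}\sum_i\varepsilon_i\inp{\Y_i}{\M}^2$. Then we apply the Ledoux--Talagrand contraction principle, valid because $x\mapsto x^2$ is Lipschitz with constant $2\max_i|\inp{\Y_i}{\M}|\le 2r_1\sqrt{\mu_s/d_1}\gamma_1$ on the relevant range. This reduces the problem to bounding
$$\EE\sup_{\M}\Big|\sum_i\varepsilon_i\inp{\Y_i}{\M}\Big|\le \gamma_2\,\EE\op{\sum_i\varepsilon_i\Y_i},$$
by nuclear/operator duality. The remaining operator norm is handled exactly as in \eqref{bound2}, that is, by the matrix Bernstein inequality with $\op{\Y_i}\le\sqrt{\mu_s r_1/d_1}$ and variance proxy $\frac{n}{d^*}(d_2\vee d_3r_1)$. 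This gives
$$\EE\op{\sum_i\varepsilon_i\Y_i}\lesssim\sqrt{\tfrac{n}{d^*}(d_2\vee d_3r_1)\log d}+\sqrt{\tfrac{\mu_s r_1}{d_1}}\log d.$$
Multiplying by $r_1\sqrt{\mu_s/d_1}\,\gamma_1\gamma_2$ yields the first term of the lemma.

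\textbf{Uniformity over $(\gamma_1,\gamma_2)$.} We cover $[\frac{1}{d_2d_3r_1},1]\times[1,\sqrt{d_2\wedge d_3r_1}]$ by dyadic grids, which have $O(\log d)$ points in each coordinate. At each grid point we apply the fixed-level bound with confidence level $e^{-t}$, where $t=s+\log 12+2\log\log d$, and take a union bound over all grid points. Any $(\gamma_1,\gamma_2)$ in the range is dominated by the next grid point up, which is at most twice as large in each coordinate, and $\beta_n$ is monotone in both arguments; so the bound transfers at the cost of a constant factor.

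The main obstacle is the contraction step. The map $x\mapsto x^2$ is Lipschitz only on a bounded interval, so the contraction argument must be set up with the deterministic bound $|\inp{\Y_i}{\M}|\le r_1\sqrt{\mu_s/d_1}\gamma_1$, and this bound must hold uniformly over the class. Checking that the factor $r_1$ (rather than $\sqrt{r_1}$) correctly absorbs the $\ell_\infty$-to-row-block conversion is the delicate point, and it is where we will be most careful.
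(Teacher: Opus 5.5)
Your proposal is correct and follows the paper's proof essentially step for step. The shared steps are: Bousquet's form of Talagrand's inequality at each fixed dyadic level, with envelope $\frac{\mu_s r_1^2}{d_1}\gamma_1^2$ and variance $\frac{\mu_s r_1^2}{d_1 d^*}\gamma_1^2$; symmetrization, contraction with Lipschitz constant of order $r_1\sqrt{\mu_s/d_1}\,\gamma_1$, nuclear/operator duality, and noncommutative Bernstein for $\EE\op{\sum_i\varepsilon_i\Y_i}$; and a union bound over the dyadic grid combined with monotonicity of $\beta_n$. The point you flag as delicate is settled by your own Cauchy--Schwarz computation, which is exactly the paper's argument (each column of $(\I\otimes\U_s)^\top$ has at most $r_1$ nonzeros and $\ell_2$ norm at most $\sqrt{\mu_s r_1/d_1}$), so the factor $r_1$ is right.
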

	\begin{proof}
		We set
        \[
        \begin{aligned}
        \delta_{1,j} &= 2^{-j} \quad \text{for } j = 0,\cdots, j_0:= \lceil\log_2(d_2d_3r_1)\rceil,\\
        \delta_{2,k} &= 2^k, \quad k = 0,\cdots, k_0:=\lceil\log_2(\sqrt{d_2\wedge d_3r_1})\rceil.
        \end{aligned}
        \] 
		For each $j,k$, we first derive an upper bound for $\beta_n(\gamma_1,\gamma_2)$ with $\gamma_1 = \delta_{1,j},\gamma_2 = \delta_{2,k}$. 
		
		We define the random matrix which is uniformly distributed in $\{\e_k(\e_g\otimes \e_l)^\top(\I\otimes \U_s): (l,k,g)\in[d_1]\times [d_2]\times [d_3]\}$ by $\Y$, and let $\Y_1,\cdots, \Y_n$ be $n$ i.i.d. copies of $\Y$. Observe that 
		\begin{align*}
			\sup_{\M\in\KK_{\gamma_1,\gamma_2}}\big|\inp{\M}{\Y}^2 - \frac{1}{d^*}\fro{\M}^2\big| &\leq \sup_{\M\in\KK_{\gamma_1,\gamma_2}}\max_{l,k,g}\big|\inp{\e_k(\e_g\otimes \e_l)^\top}{\M(\I\otimes \U_s)^\top}^2\big|\\
			&\leq r_1\gamma_1^2 \cdot \frac{\mu_s r_1}{d_1} = \frac{\mu_s r_1^2}{d_1}\gamma_1^2,
		\end{align*}
		where the last inequality holds since each column of $(\I\otimes \U_s)^\top$ has at most $r_1$ non-zeros and is bounded by $\sqrt{\frac{\mu_s r_1}{d_1}}$ in $\ell_2$ norm, and $\linf{\M}\leq \gamma_1$. On the other hand, 
		\begin{align*}
			\sup_{\M\in\KK_{\gamma_1,\gamma_2}}\var\inp{\Y}{\M}^2\leq \sup_{\M\in\KK_{\gamma_1,\gamma_2}}\EE\inp{\Y}{\M}^4\leq \frac{\mu_s r_1^2}{d_1d^*}\gamma_1^2.
		\end{align*}
		Following Bousquet's version of Talagrand concentration inequality \citep[Theorem 3.3.9]{gine2021mathematical}, with probability exceeding $1-e^{-t}$ for any $t>0$, we have 
		\begin{align}\label{betan}
			\beta_n(\gamma_1,\gamma_2) \leq 2\EE\beta_n(\gamma_1,\gamma_2) + \sqrt{2nt\frac{\mu_s r_1^2}{d_1d^*}\gamma_1^2} + \frac{4}{3}\frac{\mu_s r_1^2}{d_1}\gamma_1^2t.
		\end{align}
		And we have 
		\begin{align}\label{Ebetan}
			\EE\beta_n(\gamma_1,\gamma_2) &\leq 2\EE\sup_{\M\in\KK_{\gamma_1,\gamma_2}}\big|\sum_{i=1}^n\epsilon_i\inp{\Y_i}{\M}^2\big|\notag\\
			&\leq 4\frac{\sqrt{\mu_s}r_1}{\sqrt{d_1}}\gamma_1\EE\sup_{\M\in\KK_{\gamma_1,\gamma_2}}\big|\sum_{i=1}^n\epsilon_i\inp{\Y_i}{\M}\big|\notag\\
			&\leq 4\frac{\sqrt{\mu_s}r_1}{\sqrt{d_1}}\gamma_1 \cdot \EE\op{\sum_{i=1}^n\epsilon_i\Y_i}\cdot\nuc{\M}
		\end{align}
		Here the first inequality follows from symmetric inequality, and $\epsilon_1,\cdots,\epsilon_n$ are i.i.d. Rademacher random variables; and the second inequality follows from contraction inequality and $|\inp{\Y_i}{\M}|\leq \frac{\sqrt{\mu_s}r_1}{\sqrt{d_1}}\gamma_1$. Using noncommutative Bernstein inequality (e.g. \cite[Theorem 4]{recht2011simpler}), we conclude for arbitrary $s>0$, 
		\begin{align*}
			\PP\bigg(\op{\sum_{i=1}^n \epsilon_i\Y_i}\geq s\bigg) \leq (d_2+d_3r_1) \exp\bigg(\frac{-s^2/2}{\frac{n}{d^*}(d_2\vee d_3r_1) + \frac{s}{3}\sqrt{\frac{\mu_s r_1}{d_1}}}\bigg).
		\end{align*}
		This leads to the upper bound for the expectation of the operator norm:
		\begin{align*}
			\EE\op{\sum_{i=1}^n \epsilon_i\Y_i} \leq \sqrt{\frac{n}{d^*}(d_3r_1\vee d_2) \log d} + \sqrt{\frac{\mu_s r_1}{d_1}}\log d.
		\end{align*}
		Together with \eqref{betan}, \eqref{Ebetan}, we conclude with probability exceeding $1-e^{-t}$, 
		\begin{align*}
			\beta_n(\gamma_1,\gamma_2) \lesssim \frac{\sqrt{\mu_s}r_1}{\sqrt{d_1}}\gamma_1\gamma_2\bigg(\sqrt{\frac{n}{d^*}(d_3r_1\vee d_2) \log d} + \sqrt{\frac{\mu_s r_1}{d_1}}\log d\bigg) + \sqrt{nt\frac{\mu_s r_1^2}{d_1d^*}\gamma_1^2} + \frac{\mu_s r_1^2}{d_1}\gamma_1^2t.
		\end{align*}
		Taking union bound over all $j,k$, with probability at least $1-3\log_2^2(d)e^{-t}$, the above holds for all $\gamma_1\in\{\delta_{1,0},\cdots,\delta_{1,j_0}\}, \gamma_2\in\{\delta_{2,0},\cdots,\delta_{2,k_0}\}$. Now for arbitrary $\gamma_1\in[\frac{1}{d_2d_3r_1},1], \gamma_2\in[1, \sqrt{d_2\wedge d_3r_1}]$, there exist $j,k$ such that $\gamma_1\in[\delta_{1,j-1},\delta_{1,j}]$, $\gamma_2\in[\delta_{2,k-1},\delta_{2,k}]$, such that 
		\begin{align*}
			\beta_n(\gamma_1,\gamma_2) \leq \beta_n(\delta_{1,j},\delta_{2,k}) &\lesssim \frac{\sqrt{\mu_s}r_1}{\sqrt{d_1}}\gamma_1\gamma_2\bigg(\sqrt{\frac{n}{d^*}(d_3r_1\vee d_2) \log d} + \sqrt{\frac{\mu_s r_1}{d_1}}\log d\bigg) \\
			&\quad + \sqrt{nt\frac{\mu_s r_1^2}{d_1d^*}\gamma_1^2} + \frac{\mu_s r_1^2}{d_1}\gamma_1^2t,
		\end{align*}
		where in the last inequality we implicitly use $\delta_{1,j}\leq 2\gamma_1$, $\delta_{2,k}\leq 2\gamma_2$. 
	\end{proof}
	
	\begin{lemma}\label{lemma:spiki-incoh}
		If $\fro{\M-\M^*}\leq \frac{1}{2}\lambda_{\min}$, and $\zeta\geq \linf{\M^*}$, then the incoherence of $\svd_r\big(\trim_{\zeta}(\M)\big)$ is bounded by $\frac{4d_2d_3r_1}{r}\frac{\zeta^2}{\lambda_{\min}^2}$. 
	\end{lemma}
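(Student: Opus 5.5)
Let $\hat\M = \svd_r(\trim_{\zeta}(\M))$ have compact SVD $\hat\M = \hat\L\hat\S\hat\R^\top$, with $\hat\L\in\OO_{d_2,r}$ and $\hat\R\in\OO_{d_3r_1,r}$. Incoherence is controlled row by row, so the plan is to express each row of $\hat\L$ and $\hat\R$ through rows of the trimmed matrix. Set $\M_1 = \trim_{\zeta}(\M)$. Since $\hat\L$ spans the top-$r$ left singular space of $\M_1$, we have $\hat\L = \M_1\hat\R_1\hat\S_1^{-1}$, where $\M_1 = \hat\L\hat\S_1\hat\R_1^\top + (\text{rest})$ is the full SVD restricted to the top $r$ components. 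Hence, for each $i\in[d_2]$,
\[
\ltwo{\hat\L^\top\e_i} \le \frac{\ltwo{\M_1^\top\e_i}}{\sigma_r(\M_1)} \le \frac{\sqrt{d_3r_1}\,\zeta}{\sigma_r(\M_1)},
\]
because every entry of $\M_1$ is bounded by $\zeta$ in absolute value. The same argument applied to columns gives $\ltwo{\hat\R^\top\e_j}\le \sqrt{d_2}\,\zeta/\sigma_r(\M_1)$.

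Next I need a lower bound $\sigma_r(\M_1)\ge \lambda_{\min}/2$. Entrywise trimming at level $\zeta$ is the Euclidean projection onto the box $\{\linf{\cdot}\le\zeta\}$, and this box contains $\M^*$ because $\zeta\ge\linf{\M^*}$. Projection onto a convex set is nonexpansive, so $\fro{\M_1-\M^*}\le\fro{\M-\M^*}\le\frac12\lambda_{\min}$. Weyl's inequality then yields $\sigma_r(\M_1)\ge\sigma_r(\M^*)-\op{\M_1-\M^*}\ge\lambda_{\min}/2$. Here $\lambda_{\min}$ is understood as a lower bound on $\sigma_r(\M^*)$; this holds since the nonzero singular values of $\M^*=\calM_2(\bcalT^*\times_1 \R^\top\U^{*\top})$ coincide with those of $\calM_2(\bcalT^*)$.

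Substituting this bound,
\[
\incoh(\hat\L) = \frac{d_2}{r}\max_i\ltwo{\hat\L^\top\e_i}^2 \le \frac{d_2}{r}\cdot\frac{4d_3r_1\zeta^2}{\lambda_{\min}^2},
\]
and likewise $\incoh(\hat\R)\le \frac{d_3r_1}{r}\cdot\frac{4d_2\zeta^2}{\lambda_{\min}^2}$. Both equal the claimed quantity $\frac{4d_2d_3r_1}{r}\frac{\zeta^2}{\lambda_{\min}^2}$. The argument contains no real obstacle. The only points needing care are the nonexpansiveness of trimming, which uses that $\M^*$ lies in the box, and the fact that the singular values of $\M^*$ are bounded below by $\lambda_{\min}$.
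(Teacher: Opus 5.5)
Your proof is correct and essentially matches the paper's argument. You write the top-$r$ left (resp.\ right) singular vectors of the trimmed matrix $\M_1=\trim_{\zeta}(\M)$ as $\M_1\hat\R\hat\S^{-1}$ (resp.\ $\M_1^\top\hat\L\hat\S^{-1}$), bound each row (resp.\ column) of $\M_1$ by $\sqrt{d_3r_1}\,\zeta$ (resp.\ $\sqrt{d_2}\,\zeta$), and use $\op{\hat\S^{-1}}\le 2\lambda_{\min}^{-1}$ from the nonexpansiveness of trimming toward $\M^*$.

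The one difference is in your favor. You get $\sigma_r(\M_1)\ge\lambda_{\min}/2$ by applying Weyl's inequality directly to $\M_1$. The paper instead passes through the claim $\fro{\svd_r(\M_1)-\M^*}\le\fro{\M_1-\M^*}$, which is false in general: take $\M^*=\mathrm{diag}(1,0)$ and $\M_1$ with rows $(1,a)$, $(a,2a^2)$ for small $a>0$. The lemma's conclusion is unaffected by this.
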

	\begin{proof}
		We denote $\tilde\M = \trim_{\zeta}(\M)$. And write the compact svd $\svd_r(\tilde\M) = \tilde\L \tilde\S \tilde\R^\top$. 
		Since $\zeta\geq \linf{\M^*}$, we have $\fro{\tilde\M - \M^*}\leq \fro{\M - \M^*}$. Moreover, since $\rank(\M^*) = r$, we have $\fro{\svd_r\big(\trim_{\zeta}(\M)\big)-\M^*}\leq \fro{\tilde\M - \M^*}$. This indicates 
		\begin{align*}
			\fro{\svd_r\big(\trim_{\zeta}(\M)\big)-\M^*}\leq \frac{1}{2}\lambda_{\min}. 
		\end{align*}
		And thus $\op{\tilde\S^{-1}}\leq 2\lambda_{\min}^{-1}$. 
		Then we have 
		\begin{align*}
			\ltwo{\e_i^\top\tilde\L} &= \ltwo{\e_i^\top\tilde\M\tilde\R \tilde\S^{-1}} \leq \ltwo{\e_i^\top\tilde\M\tilde\R}\cdot\op{\tilde\S^{-1}}\\
			&\leq \sqrt{d_3r_1}\linf{\tilde\M}\cdot 2\lambda_{\min}^{-1}\\
			& = 2\sqrt{d_3r_1}\zeta\lambda_{\min}^{-1}. 
		\end{align*}
		And we can similarly show the incoherence of $\tilde\R$.
	\end{proof}

	\begin{lemma}[Remark 6.2, \cite{keshavan2010matrix}]\label{lemma:truncation}
		Let $\U,\X\in\RR^{d\times r}$ be column orthonormal and $\incoh(\U)\leq \mu_0$ and $d_{\rm F}(\U,\X)\leq \frac{1}{16\pi}$. 
		Let $\hat\X = \trunc(\X,\mu_0)$. 
		Then $\hat\X$ satisfies $$\incoh(\hat\X)\leq 3\mu_0, \quad d_{\rm F}(\hat\X,\U)\leq 4\pi \cdot d_{\rm F}(\X,\U).$$ 
	\end{lemma}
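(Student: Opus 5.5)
We reduce to a nonexpansive projection followed by a small re-orthonormalization, and we track everything through the Procrustes-aligned difference $\X\O-\U$. Let $\O\in\OO_r$ minimize $\fro{\X\O-\U}$; by the relation $\sf{d_{proc}}\le\sf{d_p}$ recalled in Appendix \ref{app:gamma}, $\epsilon_0:=\fro{\X\O-\U}\le d_{\rm F}(\X,\U)\le\frac{1}{16\pi}$.

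The first step is to check that $\trunc$ is equivariant under right multiplication by orthogonal matrices. Row norms are invariant under $\X\mapsto\X\O$, so the row-rescaling step gives $\U_0\mapsto\U_0\O$. Then $\U_0\O\big((\U_0\O)^\top\U_0\O\big)^{-1/2}=\U_0(\U_0^\top\U_0)^{-1/2}\O$. Hence $\trunc(\X\O,\mu_0)=\trunc(\X,\mu_0)\O$. Both incoherence and $d_{\rm F}$ depend only on column spans, so we may assume from now on that $\X$ is aligned, i.e. $\fro{\X-\U}=\epsilon_0$.

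Next we observe that the row step of Algorithm \ref{alg:trunc} is exactly the Euclidean projection of each row of $\X$ onto the ball of radius $\sqrt{\mu_0 r/d}$. Since $\incoh(\U)\le\mu_0$, every row of $\U$ lies in that ball. Projection onto a convex set is nonexpansive, so $\ltwo{\U_0(i,:)-\U(i,:)}\le\ltwo{\X(i,:)-\U(i,:)}$ row by row, and summing gives $\fro{\U_0-\U}\le\epsilon_0$. Writing $\U_0=\U+\Delta$ and using $\U^\top\U=\I$, we get
\[
\op{\U_0^\top\U_0-\I}\le\fro{\U_0^\top\U_0-\I}\le 2\fro{\Delta}+\fro{\Delta}^2\le 3\epsilon_0=:\epsilon .
\]
With $\epsilon\le 3/(16\pi)<0.06$, all singular values of $\U_0$ lie in $[\sqrt{1-\epsilon},\sqrt{1+\epsilon}]$.

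\emph{Incoherence.} The $i$-th row of $\hat\X$ equals $\U_0(i,:)(\U_0^\top\U_0)^{-1/2}$. Its norm is therefore at most $\sqrt{\mu_0 r/d}\,(1-\epsilon)^{-1/2}$. This gives $\incoh(\hat\X)\le\mu_0/(1-\epsilon)\le 3\mu_0$, with a lot of room.

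\emph{Distance.} We bound
\[
\fro{\hat\X-\U}\le\fro{\U_0-\U}+\op{\U_0}\,\fro{(\U_0^\top\U_0)^{-1/2}-\I}.
\]
For symmetric $\A$ with eigenvalues in $[1-\epsilon,1+\epsilon]$, the scalar inequality $|a^{-1/2}-1|\le C|a-1|$ on that interval yields $\fro{\A^{-1/2}-\I}\le C'\fro{\A-\I}$. Together with the bound on $\U_0^\top\U_0-\I$, this gives $\fro{\hat\X-\U}\le c\,\epsilon_0$ for an explicit $c$ of about $5$. Finally,
\[
d_{\rm F}(\hat\X,\U)=\fro{\hat\X\hat\X^\top-\U\U^\top}\le\fro{(\hat\X-\U)\hat\X^\top}+\fro{\U(\hat\X-\U)^\top}\le 2\fro{\hat\X-\U}.
\]
Combined with $\epsilon_0\le d_{\rm F}(\X,\U)$, this yields $d_{\rm F}(\hat\X,\U)\le 2c\,d_{\rm F}(\X,\U)\le 4\pi\,d_{\rm F}(\X,\U)$.

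The main obstacle is the constant bookkeeping in the distance bound, namely showing that $2c$ indeed fits under $4\pi\approx12.6$. If the crude route falls short, I would tighten the matrix square-root perturbation step using $\fro{\A^{-1/2}-\I}\le\fro{\A-\I}/\big(\sqrt{1-\epsilon}(1+\sqrt{1-\epsilon})\big)$. The equivariance step is conceptually necessary but routine, and the smallness $\frac{1}{16\pi}$ is used only to keep $\epsilon$ below a small constant.
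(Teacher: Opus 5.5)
Your proof is correct. The paper gives no argument of its own for this lemma; it imports Remark 6.2 of Keshavan, Montanari and Oh, which is phrased in terms of a principal-angle (Grassmann) distance. The factors of $\pi$ in the threshold $\frac{1}{16\pi}$ and in the constant $4\pi$ look like the cost of translating that distance into $d_{\rm F}$.

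Your route works directly with $d_{\rm F}$ and has four ingredients:
\begin{itemize}
\item Procrustes alignment, via $d_{\rm proc}\le d_{\rm F}$.
\item The equivariance $\trunc(\X\O,\mu_0)=\trunc(\X,\mu_0)\O$.
\item The observation that the row step of Algorithm \ref{alg:trunc} is the projection onto the ball of radius $\sqrt{\mu_0 r/d}$, which contains every row of $\U$. This is exactly why capping at the incoherence radius of $\U$ makes nonexpansiveness work.
\item A spectral perturbation bound for $(\U_0^\top\U_0)^{-1/2}$, together with $d_{\rm F}(\hat\X,\U)\le 2\fro{\hat\X-\U}$.
\end{itemize}
All of these steps check out. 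Your bound $\sigma_{\min}(\U_0)\ge\sqrt{1-3\epsilon_0}>0$ also shows that $\trunc$ is well defined, which the statement tacitly assumes.

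The self-contained route also gives sharper constants under a weaker hypothesis. For incoherence you get $\incoh(\hat\X)\le\mu_0/(1-3\epsilon_0)\le 1.07\mu_0$. For the distance, $C=1/\big(\sqrt{1-\epsilon}(1+\sqrt{1-\epsilon})\big)\le 0.53$ gives $c\le 1+3\sqrt{1.06}\cdot 0.53\le 2.7$, so $d_{\rm F}(\hat\X,\U)\le 5.4\,d_{\rm F}(\X,\U)$. Even the crude choice $C=1$ gives $2c\le 8.2<4\pi$. So the constant bookkeeping you flagged as the main obstacle is not delicate, and the smallness condition $d_{\rm F}(\X,\U)\le\frac{1}{16\pi}$ could be relaxed to $d_{\rm F}(\X,\U)$ below a modest absolute constant.
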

	
	\hspace{1cm}

	We define a bi-linear operator $\calL:\RR^{d_2\times d_3r_1}\times\RR^{d_2\times d_3r_1} \rightarrow \RR^{d_2\times d_3r_1^2}$ as 
	\begin{align*}
		\calL(\A,\B)(i;j,k,l) = \A(i;j,k)\B(i;j,l), \forall i\in[d_2],j\in[d_3],k,l\in[r_1]. 
	\end{align*}
	Notice here we use the multi-index $(j,k,l)\in[d_3r_1^2]$ and $(j,k),(j,l)\in[d_3r_1]$ to denote the column indices of matrices. 
	\begin{lemma}\label{lemma:YYPt}
		With probability exceeding $1-d^{-10}$, the following holds for arbitrary $\A,\B\in\RR^{d_2\times d_3r_1}$
		\begin{align*}
			|\inp{(\calY^*\calY - \frac{n}{d^*}\calI)\A}{\B}| \lesssim \bigg(\sqrt{\frac{n\mu_s r_1}{d^*d_1}(d_2\vee d_3r_1)\log d} + \frac{\mu_s r_1}{d_1}\log d\bigg)\cdot\nuc{\calL(\A,\B)}. 
		\end{align*}
	\end{lemma}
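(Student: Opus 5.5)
We need to prove Lemma \ref{lemma:YYPt}: uniformly over all $\A,\B$, the quantity $|\inp{(\calY^*\calY - \frac{n}{d^*}\calI)\A}{\B}|$ is bounded by a constant multiple of $\nuc{\calL(\A,\B)}$.

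The plan is to rewrite the bilinear form as a linear functional of the single matrix $\calL(\A,\B)$. The key observation is that $\inp{\Y_i}{\A}\inp{\Y_i}{\B}$ depends on $\A,\B$ only through their entrywise products within a common row and a common $d_3$-block. Indeed, with $\omega_i=(l_i,k_i,g_i)$ and $\u_i := \U_s^\top\e_{l_i}\in\RR^{r_1}$, we have
$$\inp{\Y_i}{\A} = \sum_{a\in[r_1]} \A(k_i;g_i,a)\,[\u_i]_a,$$
and similarly for $\B$. Hence
$$\inp{\Y_i}{\A}\inp{\Y_i}{\B} = \sum_{a,b}\A(k_i;g_i,a)\B(k_i;g_i,b)[\u_i]_a[\u_i]_b = \inp{\calL(\A,\B)}{\Z_i},$$
where $\Z_i := \e_{k_i}(\e_{g_i}\otimes \u_i\otimes\u_i)^\top\in\RR^{d_2\times d_3r_1^2}$.

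Likewise, $\frac{1}{d^*}\fro{\cdot}$-type expectations become linear: $\EE\Z_i = \frac{1}{d^*}\sum_{k,g}\e_k(\e_g\otimes \mathrm{vec}(\U_s^\top\U_s))^\top = \frac{1}{d^*}\sum_{k,g}\e_k(\e_g\otimes\mathrm{vec}(\I_{r_1}))^\top$. Pairing this with $\calL(\A,\B)$ gives exactly $\frac{1}{d^*}\inp{\A}{\B}$. Therefore
$$\inp{(\calY^*\calY - \tfrac{n}{d^*}\calI)\A}{\B} = \Big\langle\calL(\A,\B),\ \sum_{i=1}^n(\Z_i-\EE\Z_i)\Big\rangle,$$
and by trace duality this is at most $\nuc{\calL(\A,\B)}\cdot\op{\sum_i(\Z_i-\EE\Z_i)}$. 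Since this bound is deterministic given the operator norm, uniformity over all $\A,\B$ comes for free. It remains to bound one random matrix, which is done with the matrix Bernstein inequality exactly as in \eqref{bound1}.

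For the Bernstein parameters, the uniform bound is $\op{\Z_i}=\ltwo{\u_i}^2\le \mu_s r_1/d_1$, and $\op{\EE\Z_i}$ is smaller, so $\op{\Z_i-\EE\Z_i}\lesssim \mu_s r_1/d_1$. For the variance, $\Z_i\Z_i^\top = \ltwo{\u_i}^4\e_{k_i}\e_{k_i}^\top$, so
$$\EE\Z_i\Z_i^\top \preceq \frac{\mu_s r_1}{d_1}\cdot\frac{1}{d_2}\EE_{l}\ltwo{\u_l}^2\,\I_{d_2} = \frac{\mu_s r_1}{d_1}\frac{r_1}{d_1d_2}\I,$$
using $\sum_l\ltwo{\u_l}^2 = r_1$. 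On the column side, $\Z_i^\top\Z_i = (\e_{g_i}\e_{g_i}^\top)\otimes(\u_i\otimes\u_i)(\u_i\otimes\u_i)^\top$. Averaging over $l$ and using $\op{\sum_l (\u_l\otimes\u_l)(\u_l\otimes\u_l)^\top}\le \max_l\ltwo{\u_l}^2\cdot\op{\sum_l\u_l\u_l^\top}$-type bounds gives a norm at most $\frac{1}{d_3d_1}\frac{\mu_s r_1}{d_1}$ (up to constants). Multiplying by $n$, the variance proxy is $\lesssim \frac{n\mu_s r_1}{d_1}\frac{d_2\vee d_3}{d^*}$, which is dominated by the stated $\frac{n\mu_s r_1}{d^*d_1}(d_2\vee d_3r_1)$. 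Bernstein with dimension $d_2+d_3r_1^2\le d^{C}$ and failure probability $d^{-10}$ then yields
$$\op{\textstyle\sum_i(\Z_i-\EE\Z_i)}\lesssim \sqrt{\frac{n\mu_s r_1}{d^*d_1}(d_2\vee d_3r_1)\log d}+\frac{\mu_s r_1}{d_1}\log d,$$
which is the claim.

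The main obstacle is the column-side variance: $\EE\Z_i^\top\Z_i$ involves fourth moments of the rows of $\U_s$. I would bound it crudely by $\max_l\ltwo{\u_l}^2\cdot\frac{1}{d_1}\op{\sum_l (\u_l\u_l^\top)\otimes(\u_l\u_l^\top)}$-style estimates, or more simply by $\frac{1}{d^*}\sum \ltwo{\u}^4\cdot$(block-diagonal structure). I expect the looser factor $d_3r_1$ in the statement to absorb any slack here. The other routine check is the expectation identity $\langle\calL(\A,\B),\EE\Z_i\rangle=\inp{\A}{\B}/d^*$, which uses the orthonormality $\U_s^\top\U_s=\I_{r_1}$.
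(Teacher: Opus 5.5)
Your proposal is correct and follows essentially the same route as the paper. You write the bilinear form as $\inp{\calL(\A,\B)}{\sum_i(\Z_i-\EE\Z_i)}$ with $\Z_i=\e_{k_i}(\e_{g_i}\otimes\u_i\otimes\u_i)^\top$, apply trace duality, and control $\op{\sum_i(\Z_i-\EE\Z_i)}$ by matrix Bernstein. Your parameters match the paper's: the uniform bound is $\lesssim\mu_s r_1/d_1$, and the row- and column-side variances are $\frac{n\mu_s r_1^2}{d_1^2d_2}$ and $\frac{n\mu_s r_1}{d_1^2d_3}$. You also verify the expectation identity via $\U_s^\top\U_s=\I_{r_1}$, which the paper leaves implicit.

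One small slip: you summarize the variance proxy as $\frac{n\mu_s r_1}{d_1}\frac{d_2\vee d_3}{d^*}$, which drops a factor $r_1$. The row-side variance is $\frac{n\mu_s r_1}{d^*d_1}\cdot d_3r_1$, so the $d_3r_1$ term in the statement is exactly what is needed, not slack. The final bound is unaffected.
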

	\begin{proof}
		From the definition of $\calY$, we see
		\begin{align*}
			\inp{\calY^*\calY(\A)}{\B} &= \sum_{i=1}^n\inp{\A}{\Y_i}\inp{\B}{\Y_i}\\
			&=\sum_{i=1}^n\inp{\A}{\e_{k_i}(\e_{g_i}\otimes\e_{l_i})^\top(\I\otimes \U_s)}\inp{\B}{\e_{k_i}(\e_{g_i}\otimes\e_{l_i})^\top(\I\otimes \U_s)}\\
			&= \sum_{i=1}^n\e_{k_i}^{\top}\A(\I\otimes \U_s)^{\top}(\e_{g_i}\otimes\e_{l_i}) \cdot\e_{k_i}^{\top}\B(\I\otimes \U_s)^{\top}(\e_{g_i}\otimes\e_{l_i})\\
			&=\bigg\langle\calL(\A,\B), \sum_{i=1}^n\e_{k_i}\big(\e_{g_i}\otimes (\U_s^{\top}\e_{l_i}\otimes \U_s^{\top}\e_{l_i})\big)^{\top}\bigg\rangle.
		\end{align*}
		And thus 
		\begin{align*}
			&\quad|\inp{(\calY^*\calY - \frac{n}{d^*}\calI)\A}{\B}| \\
			&= \bigg|\bigg\langle\calL(\A,\B), \sum_{i=1}^n\e_{k_i}\big(\e_{g_i}\otimes (\U_s^{\top}\e_{l_i}\otimes \U_s^{\top}\e_{l_i})\big)^{\top} - \EE\e_{k_i}\big(\e_{g_i}\otimes (\U_s^{\top}\e_{l_i}\otimes \U_s^{\top}\e_{l_i})\big)^{\top}\bigg\rangle\bigg|\\
			&\leq \nuc{\calL(\A,\B)}\cdot \bigg\|\sum_{i=1}^n\e_{k_i}\big(\e_{g_i}\otimes (\U_s^{\top}\e_{l_i}\otimes \U_s^{\top}\e_{l_i})\big)^{\top} - \EE\e_{k_i}\big(\e_{g_i}\otimes (\U_s^{\top}\e_{l_i}\otimes \U_s^{\top}\e_{l_i})\big)^{\top}\bigg\|. 
		\end{align*}
		We next bound $ \bigg\|\sum_{i=1}^n\e_{k_i}\big(\e_{g_i}\otimes (\U_s^{\top}\e_{l_i}\otimes \U_s^{\top}\e_{l_i})\big)^{\top} - \EE\e_{k_i}\big(\e_{g_i}\otimes (\U_s^{\top}\e_{l_i}\otimes \U_s^{\top}\e_{l_i})\big)^{\top}\bigg\|$. 
		For notation simplicity, we denote $\Z_i:=\e_{k_i}\big(\e_{g_i}\otimes (\U_s^{\top}\e_{l_i}\otimes \U_s^{\top}\e_{l_i})\big)^{\top}$. 
		The uniform bound is as follows:
		\begin{align*}
			\op{\Z_i-\EE\Z_i} \leq 2\frac{\mu_s r_1}{d_1}. 
		\end{align*}
		And 
		\begin{align*}
			\EE\Z_i\Z_i^{\top} &= \frac{1}{d^*}\sum_{l,k,g}\e_{k}\big(\e_{g}\otimes (\U_s^{\top}\e_{l}\otimes \U_s^{\top}\e_{l})\big)^{\top}\big(\e_{g}\otimes (\U_s^{\top}\e_{l}\otimes \U_s^{\top}\e_{l})\big)\e_k^{\top}\\
			&=\frac{d_3}{d^*}\sum_{l}(\e_{l}^{\top}\U_s\U_s^{\top}\e_{l})^2\I\\
			&\leq \frac{d_3}{d^*}\frac{\mu_s r_1}{d_1}r_1\I. 
		\end{align*}
		On the other hand,
		\begin{align*}
			\EE\Z_i^{\top}\Z_i &= \frac{1}{d^*}\sum_{l,k,g}\big(\e_{g}\otimes (\U_s^{\top}\e_{l}\otimes \U_s^{\top}\e_{l})\big)\e_k^{\top}\e_{k}\big(\e_{g}\otimes (\U_s^{\top}\e_{l}\otimes \U_s^{\top}\e_{l})\big)^{\top}\\
			&= \frac{d_2}{d^*}\sum_{l}\big(\I\otimes (\U_s^{\top}\e_{l}\e_l^{\top}\U_s\otimes \U_s^{\top}\e_{l}\e_l^{\top}\U_s)\big)\\
			&\leq \frac{d_2}{d^*}\frac{\mu_s r_1}{d_1}\sum_l\big(\I\otimes (\I\otimes \U_s^{\top}\e_{l}\e_l^{\top}\U_s)\big)\\
			&= \frac{d_2}{d^*}\frac{\mu_s r_1}{d_1}\I. 
		\end{align*}
		In conclusion, we have 
		\begin{align*}
			\max\bigg\{\op{\EE\sum_{i=1}^n \Z_i\Z_i^{\top}}, \op{\EE\sum_{i=1}^n \Z_i^{\top}\Z_i}\bigg\} \leq \frac{n\mu_s r_1}{d^*d_1}(d_2\vee d_3r_1). 
		\end{align*}
		Using matrix Bernstein concentration inequality, we see with probability exceeding $1-d^{-10}$, 
		\begin{align*}
			\op{\sum_{i=1}^n \Z_i - \EE\Z_i} \lesssim \sqrt{\frac{n\mu_s r_1}{d^*d_1}(d_2\vee d_3r_1)\log d} + \frac{\mu_s r_1}{d_1}\log d. 
		\end{align*}
	\end{proof}

	\begin{lemma}\label{lemma:LAB}
		Let $\A,\B\in\RR^{d_2\times d_3r_1}$ be two rank $r$ matrices that admit the low rank decompositions $\A = \L_1\R_1^{\top}, \B = \L_2\R_2^{\top}$ such that $\L_1,\L_2\in\RR^{d_2\times r}, \R_1,\R_2\in\RR^{d_3r_1\times r}$. Then we have 
		\begin{align*}
			\nuc{\calL(\A,\B)} \leq r_1^{3/2}r&\cdot \min\big\{\twoinf{\L_1}^2\fro{\L_2}^2, \fro{\L_1}^2\twoinf{\L_2}^2\big\}\\
			&\quad \cdot\min\big\{\twoinf{\R_1}^2\fro{\R_2}^2,\fro{\R_1}^2\twoinf{\R_2}^2\big\}
		\end{align*}
	\end{lemma}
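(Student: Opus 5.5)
The plan is to expand $\calL(\A,\B)$ into a sum of $r^2$ rank-one matrices coming from the factorizations $\A=\L_1\R_1^\top$ and $\B=\L_2\R_2^\top$. Then I bound the nuclear norm of each piece by a product of a "row" factor and a "column" factor, and sum. Write $\L_1=[\a_1,\dots,\a_r]$, $\L_2=[\a'_1,\dots,\a'_r]$, and let $\R_1,\R_2$ have columns $\u_a,\u'_b\in\RR^{d_3r_1}$, indexed by $(j,k)$. Then
\begin{align*}
[\calL(\A,\B)](i;j,k,l)=\sum_{a,b=1}^r \a_a(i)\a'_b(i)\,\u_a(j,k)\u'_b(j,l),
\end{align*}
so $\calL(\A,\B)=\sum_{a,b}(\a_a\circ\a'_b)\,\boldsymbol w_{ab}^\top$, where $\circ$ is the Hadamard product and $\boldsymbol w_{ab}(j,k,l)=\u_a(j,k)\u'_b(j,l)$. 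Each term is rank one, so by the triangle inequality
\begin{align*}
\nuc{\calL(\A,\B)}\le\sum_{a,b}\ltwo{\a_a\circ\a'_b}\,\ltwo{\boldsymbol w_{ab}}.
\end{align*}

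\emph{Row factor.} Since $\ltwo{\a_a\circ\a'_b}\le\linf{\a_a}\ltwo{\a'_b}$, summing over $b$ with Cauchy--Schwarz produces $\twoinf{\L_1}\fro{\L_2}$ up to a $\sqrt r$ factor. Symmetrically, the roles of $\L_1$ and $\L_2$ can be swapped, which gives the minimum over the two arrangements.

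\emph{Column factor.} Here $\ltwo{\boldsymbol w_{ab}}^2=\sum_j\ltwo{\u_a(j,\cdot)}^2\ltwo{\u'_b(j,\cdot)}^2$. Each block $\u_a(j,\cdot)$ has $r_1$ entries, so $\ltwo{\u_a(j,\cdot)}^2\le r_1\max_{k}|\u_a(j,k)|^2\le r_1\twoinf{\R_1}^2$. This yields $\ltwo{\boldsymbol w_{ab}}\le\sqrt{r_1}\twoinf{\R_1}\ltwo{\u'_b}$, and the swapped version follows the same way. This block-structure step is exactly where the powers of $r_1$ enter. Tracking the mixed Cauchy--Schwarz over the indices $(a,b)$ and the $r_1\times r_1$ blocks, I expect a total factor of at most $r_1^{3/2}r$. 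That gives
\begin{align*}
\nuc{\calL(\A,\B)}\lesssim r_1^{3/2}r\,\min\{\twoinf{\L_1}\fro{\L_2},\fro{\L_1}\twoinf{\L_2}\}\,\min\{\twoinf{\R_1}\fro{\R_2},\fro{\R_1}\twoinf{\R_2}\}.
\end{align*}

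\emph{Main obstacle.} The statement as worded has each minimum squared, and this bound is not. The left side is linear in each of $\L_1,\L_2,\R_1,\R_2$, while the stated right side is quadratic. Therefore the final step cannot be a pure inequality; it must use the freedom in choosing the factorization. My first attempt will be to rebalance: replace $(\L_1,\R_1)$ by $(t\L_1,t^{-1}\R_1)$, and similarly for the second pair, which leaves $\A$ and $\B$ unchanged. Then I will choose $t$ so that each minimum of norms is at least $1$, and use $x\le x^2$ for $x\ge1$. If this rebalancing cannot be arranged simultaneously for both minima, I would record the unsquared bound above as the operative inequality. I would check that this is what the applications to $\calL(\A_1,\B)$ in the proof of Theorem \ref{thm:convergence} actually use: there the factors are $\L_l{\S_l}^{1/2}$-type balanced factors, whose norms are large compared with $1$.
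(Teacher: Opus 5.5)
Your reading of the statement is right, and it matches what the paper's own argument delivers. The paper's chain bounds $\nuc{\calL(\A,\B)}^2$, not $\nuc{\calL(\A,\B)}$, by $r_1^2r^2\cdot r_1$ times the two squared minima. So what it actually proves is the unsquared inequality with constant $r_1^{3/2}r$. That unsquared form is also what the proof of Theorem~\ref{thm:convergence} uses, e.g.\ $\nuc{\calL(\A_1,\B)}\lesssim r_1^{3/2}r\sqrt{\mu_1 r/d_2}\sqrt{\mu_1 r/(d_3r_1)}\,\lambda_{\max}^2\fro{\M_l-\M^*}^2/\lambda_{\min}^2$.

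You should, however, discard the rebalancing step and the remark about ``large norms'' in the applications.
\begin{itemize}
\item Under $(\L_1,\R_1)\mapsto(t\L_1,t^{-1}\R_1)$ and $(\L_2,\R_2)\mapsto(s\L_2,s^{-1}\R_2)$, the $\L$-minimum is multiplied by $ts$ and the $\R$-minimum by $(ts)^{-1}$. Hence both sides of the printed inequality are invariant.
\item Rebalancing can make both minima at least $1$ only when their product already is, and then no rebalancing is needed.
\item The printed version is simply false: with $d_2=d_3=r=r_1=1$ and all four factors equal to $\sqrt{\epsilon}$, it reads $\epsilon^2\le\epsilon^4$.
\item In the application, the relevant quantities (such as $\fro{\M_l-\M^*}^2/\lambda_{\min}^2$) are small, not large.
\end{itemize}

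For the main argument, your route differs from the paper's in how the rank-one pieces are grouped, and this difference is worth keeping. The paper splits $\calL(\A,\B)$ into $r_1^2r^2$ pieces $\calR_{p,q}\big((\L_1(:,s)\odot\L_2(:,t))(\R_{1,p}(:,s)\odot\R_{2,q}(:,t))^\top\big)$, one per block pair $(p,q)$ and factor pair $(s,t)$. It then applies $(\sum_{N}x)^2\le N\sum x^2$ with $N=r_1^2r^2$ and picks up one more $r_1$ from the block row-norm bound.

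The pieces with the same $(s,t)$ share the left vector $\L_1(:,s)\odot\L_2(:,t)$, so they sum to a single rank-one matrix, namely your $(\a_s\circ\a'_t)\boldsymbol w_{st}^\top$. Your grouping therefore avoids the $r_1^2$ count. The step you left as an expectation goes through directly using your row and column bounds:
\begin{align*}
\nuc{\calL(\A,\B)} &\le \sum_{a,b=1}^{r}\twoinf{\L_1}\ltwo{\a'_b}\cdot\sqrt{r_1}\,\twoinf{\R_1}\ltwo{\u'_b}\le r\sqrt{r_1}\,\twoinf{\L_1}\fro{\L_2}\,\twoinf{\R_1}\fro{\R_2},\\
\nuc{\calL(\A,\B)} &\le \sqrt{r_1}\,\twoinf{\L_1}\twoinf{\R_2}\Big(\sum_{b=1}^{r}\ltwo{\a'_b}\Big)\Big(\sum_{a=1}^{r}\ltwo{\u_a}\Big)\le r\sqrt{r_1}\,\twoinf{\L_1}\fro{\L_2}\,\fro{\R_1}\twoinf{\R_2}.
\end{align*}
The remaining two pairings are symmetric. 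Hence
\[
\nuc{\calL(\A,\B)}\le r\sqrt{r_1}\,\min\{\twoinf{\L_1}\fro{\L_2},\fro{\L_1}\twoinf{\L_2}\}\,\min\{\twoinf{\R_1}\fro{\R_2},\fro{\R_1}\twoinf{\R_2}\},
\]
which gives the corrected lemma with $r_1^{3/2}r$ improved to $r_1^{1/2}r$. Write this computation out rather than ``expecting'' the constant. With it, your proposal is a complete and slightly sharper proof of the statement the paper actually proves and uses.
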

	\begin{proof}
		In order to represent $\calL(\A,\B)$, we need to consider the following sub-matrices of $\A,\B$:
		\begin{align*}
			\A&= [\A_1,\cdots,\A_{r_1}]= [\L_1\R_1^\top\P_1,\cdots,\L_1\R_1^{\top}\P_{r_1}]=:[\L_1\R_{1,1}^{\top},\cdots,\L_1\R_{1,r_1}^{\top}],\\
			\B&=[\B_1,\cdots,\B_{r_1}]=  [\L_2\R_2^{\top}\P_1,\cdots,\L_2\R_2^{\top}\P_{r_1}]=:[\L_2\R_{2,1}^{\top},\cdots,\L_2\R_{2,r_1}^{\top}].
		\end{align*}
		Here $\P_p = [0,\cdots,0,\underbrace{\I_{d_3}}_{p\text{-th block}},0,\cdots,0]^{\top} \in\RR^{d_3r_1\times d_3}$ extracts the columns. 
		Recall
		\begin{align*}
			\calL(\A,\B) &= \sum_{p,q=1}^{r_1} \calR_{p,q}(\A_p\odot\B_q)\\
			&= \sum_{p,q} \calR_{p,q}\bigg(\sum_{s,t=1}^r \big(\L_1(:,s)\R_{1,p}(:,s)^{\top}\big)\odot \big(\L_2(:,t)\R_{2,q}(:,t)^{\top}\big)\bigg)\\
			&= \sum_{p,q} \sum_{s,t} \calR_{p,q}\bigg(\big(\L_1(:,s)\odot\L_2(:,t)\big) \big(\R_{1,p}(:,s)\odot\R_{2,q}(:,t)\big)^{\top}\bigg).
		\end{align*}
		Here $\calR_{p,q}:\RR^{d_2\times d_3} \rightarrow \RR^{d_2\times d_3 r_1^2}$ puts the matrix into the corresponding location of a wider matrix. 
		And therefore
		\begin{align*}
			\nuc{\calL(\A,\B)}^2 &\leq \bigg(\sum_{p,q}\sum_{s,t} \ltwo{\L_1(:,s)\odot\L_2(:,t)}\cdot\ltwo{\R_{1,p}(:,s)\odot\R_{2,q}(:,t)}\bigg)^2\\
			&\leq r_1^2r^2 \sum_{p,q}\sum_{s,t} \ltwo{\L_1(:,s)\odot\L_2(:,t)}^2\cdot\ltwo{\R_{1,p}(:,s)\odot\R_{2,q}(:,t)}^2\\
			&= r_1^2r^2 \sum_{p,q}\sum_{s,t} \bigg(\sum_{i=1}^{d_2}\L_1^2(i,s)\L_2^2(i,t)\bigg) \cdot\bigg(\sum_{j=1}^{d_3}\R_{1,p}^2(j,s)\R_{2,q}^2(j,t)\bigg)\\
			&\leq r_1^2r^2\sum_{i}\sum_{s,t}\L_1^2(i,s)\L_2^2(i,t) \cdot \sum_{j}\sum_{p,q}\sum_{s,t}\R_{1,p}^2(j,s)\R_{2,q}^2(j,t). 
		\end{align*}
		Now notice the first term is bounded by 
		\begin{align*}
			\sum_{i}\sum_{s,t}\L_1^2(i,s)\L_2^2(i,t) \leq \min\big\{\max_i\ltwo{\L_1(i,:)}^2\fro{\L_2}^2, \fro{\L_1}^2\max_i\ltwo{\L_2(i,:)}^2\big\}.
		\end{align*}
		And the second term is bounded by 
		\begin{align*}
			\sum_{j}\sum_{p,q}\sum_{s,t}\R_{1,p}^2(j,s)\R_{2,q}^2(j,t) \leq \max_j\sum_{p,s}\R_{1,p}^2(j,s)\cdot\fro{\R_2}^2\leq r_1\max_{j}\ltwo{\R_1(j,:)}^2\fro{\R_2}^2. 
		\end{align*}
		Similarly, we may obtain 
		\begin{align*}
			\sum_{j}\sum_{p,q}\sum_{s,t}\R_{1,p}^2(j,s)\R_{2,q}^2(j,t) \leq r_1\min\big\{\max_{j}\ltwo{\R_1(j,:)}^2\fro{\R_2}^2,\fro{\R_1}^2\max_{j}\ltwo{\R_2(j,:)}^2\big\}.
		\end{align*}
	\end{proof}

	Next we borrow the lemma for matrix perturbation in \cite[Lemma 14]{shen2023computationally}. 
	\begin{lemma}[Matrix Perturbation]\label{lemma:perturbation}
		Let $\M\in\RR^{d_1\times d_2}$ be a rank $r$ matrix with its smallest non-zero singular value $\lambda_{\min}$. Let $\boldsymbol{\Delta}\in\RR^{d_1\times d_2}$ be such that $\fro{\boldsymbol{\Delta}}\leq \frac{\lambda_{\min}}{8}$. Then we have 
		\begin{align*}
			\fro{\svd_r(\M+\boldsymbol{\Delta}) - \M}^2 \leq \fro{\boldsymbol{\Delta}}^2 + C\frac{\fro{\boldsymbol{\Delta}}^3}{\lambda_{\min}}
		\end{align*}
		for some absolute constant $C>0$. 
	\end{lemma}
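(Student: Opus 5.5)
This is the matrix perturbation bound, the last lemma stated. The plan is to split $\svd_r(\M+\boldsymbol{\Delta})-\M$ into a tangent-space part, which reproduces $\fro{\boldsymbol{\Delta}}^2$, and a remainder of order $\fro{\boldsymbol{\Delta}}^2/\lambda_{\min}$.

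\textbf{Setup.} Write $\M=\L\S\R^\top$ for the compact SVD and $\hat\M=\svd_r(\M+\boldsymbol{\Delta})=\hat\L\hat\S\hat\R^\top$. Since $\fro{\boldsymbol{\Delta}}\leq\lambda_{\min}/8$, Weyl's inequality gives $\sigma_r(\M+\boldsymbol{\Delta})\geq\tfrac78\lambda_{\min}$ and $\sigma_{r+1}(\M+\boldsymbol{\Delta})\leq\tfrac18\lambda_{\min}$. Wedin's $\sin\Theta$ theorem then yields
\begin{align*}
\fro{\hat\L\hat\L^\top-\L\L^\top}\lesssim\fro{\boldsymbol{\Delta}}/\lambda_{\min},\qquad \fro{\hat\R\hat\R^\top-\R\R^\top}\lesssim\fro{\boldsymbol{\Delta}}/\lambda_{\min}.
\end{align*}

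\textbf{Main expansion.} Set $\X=\M+\boldsymbol{\Delta}$, so that $\hat\M=\hat\L\hat\L^\top\X\hat\R\hat\R^\top$. I would compare $\hat\M-\M$ with $\calP_{\TT}\boldsymbol{\Delta}$, where $\TT$ is the tangent space at $\M$. Write $\hat\M=\hat\L\hat\L^\top\X+\X\hat\R\hat\R^\top-\hat\L\hat\L^\top\X\hat\R\hat\R^\top-\calP_{\hat\L}^{\perp}\X\calP_{\hat\R}^{\perp}$; the last term is zero-cross since $\X$ splits along singular subspaces. Replacing each projector $\hat\L\hat\L^\top$ by $\L\L^\top$ and $\hat\R\hat\R^\top$ by $\R\R^\top$ in the first three terms gives $\M+\calP_{\TT}\boldsymbol{\Delta}$. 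So
\begin{align*}
\hat\M-\M=\calP_{\TT}\boldsymbol{\Delta}+\E,
\end{align*}
where $\E$ collects products of a projector difference with $\boldsymbol{\Delta}$, together with second-order projector terms acting on $\M$. The mixed terms are $O(\fro{\boldsymbol{\Delta}}^2/\lambda_{\min})$ by the Wedin bounds.

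\textbf{Main obstacle.} The delicate terms are those where the projector difference multiplies $\M$ directly, such as $(\hat\L\hat\L^\top-\L\L^\top)\M$. Naively this is only first order in $\fro{\boldsymbol{\Delta}}$. I would handle it with the exact identity $\calP_{\hat\L}^{\perp}\M=\calP_{\hat\L}^{\perp}(\X-\boldsymbol{\Delta})$ and the fact that $\calP_{\hat\L}^{\perp}\X\hat\R=0$. Together these give
\begin{align*}
\calP_{\hat\L}^{\perp}\M\R\R^\top=-\calP_{\hat\L}^{\perp}\boldsymbol{\Delta}\R\R^\top+\calP_{\hat\L}^{\perp}\X(\R\R^\top-\hat\R\hat\R^\top),
\end{align*}
and this expression cancels against the corresponding part of $\calP_{\TT}\boldsymbol{\Delta}$ up to second-order terms. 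After these cancellations, $\fro{\E}\lesssim\fro{\boldsymbol{\Delta}}^2/\lambda_{\min}$. This Wei et al.-type bookkeeping is where I expect the real work to be.

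\textbf{Conclusion.} Since $\calP_{\TT}$ is an orthogonal projection, $\fro{\calP_{\TT}\boldsymbol{\Delta}}\leq\fro{\boldsymbol{\Delta}}$. Therefore
\begin{align*}
\fro{\hat\M-\M}^2\leq\fro{\boldsymbol{\Delta}}^2+2\fro{\boldsymbol{\Delta}}\fro{\E}+\fro{\E}^2\leq\fro{\boldsymbol{\Delta}}^2+C\frac{\fro{\boldsymbol{\Delta}}^3}{\lambda_{\min}},
\end{align*}
where the last step uses $\fro{\boldsymbol{\Delta}}\leq\lambda_{\min}/8$ to absorb $\fro{\E}^2\lesssim\fro{\boldsymbol{\Delta}}^4/\lambda_{\min}^2$ into the cubic term.
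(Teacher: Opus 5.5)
Your argument is correct in substance, but it does not follow the paper, which gives no proof of this lemma and imports it from \cite[Lemma 14]{shen2023computationally}. Your route is self-contained and elementary, using only Weyl's inequality and Wedin's theorem. It also identifies the first-order term as $\calP_{\TT}\boldsymbol{\Delta}$. The coefficient $1$ in front of $\fro{\boldsymbol{\Delta}}^2$ then comes directly from $\calP_{\TT}$ being an orthogonal projection, which fits the tangent-space calculus used in the proof of Theorem~\ref{thm:convergence}. The citation simply buys brevity.

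There are two points of execution. First, your displayed decomposition of $\hat\M$ is wrong as written. Each of $\hat\L\hat\L^\top\X$, $\X\hat\R\hat\R^\top$ and $\hat\L\hat\L^\top\X\hat\R\hat\R^\top$ equals $\hat\M$, so the three-term combination is already $\hat\M$. The fourth block $(\I-\hat\L\hat\L^\top)\X(\I-\hat\R\hat\R^\top)=\X-\hat\M\neq 0$ must not be subtracted. You only use the three-term form afterwards, so nothing downstream breaks.

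Second, the bookkeeping you defer is exactly where a condition-number factor can slip in. In the naive expansion, $\E$ contains $(\hat\L\hat\L^\top-\L\L^\top)\M(\hat\R\hat\R^\top-\R\R^\top)$. Bounding this by the Wedin bounds and $\op{\M}$ gives $\kappa\fro{\boldsymbol{\Delta}}^2/\lambda_{\min}$, which is not an absolute constant. Your identity for $(\I-\hat\L\hat\L^\top)\M$ is what avoids this, and with the right grouping it takes two lines. Subtract $\M+\calP_{\TT}\boldsymbol{\Delta}=\L\L^\top\X+\X\R\R^\top-\L\L^\top\X\R\R^\top$ from the three-term expression, and use $\hat\L\hat\L^\top\X(\I-\hat\R\hat\R^\top)=0$, $\X\hat\R\hat\R^\top=\hat\M$ and $(\I-\L\L^\top)\M=0$. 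This gives
\begin{align*}
\hat\M-\M-\calP_{\TT}\boldsymbol{\Delta}&=(\hat\L\hat\L^\top-\L\L^\top)\X(\I-\hat\R\hat\R^\top)+(\I-\L\L^\top)\X(\hat\R\hat\R^\top-\R\R^\top)\\
&=-\L\L^\top(\X-\hat\M)+(\I-\L\L^\top)\boldsymbol{\Delta}(\hat\R\hat\R^\top-\R\R^\top).
\end{align*}
Three facts bound both terms:
\begin{itemize}
\item $\X-\hat\M=(\I-\hat\L\hat\L^\top)(\X-\hat\M)$;
\item $\op{\X-\hat\M}=\sigma_{r+1}(\X)\leq\op{\boldsymbol{\Delta}}$;
\item $\fro{\L^\top(\I-\hat\L\hat\L^\top)}=\fro{\hat\L\hat\L^\top-\L\L^\top}/\sqrt{2}$.
\end{itemize}
With your Wedin bounds, both terms are then at most $C\op{\boldsymbol{\Delta}}\fro{\boldsymbol{\Delta}}/\lambda_{\min}$, and $\M$ never appears. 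Your concluding step then goes through unchanged.
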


	\begin{lemma}\label{lemma:ydelta}
		If $n\gtrsim \mu\mu_s r_1\max\{d_2r_3,d_3r_2\}\log d$, then with probability exceeding $1-d^{-10}$, 
		\begin{align*}
			\op{\calY^*(\bdelta)}\lesssim C\sqrt{\frac{\mu_s\mu^2 r^*}{d^*}}\frac{\gamma\delta \wedge \sqrt{\mu_s r_1}}{\sqrt{d_1}}\lambda_{\max} \log d + 1.01\frac{n}{d^*}\lambda_{\max}\delta.
		\end{align*}
	\end{lemma}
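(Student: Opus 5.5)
We bound $\calY^*(\bdelta)=\sum_{i=1}^n\delta_i\Y_i$ by splitting it into its mean and a centered fluctuation:
\begin{align*}
\op{\calY^*(\bdelta)}\le n\op{\EE\,\delta_i\Y_i}+\op{\textstyle\sum_{i=1}^n(\delta_i\Y_i-\EE\,\delta_i\Y_i)}.
\end{align*}
The mean produces the deterministic term $1.01\frac{n}{d^*}\lambda_{\max}\delta$. The fluctuation is handled by matrix Bernstein and produces the $\log d$ term.

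\emph{Mean term.} Write $\bcalT_R:=\bcalC^*\times_1\R^\top\times_2\V^*\times_3\W^*$ and $\bcalD:=\bcalT_R\times_1(\U^*\R-\U_s)$, so that $\delta_i=[\bcalD]_{\omega_i}$. Averaging over uniform $\omega_i$, exactly as in the computation of $\N$ in the proof of Theorem~\ref{thm:init}, gives
\begin{align*}
d^*\,\EE\,\delta_i\Y_i=\calM_2(\bcalD\times_1\U_s^\top)=\V^*\C_2^*\big(\W^{*\top}\otimes \R^\top(\U^*\R-\U_s)^\top\U_s\big).
\end{align*}
Hence
\begin{align*}
\op{\EE\,\delta_i\Y_i}\le \frac{1}{d^*}\lambda_{\max}\op{\U^*\R-\U_s}.
\end{align*}
Using the Procrustes bound $\fro{\U_s-\U^*\R}\le\delta$ from Appendix~\ref{app:gamma}, this yields the term $\frac{n}{d^*}\lambda_{\max}\delta$, with the factor $1.01$ absorbing slack. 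A refinement through $\R^\top\U^{*\top}\U_s-\I$, which is of order $\delta^2$, is not needed.

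\emph{Fluctuation term.} Let $\Z_i=\delta_i\Y_i-\EE\,\delta_i\Y_i$. We need two ingredients.

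\textbf{Entrywise bound on $\delta_i$.} We have
\begin{align*}
|\delta_i|=\big|\e_{l_i}^\top(\U^*\R-\U_s)\,\calM_1(\bcalT_R)(\e_{k_i}\otimes\e_{g_i})\big|\le \twoinf{\U^*\R-\U_s}\,\lambda_{\max}\sqrt{\frac{\mu^2r_2r_3}{d_2d_3}}.
\end{align*}
Here $\twoinf{\U^*\R-\U_s}$ is at most $\gamma\delta/\sqrt{d_1}$ by the definition of $\gamma$, and also at most $2\sqrt{\mu_s r_1/d_1}$ by incoherence of $\U^*$ and $\U_s$. Together these give the factor $\frac{\gamma\delta\wedge\sqrt{\mu_s r_1}}{\sqrt{d_1}}$.

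\textbf{Size of $\Y_i$.} We have $\op{\Y_i}=\ltwo{\U_s^\top\e_{l_i}}\le\sqrt{\mu_s r_1/d_1}$.

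From these, the uniform bound is
\begin{align*}
L\lesssim \sqrt{\frac{\mu_s\mu^2 r_1r_2r_3}{d^*}}\cdot\frac{\gamma\delta\wedge\sqrt{\mu_s r_1}}{\sqrt{d_1}}\,\lambda_{\max}\sqrt{d_1}\cdot\frac{1}{\sqrt{d_1}},
\end{align*}
which matches the first term of the claimed bound up to the $\log d$ factor.

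For the variance, compute $\EE\,\delta_i^2\Y_i\Y_i^\top$ and $\EE\,\delta_i^2\Y_i^\top\Y_i$ by summing over indices as in the proof of Theorem~\ref{thm:init}. This gives variance proxy
\begin{align*}
v\lesssim \frac{n}{d^*}\,\max_i\delta_i^2\,(d_2\vee d_3r_1)\frac{\mu_s r_1}{d_1}.
\end{align*}
Matrix Bernstein then yields
\begin{align*}
\op{\textstyle\sum_i\Z_i}\lesssim\sqrt{v\log d}+L\log d.
\end{align*}

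\emph{Absorbing the variance term.} This is the step I expect to be the main obstacle: the $\sqrt{v\log d}$ term must not create a new contribution in the final bound. My plan is a cruder but sufficient absorption. Bound
\begin{align*}
\EE\,\delta_i^2\le\frac{1}{d^*}\fro{\bcalD}^2\le\frac{1}{d^*}\lambda_{\max}^2\delta^2,
\end{align*}
and use the sample-size condition $n\gtrsim\mu\mu_s r_1\max\{d_2r_3,d_3r_2\}\log d$. Then AM--GM splits $\sqrt{v\log d}$ into $0.01\frac{n}{d^*}\lambda_{\max}\delta$ plus a multiple of $L\log d$. This reproduces exactly the two-term form stated in Lemma~\ref{lemma:ydelta}, including the constant $1.01$.

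The delicate point is computing $v$ so that it uses $\fro{\bcalD}$ through the row-wise incoherence of $\V^*$ and $\W^*$, rather than the crude $\max_i\delta_i^2$ displayed above. The plan is to write $\sum_{l,k,g}\delta^2_{(l,k,g)}\,\e_k\e_k^\top\,\ltwo{\U_s^\top\e_l}^2$ and bound it using the column- and row-incoherence of $\calM_2(\bcalD)$.
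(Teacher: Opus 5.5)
Your decomposition matches the paper's: a mean term plus a matrix-Bernstein fluctuation. The uniform term $L$ uses the entrywise bound on $\delta_i$ via $\twoinf{\U^*\R-\U_s}$ and the incoherence of $\V^*,\W^*$, together with $\op{\Y_i}\le\sqrt{\mu_sr_1/d_1}$, and the mean uses $\op{\EE\,\delta_i\Y_i}\le\lambda_{\max}\delta/d^*$. The gap is in the step you flag yourself: the ``cruder but sufficient'' absorption is not sufficient. Write $\boldsymbol{\mathcal D}$ for your error tensor, so $\delta_i=[\boldsymbol{\mathcal D}]_{\omega_i}$. If you only know $\EE\,\delta_i^2\le\lambda_{\max}^2\delta^2/d^*$, the best you get is $\op{\EE\,\delta_i^2\Y_i\Y_i^\top},\op{\EE\,\delta_i^2\Y_i^\top\Y_i}\le\frac{\mu_sr_1}{d_1}\EE\,\delta_i^2$. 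Hence $\sqrt{v\log d}\le\lambda_{\max}\delta\sqrt{n\mu_sr_1\log d/(d_1d^*)}$, which is below $0.01\frac{n}{d^*}\lambda_{\max}\delta$ only when $n\gtrsim\mu_sr_1d_2d_3\log d$. That is quadratic in the dimension. The AM--GM split does not help either: its remainder is $\asymp\frac{\mu_sr_1}{d_1}\lambda_{\max}\delta\log d$. For $\gamma\asymp1$ this exceeds $L\log d$ by a factor of order $\mu^{-1}\sqrt{\mu_sr_1d_2d_3/(r_2r_3)}$. Your displayed entrywise proxy is also off. The correct version is $\frac{n}{d^*}\max_i\delta_i^2\,(d_2\vee d_3r_1)$, without the factor $\mu_sr_1/d_1$. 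Absorbing even that corrected version would need $n\gtrsim\gamma^2\mu^2r_2r_3(d_2\vee d_3r_1)\log d$, which is not the lemma's hypothesis.

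The missing ingredient is the one your last paragraph names but does not carry out, and it is exactly what the paper does: control the energy of $\boldsymbol{\mathcal D}$ slice by slice rather than in total. Since $\Y_i\Y_i^\top=\ltwo{\U_s^\top\e_{l_i}}^2\e_{k_i}\e_{k_i}^\top$, you get
\[
\EE\,\delta_i^2\Y_i\Y_i^\top\preceq\frac{\mu_sr_1}{d_1d^*}\max_k\ltwo{\e_k^\top\calM_2(\boldsymbol{\mathcal D})}^2\,\I_{d_2},\qquad \ltwo{\e_k^\top\calM_2(\boldsymbol{\mathcal D})}\le\ltwo{\V^{*\top}\e_k}\,\lambda_{\max}\op{\U^*\R-\U_s}\le\sqrt{\frac{\mu r_2}{d_2}}\,\lambda_{\max}\delta .
\]
Since $\Y_i^\top\Y_i=\e_{g_i}\e_{g_i}^\top\otimes\U_s^\top\e_{l_i}\e_{l_i}^\top\U_s$, the matrix $\EE\,\delta_i^2\Y_i^\top\Y_i$ is block diagonal in $g$. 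Each block has norm at most $\frac{\mu_sr_1}{d_1d^*}\ltwo{\e_g^\top\calM_3(\boldsymbol{\mathcal D})}^2\le\frac{\mu_sr_1}{d_1d^*}\frac{\mu r_3}{d_3}\lambda_{\max}^2\delta^2$. Hence $v\le n\frac{\mu\mu_sr_1}{d^*d_1}\lambda_{\max}^2\delta^2\max\{\frac{r_2}{d_2},\frac{r_3}{d_3}\}$. The requirement $\sqrt{v\log d}\le0.01\frac{n}{d^*}\lambda_{\max}\delta$ is then precisely the hypothesis $n\gtrsim\mu\mu_sr_1\max\{d_2r_3,d_3r_2\}\log d$. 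No AM--GM against $L$ is needed, and the constant $1.01$ follows directly.
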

	\begin{proof}
		Recall 
		\begin{align*}
			\delta_i &= \inp{\bcalX_i\times_1(\U^{*}- \U_s)^\top}{\bcalC^*\times_2\V^*\times_3\W^*} \\
			&= \inp{\bcalX_i}{\bcalC^*\times_1(\U^{*}- \U_s)\times_2\V^*\times_3\W^*}\\
			&= \inp{\e_{k_i}(\e_{g_i}\otimes\e_{l_i})^\top(\I\otimes (\U^* - \U_s))}{\M^*}.
		\end{align*}
		Therefore
		\begin{align*}
			\EE \delta_i\Y_i = \frac{1}{d^*}\M^*\big(\I\otimes (\U^*-\U_s)^{\top}\U_s\big),
		\end{align*} 
		and 
		\begin{align*}
			|\delta_i| &\leq \linf{\bcalC^*\times_1(\U^{*}- \U_s)\times_2\V^*\times_3\W^*}\\
			&\leq \lambda_{\max}\twoinf{\U^{*}- \U_s}\cdot\twoinf{\V^*}\cdot\twoinf{\W^*}.
		\end{align*}	
		Since $\twoinf{\U^{*}- \U_s}\leq \frac{\gamma\delta}{\sqrt{d_1}}$ and it also has another upper bound $\ltwo{\U^{*}- \U_s}\leq 2\sqrt{\frac{\mu_s r_1}{d_1}}$, we have
		\begin{align*}
			\twoinf{\U^{*}- \U_s} \leq \min\{ \frac{\gamma\delta}{\sqrt{d_1}}, 2\sqrt{\frac{\mu_s r_1}{d_1}}\}
		\end{align*}
		So 
		\begin{align}\label{upperbound:deltai}
			|\delta_i| \leq \lambda_{\max}\sqrt{\frac{\mu^2 r_2r_3}{d^*}}\min\{\gamma\delta, 2\sqrt{\mu_s r_1}\}. 
		\end{align}
		Next we bound $\op{\sum_{i=1}^n(\delta_i\Y_i - \EE\delta_i\Y_i)}$. The upper bound of $\op{\delta_i\Y_i - \EE\delta_i\Y_i}$ is as follows:
		\begin{align*}
			\op{\delta_i\Y_i - \EE \delta_i\Y_i} \leq 2 |\delta_i|\cdot \op{\e_{k_i}(\e_{g_i}\otimes\e_{l_i})^{\top}(\I\otimes \U_s)}
			\leq2 \sqrt{\frac{\mu_s\mu^2 r^*}{d^*}}\frac{\gamma\delta \wedge 2\sqrt{\mu_s r_1}}{\sqrt{d_1}}\lambda_{\max}. 
		\end{align*}
		On the other hand, 
		\begin{align*}
			\EE\delta_i^2\Y_i\Y_i^\top &= \frac{1}{d^*}\sum_{l,k,g} [\bcalC^*\times_1(\U^*-\U_s)\times_2\V^*\times_3\W^*]_{lkg}^2\e_{k}(\e_{g}\otimes\e_{l})^{\top}(\I\otimes \U_s\U_s^\top)(\e_{g}\otimes\e_{l})\e_{k}^{\top}\\
			&\leq \frac{1}{d^*} \sum_{l,k,g}  [\bcalC^*\times_1(\U^*-\U_s)\times_2\V^*\times_3\W^*]_{lkg}^2\cdot\frac{\mu_s r_1}{d_1}\e_k\e_k^\top \\
			&= \frac{1}{d^*}\frac{\mu_s r_1}{d_1} \sum_k\ltwo{\bcalC^*\times_1(\U^*-\U_s)\times_2\e_k^\top\V^*\times_3\W^*}^2\cdot\e_k\e_k^\top\\
			&\leq \frac{1}{d^*}\frac{\mu_s r_1}{d_1} \delta_{op}^2\lambda_{\max}^2\frac{\mu r_2}{d_2}\I_{d_2},
		\end{align*}
		where $\delta_{op} = \op{\U_s\U_s^\top - \U^*\U^{*\top}}\leq \delta$. 
		Similarly, we can show 
		\begin{align*}
			\EE\delta_i^2\Y_i^\top\Y_i \leq  \frac{\mu\mu_s r_1r_3}{d^*d_1d_3} \lambda_{\max}^2\delta_{op}^2 \cdot \I_{d_3r_1}.
		\end{align*}
		If we denote $\B_i = \delta_i\Y_i - \EE\delta_i\Y_i$, then we have 
		\begin{align*}
			\max\bigg\{\op{\EE\sum_{i=1}^n\B_i\B_i^\top}, \op{\EE\sum_{i=1}^n\B_i^\top\B_i}\bigg\} \leq
			n\frac{\mu\mu_s r_1}{d^*d_1}\delta_{op}^2\lambda_{\max}^2\max\{\frac{r_2}{d_2}, \frac{r_3}{d_3}\}. 
		\end{align*}
		From matrix Bernstein concentration inequality, we obtain with probability exceeding $1-d^{-10}$, 
		\begin{align*}
			\op{\sum_{i=1}^n (\delta_i\Y_i - \EE \delta_i\Y_i)} 
			\lesssim \sqrt{\frac{\mu_s\mu^2 r^*}{d^*}}\frac{\gamma\delta \wedge \sqrt{\mu_s r_1}}{\sqrt{d_1}}\lambda_{\max} \log d + \sqrt{n\frac{\mu\mu_s r_1}{d^*d_1}\delta_{op}^2\lambda_{\max}^2\max\{\frac{r_2}{d_2}, \frac{r_3}{d_3}\}\log d}. 
		\end{align*}
		And this leads to 
		\begin{align*}
			\op{\calY^*(\bdelta)} &\leq  C\bigg(\sqrt{\frac{\mu_s\mu^2 r^*}{d^*}}\frac{\gamma\delta \wedge \sqrt{\mu_s r_1}}{\sqrt{d_1}}\lambda_{\max} \log d + \sqrt{n\frac{\mu\mu_s r_1}{d^*d_1}\delta_{op}^2\lambda_{\max}^2\max\{\frac{r_2}{d_2}, \frac{r_3}{d_3}\}\log d}\bigg)\notag\\ 
			&\quad+ \frac{n}{d^*}\lambda_{\max}\delta. 
		\end{align*}
		As long as $n\gtrsim \mu\mu_s r_1\max\{d_2r_3,d_3r_2\}\log d$, 
		\begin{align*}
			\sqrt{n\frac{\mu\mu_s r_1}{d^*d_1}\delta_{op}^2\lambda_{\max}^2\max\{\frac{r_2}{d_2}, \frac{r_3}{d_3}\}\log d}\lesssim \frac{n}{d^*}\lambda_{\max}\delta. 
		\end{align*}
		And thus 
		\begin{align*}
			\op{\calY^*(\bdelta)} \leq C\sqrt{\frac{\mu_s\mu^2 r^*}{d^*}}\frac{\gamma\delta \wedge \sqrt{\mu_s r_1}}{\sqrt{d_1}}\lambda_{\max} \log d + 1.01\frac{n}{d^*}\lambda_{\max}\delta. 
		\end{align*}
	\end{proof}

	\begin{theorem}[Proposition 1, \cite{koltchinskii2011nuclear}]\label{thm:concentration:psi2}
		Let $\Z,\Z_1,\cdots, \Z_n$ be i.i.d. random matrices with dimensions $p_1\times p_2$ that satisfy $\EE\Z = 0$. Suppose $\psitwo{\op{\Z}}<+\infty$. Then there exists a constant $C>0$ such that for all $t>0$, with probability at least $1-e^{-t}$, 
		\begin{align*}
			\op{\frac{1}{n}\sum_{i=1}^n\Z_i} \leq C\max\bigg\{\sigma_Z\sqrt{\frac{t+\log p}{n}},\psitwo{\op{\Z}}\cdot\big(\log\frac{\psitwo{\op{\Z}}}{\sigma_Z}\big)^{1/2}\frac{t+\log p}{n} \bigg\},
		\end{align*}
		where $p = p_1+p_2$,
		\begin{align*}
			\sigma_Z = \max\bigg\{\bigg\|\frac{1}{n}\sum_{i=1}^n\EE\Z_i\Z_i^\top\bigg\|^{1/2}, \bigg\|\frac{1}{n}\sum_{i=1}^n\EE\Z_i^\top\Z_i\bigg\|^{1/2}\bigg\}.
		\end{align*}
	\end{theorem}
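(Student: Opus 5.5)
The plan is to reduce the statement to the moment version of the matrix Bernstein inequality for self-adjoint (dilated) matrices. That version says: if $\EE\X_i=0$ and, for every integer $p\ge 2$,
\begin{align*}
\EE\X_i^{p}\preceq \tfrac{p!}{2}R^{p-2}\boldsymbol{\Sigma}_i ,
\end{align*}
then, with $v=\op{\sum_i\boldsymbol{\Sigma}_i}$, we have $\PP(\op{\sum_i\X_i}\ge s)\le 2p\exp\big(-\tfrac{s^2/2}{v+Rs}\big)$. First I pass to the Hermitian dilation $\X_i=\begin{pmatrix}0&\Z_i\\ \Z_i^\top&0\end{pmatrix}$ of size $p=p_1+p_2$. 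Since $\op{\X_i}=\op{\Z_i}$ and $\X_i^2=\mathrm{diag}(\Z_i\Z_i^\top,\Z_i^\top\Z_i)$, the natural variance proxy is $\EE\X_i^2$, whose sum has norm $n\sigma_Z^2$. Inverting the tail bound at level $e^{-t}$ and dividing by $n$ then gives exactly the two-term maximum, with $R$ in place of $\psitwo{\op{\Z}}(\log(\psitwo{\op{\Z}}/\sigma_Z))^{1/2}$. So everything rests on verifying the moment condition with $\boldsymbol{\Sigma}_i\asymp \EE\X_i^2$ and $R\asymp U\sqrt{\log(U/\sigma_Z)}$, where $U:=\psitwo{\op{\Z}}$.

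To check the moment condition, I use $\X^{p}\preceq\op{\X}^{p-2}\X^2$ and split on the event $\{\op{\Z}\le\tau\}$ for a truncation level $\tau$ to be chosen. On the event, $\EE\op{\X}^{p-2}\X^2\mathbf 1\{\op{\Z}\le\tau\}\preceq\tau^{p-2}\EE\X^2$. On the complement I bound crudely in operator norm:
\begin{align*}
\op{\EE\op{\X}^{p-2}\X^2\mathbf 1\{\op{\Z}>\tau\}}\le \EE\op{\Z}^p\mathbf 1\{\op{\Z}>\tau\}\le (\EE\op{\Z}^{2p})^{1/2}\PP(\op{\Z}>\tau)^{1/2},
\end{align*}
by Cauchy--Schwarz. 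The $\psi_2$ assumption gives $(\EE\op{\Z}^{2p})^{1/2}\le (CU\sqrt p)^p$ and $\PP(\op{\Z}>\tau)\le 2e^{-\tau^2/U^2}$. Since $p^{p/2}\lesssim C^p p!$, this tail piece is at most $p!\,(CU)^{p}e^{-\tau^2/(2U^2)}$.

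Now choose $\tau=2U\sqrt{\log(U/\sigma_Z)}$ (assuming $U\ge e\,\sigma_Z$, which is harmless since $\sigma_Z\lesssim U$ always). Then $e^{-\tau^2/(2U^2)}\le\sigma_Z^2/U^2$, so the tail piece is $\lesssim p!\,(CU)^{p-2}\sigma_Z^2$. Because this piece is a multiple of the identity in operator order, it is dominated by $\tfrac{p!}{2}R^{p-2}\sigma_Z^2\I$. Since $\sigma_Z^2\I$ has the same norm as $\EE\X^2$, I take $\boldsymbol{\Sigma}_i=\EE\X_i^2+\sigma_Z^2\I$, whose summed norm is at most $2n\sigma_Z^2$. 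Together with the on-event piece this gives $R=C'\max\{\tau,U\}\asymp U\sqrt{\log(U/\sigma_Z)}$. (The mean-zero requirement is assumed, so no recentering of the truncated variables is needed, since I never truncate the variables themselves, only the moment computation.)

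The main obstacle is this tail step. I must ensure the tail contribution is absorbed into the variance proxy without inflating it beyond $\sigma_Z^2$ by more than a constant, and this is exactly what forces the logarithmic factor $\log(U/\sigma_Z)$ rather than $\log n$. If the operator-order comparison with $\sigma_Z^2\I$ turns out to be too lossy, the fallback is a direct truncation argument. In that argument I would apply bounded matrix Bernstein to $\Z_i\mathbf 1\{\op{\Z_i}\le\tau\}$ minus its mean. I would control the bias by $\EE\op{\Z}\mathbf 1\{\op{\Z}>\tau\}\lesssim\sigma_Z$, and handle the residual $\frac1n\sum_i\Z_i\mathbf 1\{\op{\Z_i}>\tau\}$ via a $\psi_1$-type bound on its norm.
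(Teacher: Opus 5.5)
The paper gives no proof of its own (it only cites Koltchinskii et al.), and your argument is correct and is essentially the standard proof from that source. That proof is a Bernstein bound for the Hermitian dilation: the moment computation (there, the matrix MGF) is split on $\{\op{\Z}\le\tau\}$ with $\tau\asymp U\sqrt{\log(U/\sigma_Z)}$, and the complement is handled by Cauchy--Schwarz plus the $\psi_2$ tail, so your fallback truncation argument is not needed. The one justification to tighten is the reduction to $U\ge e\sigma_Z$, where ``$\sigma_Z\lesssim U$'' is not enough. Instead, Jensen applied to $\EE e^{\op{\Z}^2/U^2}\le 2$ gives $\sigma_Z^2\le\EE\op{\Z}^2\le U^2\log 2$, hence $\log(U/\sigma_Z)\ge\frac12\log(1/\log 2)>0$, and taking $R\asymp U$ when $U<e\sigma_Z$ then matches the stated bound up to an absolute constant.
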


\end{document}